\journal{Journal of \LaTeX\ Templates}
\newcommand{\F}{\mathbb{F}}
\newcommand{\Z}{\mathbb{Z}}
\newcommand{\C}{\mathcal{C}}
\newcommand{\lcm}{\mathrm{lcm}}
\newcommand{\ord}{\mathrm{ord}}
\newtheorem{theorem}{Theorem}
\newtheorem{lemma}[theorem]{Lemma}
\newtheorem{example}[theorem]{Example}
\begin{document}

	\begin{frontmatter}
		
		\title {Several classes of BCH codes of length $n=\frac{q^{m}-1}{2}$}
		\tnotetext[mytitlenote]{This research work is supported by the National Natural Science Foundation of China under Grant Nos. U21A20428 and 12171134.}

		\author[mymainaddress]{Mengchen Lian}
		\ead{lianmengchen0209@163.com}

		\author[mymainaddress]{Shixin Zhu\corref{mycorrespondingauthor}}
		\cortext[mycorrespondingauthor]{Corresponding author}
		\ead{zhushixinmath@hfut.edu.cn}
		
		\address[mymainaddress]{School of Mathematics, HeFei University of Technology, Hefei 230601, China}
		
		\begin{abstract}
			BCH codes are an important class of linear codes and find extensive utilization in communication and disk storage systems.
			This paper mainly analyzes the negacyclic BCH code and cyclic BCH code of length $\frac{q^m-1}{2}$. 
			For negacyclic BCH code, we give the dimensions of $\C_{(n,-1,\left\lceil \frac{\delta+1}{2}\right\rceil,0)}$ for $\delta =a\frac{q^m-1}{q-1},aq^{m-1}-1$($1\leq a <\frac{q-1}{2}$) and 
			$\delta =a\frac{q^m-1}{q-1}+b\frac{q^m-1}{q^2-1},aq^{m-1}+(a+b)q^{m-2}-1$ $(2\mid m,1\leq a+b \leq q-1$,$\left\lceil \frac{q-a-2}{2}\right\rceil\geq 1)$. 
			Furthermore, the dimensions of negacyclic BCH codes $\C_{(n,-1,\delta,0)}$ with few nonzeros and  $\C_{(n,-1,\delta,b)}$ with $b\neq 0$ are settled.
			For cyclic BCH code, we give the weight distribution of extended code $\overline{\C}_{(n,1,\delta,1)}$
			and the parameters of dual code $\C^{\perp}_{(n,1,\delta,1)}$, where $\delta_2\leq \delta \leq \delta_1$.
			
		\end{abstract}
		
		\begin{keyword}
			BCH code \sep Negacyclic code \sep Dual code \sep Coset leader \sep Dually-BCH
			
		\end{keyword}	
	\end{frontmatter}
	
	\section{Introduction}
	Let $\C$ be an $[n,k,d]$ linear code over finite field $\F_q$ where $q$ is a prime power, 
	having dimension $k$ and  minimum distance $d$,
	its dual code is 
	\begin{center}
		$\C^{\perp}:=\lbrace \mathbf{b}\in \F_{q}^{n}:\ \mathbf{b}\cdot\mathbf{a}^{T}=0 \ \text{for any}\ \mathbf{a}\in \C \rbrace$
	\end{center} and $\cdot$ denotes the standard inner product.
The code $\C$ is called $\lambda $-constacyclic if
	$(a_{0}, a_{1},\ldots, a_{n-1}) \in\C$ implies 
	$(\lambda a_{n-1}, a_{0}, a_{1},\dots, a_{n-2})\in\C$,
	$\C$ is cyclic if $\lambda=1$ and $\C$ is negacyclic if $\lambda=-1$.
	Representing $(a_{0}, a_{1},\ldots, a_{n-1}) \in \C$ as
	$a_{0}+a_{1}x+a_{2}x^{2}+\cdots+a_{n-1}x^{n-1}\in \F_q[x]/ \langle x^{n}-\lambda\rangle$, any $\lambda $-constacyclic code 
	$\C$ of length $n$ is equivalent to
	an ideal of $\F_q[x]/ \langle x^{n}-\lambda\rangle$,
	then there is a monic $g(x)$ such that $\C=\langle g(x) \rangle$. If $g(x)$ is the smallest degree, then $g(x)$ 
	and $h(x)=\frac{x^n-\lambda}{g(x)}$ are referred to as the generator and parity-check polynomial of $\C$. Let $\C$ have $s$ nonzeros, i.e.,
	$h(x)$ has $s$ irreducible factors over $\F_q$.
	The extended linear code $\overline{\C}$ is 
	$$\overline{\C}=\{(a_0,a_1,\dots, a_{n-1},a_n):(a_0,a_1,\dots, a_{n-1})\in \C,\ \sum_{i=0}^{n}a_i=0 \} .$$
	
	In this paper, let $n,q\in Z^+$ and $\gcd (n,q)=1$. For $b,\delta\in Z$. Let $m = \ord_{n}(q)$, $\F_{q^m}^*=\langle\alpha\rangle$ and $\beta= \alpha^{\frac{q^{m}-1}{n}}$. 
	We define $m_i(x)$ as the minimal polynomial corresponding to $\beta ^{i}$ with $i\in[0,n-1]$. Define	
	$$g_{(n,1,\delta,b)}(x):=\lcm \left(m_{b}(x),m_{b+1}(x),\ldots,m_{b+\delta -2}(x)\right) \ \text{and} \ \delta \in[2,n],$$ 
	 $\lcm$ represents the lowest common multiple. 
	Let $l = \ord_{2n}(q)$, $\F_{q^l}^*=\langle\alpha\rangle$ and $\gamma= \alpha^{\frac{q^{l}-1}{2n}}$. We define $M_{1+2i}(x)$ as the minimal polynomial corresponding to $\gamma ^{1+2i}$ with $i\in[0,n-1]$. Define	
	$$g_{(n,-1,\delta,b)}(x):=\lcm \left(M_{1+2b}(x),M_{1+2(b+1)}(x),\ldots,M_{1+2(b+\delta-2)}(x)\right) \ \text{and} \ \delta \in[2,n].$$
	Let $\C_{(n,1 ,\delta ,b)}=\langle g_{(n,1,\delta,b)}(x) \rangle$ and $\C_{(n,-1 ,\delta ,b)}=\langle g_{(n,-1,\delta,b)}(x)\rangle$. 
	Then $\C_{(n,1 ,\delta ,b)}$ is a cyclic BCH code with the defining set $T=\bigcup _{i=b}^{b+\delta-2}C_{i}^{n}$, $\C_{(n,-1 ,\delta ,b)}$ is a negacyclic BCH code with the defining set $T=\bigcup _{i=b}^{b+\delta-2}C_{1+2i}^{2n}$, where 
	$C_{i}^{n}$ is the $q$-cyclotomic coset of $i$ (modulo $n$). Clearly, the dimension of $\C_{(n,\lambda ,\delta,b)}$ is $n-\left\lvert T\right\rvert $, where $\lambda\in \{1,-1\}$.
	Let $\widetilde{g}_{(n,1,\delta,b)}(x)=(x-1)g_{(n,1,\delta,b)}(x)$ and 
	$\widetilde{\C}_{(n,1 ,\delta,b)}=\langle  \widetilde{g}_{(n,1 ,\delta,b)}(x)\rangle $. 
 
	BCH codes studied for a long time and had many results in \cite{RefJ1}, \cite{RefJ3}, \cite{RefJ6}, \cite{RefJ8}, \cite{RefJ15}-\cite{RefJ21},
	\cite{RefJ22}-\cite{RefJ31}.
	In \cite{RefJ10}, \cite{RefJ27}, \cite{RefJ33}, \cite{RefJ34}, the authors examined the negacyclic codes 
	of length $n=\frac{q^{2m}-1}{2}$, $\frac{q^{m}+1}{2}$, $\frac{q^{m}-1}{4}$ and $\frac{q^{m}-1}{2\lambda}$ respectively.
	In \cite{RefJ12}, \cite{RefJ18}, \cite{RefJ21},  the authors provided some research about weight distributions. 
	There are many conclusions on the parameters in \cite{RefJ4}, \cite{RefJ5}, \cite{RefJ14}, \cite{RefJ32}.
	Next, we list some known results for BCH codes of length $n=\frac{q^m-1}{2}$.
	
	For negacyclic BCH code, Wang et al. \cite{RefJ27} presented ${\delta_i}'$ 
	(where $i=1,2,3$ and ${\delta_i}'$ be defined in Section 2) and analysed the parameters of $\C_{(n,-1,\delta,0)}$ for $\delta $ in some ranges. Sun et al. \cite{RefJ24} gave a class of ternary
	negacyclic code with parameters $[\frac{3^m-1}{2}, 2m,d\geq \frac{3^{m-1}-1}{2}]$ and its dual code with parameters $[\frac{3^m-1}{2},\frac{3^m-1}{2}-2m,5]$.
	
	For cyclic BCH code, Zhu et al. \cite{RefJ35} determined the dimension of $\C_{(n,1 ,\delta ,1)}$ with designed distance
	$2\leq \delta \leq \frac{q^{\left\lceil\frac{m+1}{2}\right\rceil }+1}{2}$, and gave ${\delta_i}$ (where $i=1,2,3$ and $\delta_i$ be defined in Section 2) and the weight distributions of $\C_{(n,1 ,\delta_1,1)}$ and $\C_{(n,1 ,\delta_2 ,1)}$.
	Ling et al.\cite{RefJ19} presented the minimum distances and weight enumerators of $\C_{(n,1 ,\delta ,1)}$ for
	$\delta=\frac{q^m-q^{m-1}}{2}-1-\frac{q^{\left\lfloor \frac{m-3}{2}\right\rfloor +i}-1}{2}(i \in [1,\left\lfloor \frac{m+11}{6}\right\rfloor ])$
	using the trace representation.
	Wang et al. \cite{RefJ26} gave the parameters of $\C_{(n,1 ,\delta ,1)}$ for 
	$\delta=\frac{q^m-q^{m-1}-q^{\left\lfloor \frac{m-3}{2}\right\rfloor+i}-1}{2}$ and $\frac{q^m-q^{m-1}-q^{\frac{3m-1}{4}}-q^{\frac{m+1}{2}}-q^{\frac{m-11}{4}+t}-1}{2}$, where $1\leq i\leq \left\lfloor \frac{m+6}{4}\right\rfloor $ and $2\leq t \leq 4$.
	
	In this paper, let $n=\frac{q^m-1}{2}$. The paper is segmented in the following sections: Section 2 gives some preliminaries.
	Section 3 presents the dimension of negacyclic codes $\C_{(n,-1 ,\delta ,0)}$ with few nonzeros and $\delta$ takes some special forms. 
	Section 4 gives the weight distribution of extended code and the parameters of dual code of cyclic BCH code.
	Section 5 serves as the conclusions of this paper.		

	\section{Preliminaries}
	In this section, we introduce some notations and results about BCH codes.
	
	\begin{itemize}
		\item $T^{\perp}$ is the defining set of $\C^{\perp}$.
		\item $CL(t)=\min\{i: i\in C_t^n\}$ is called the coset leader of $C_t^n$.
		\item $\Gamma_{n}=\{CL(t):t\in Z_n\}$ is the set of all coset leaders.
		\item $\delta _i \in \Gamma_{n}$ is the $i$-th largest.
		\item ${\delta _i}' \in \Gamma_{2n}$ is odd and is the $i$-th largest.
        \item $\left\lfloor a\right\rfloor=\max\{{a}':{a}'\leq a, {a}'\in Z \}$. 
        \item $\left\lceil a\right\rceil=\min\{{a}':{a}'\geq a, {a}'\in Z \}$. 
		\end{itemize}
Let $Tr_q^{q^r}$ be the trace mapping from $\F_{q^r}$ to $\F_q$. Let $t=\sum_{s=0}^{m-1}t_sq^s$ be the $q$-adic
expansion with $t\in Z$ and $0<t<q^m-1$. Denote $t=(t_{m-1},t_{m-2}, \ldots,t_0)$ is the sequence of $t$, 
the sequence of $tq^j$ mod $n$ is $[tq^j]_n=(t_{m-1-j},t_{m-2-j}, \ldots,t_{m-j+1},t_{m-j})$ which is the circular $j$-left-shift of $t$.

	\begin{lemma}(\cite{RefJ27})\label{lem1}
			Let $i=(i_{m-1},i_{m-2},\ldots,i_0)$ for $i\in [1,q^m-2]$. Then $i\in \Gamma_{q^m-1}$
			iff $[iq^j]_{q^m-1}\geq i$ for $j \in [0,m-1]$. 
	\end{lemma}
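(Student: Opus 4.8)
The plan is to exploit the dictionary between multiplication by $q$ modulo $q^m-1$ and cyclic shifts of the $q$-adic digit string, the correspondence already recorded in the preamble. Write $i=\sum_{s=0}^{m-1}i_sq^s$ with $0\le i_s\le q-1$, so that $i=(i_{m-1},\ldots,i_0)$. The first step is to verify that multiplication by $q$ shifts the digits cyclically by one place: since $iq=\sum_{s=0}^{m-1}i_sq^{s+1}=i_{m-1}q^m+\sum_{s=1}^{m-1}i_{s-1}q^s$ and $q^m\equiv 1 \pmod{q^m-1}$, we obtain $[iq]_{q^m-1}=i_{m-1}+\sum_{s=1}^{m-1}i_{s-1}q^s$, whose digit string is $(i_{m-2},\ldots,i_0,i_{m-1})$. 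Because each coefficient still lies in $[0,q-1]$ and $0<i<q^m-1$ guarantees the shifted value again lies strictly between $0$ and $q^m-1$, this is a legitimate $q$-adic expansion. Iterating gives $[iq^j]_{q^m-1}=(i_{m-1-j},\ldots,i_{m-j})$, the $j$-fold cyclic left shift, for every $j\in[0,m-1]$.

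The second step identifies the cyclotomic coset with the set of these shifts. By definition $C_i^{q^m-1}=\{iq^j \bmod (q^m-1):j\ge 0\}$, and since $q^m\equiv 1 \pmod{q^m-1}$ the exponents may be reduced modulo $m$, so $C_i^{q^m-1}=\{[iq^j]_{q^m-1}:0\le j\le m-1\}$. Thus the coset is exactly the collection of all cyclic shifts of the digit string of $i$, and the coset leader $CL(i)$ is the minimum of this finite set.

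Finally, I would read off the equivalence from Step 2. If $i\in\Gamma_{q^m-1}$, then $i=CL(i)$ is the minimum of $C_i^{q^m-1}$, so $i\le [iq^j]_{q^m-1}$ for every $j\in[0,m-1]$. Conversely, if $i\le [iq^j]_{q^m-1}$ for all such $j$, then $i$ is at most every element of $C_i^{q^m-1}$ by Step 2, hence $i=\min C_i^{q^m-1}=CL(i)$ and $i\in\Gamma_{q^m-1}$.

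The only point demanding genuine care — and the one I regard as the main, if modest, obstacle — is Step 1: confirming that each shift yields a genuine $q$-adic string with digits in $[0,q-1]$, and that the value never reaches the endpoints $0$ or $q^m-1$, so that the numerical inequalities $[iq^j]_{q^m-1}\ge i$ are legitimate comparisons of the associated digit strings. Once the shift correspondence is pinned down cleanly, the remaining steps are purely a matter of unwinding the definition of the coset leader.
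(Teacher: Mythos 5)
Your proposal is correct: the cyclic-shift description of multiplication by $q$ modulo $q^m-1$, the identification of $C_i^{q^m-1}$ with the set of shifts, and the reading-off of the coset-leader condition are exactly the standard argument, and your care about the endpoints (that $i\in[1,q^m-2]$ forces every shift to stay in $[1,q^m-2]$, so the digit strings are genuine $q$-adic expansions) is the only delicate point and you handle it. The paper itself gives no proof, citing \cite{RefJ27}, and that reference's argument is the same one you give.
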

	\begin{lemma}(\cite{RefJ13})\label{lem2}
		Let $\C$ be a negacyclic code of length $n$ and $\gamma$ be a primitive $2n$-th root of unity in $\F_{q^m}$.
		If $\{\gamma^{1+2s}:t\leq s\leq t+\delta -2\}\subseteq T$, then $d(\C)\geq\delta $.
	\end{lemma}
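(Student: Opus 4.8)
The plan is to establish this as a Vandermonde-style BCH bound argument, entirely parallel to the classical BCH bound for cyclic codes but adapted to the fact that negacyclic codewords vanish at \emph{odd} powers of the $2n$-th root of unity $\gamma$. First I would argue by contradiction: suppose $\C$ contains a nonzero codeword $c(x)=\sum_{i=0}^{n-1}c_ix^i$ whose Hamming weight $w$ satisfies $w\le \delta-1$, and let $\{i_1,\dots,i_w\}\subseteq[0,n-1]$ be its support with $c_{i_j}\neq 0$.

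Since $\{\gamma^{1+2s}:t\le s\le t+\delta-2\}\subseteq T$, each $\gamma^{1+2s}$ in this range is a root of the generator polynomial and hence of $c(x)$, giving the $\delta-1\ge w$ equations $\sum_{j=1}^{w}c_{i_j}\gamma^{(1+2s)i_j}=0$ for $s=t,\dots,t+\delta-2$. Next I would isolate the odd-exponent structure by writing $\gamma^{(1+2s)i_j}=\gamma^{i_j}\bigl(\gamma^{2i_j}\bigr)^{s}$, so that after setting $y_j:=c_{i_j}\gamma^{i_j}$ (nonzero) and $z_j:=\gamma^{2i_j}$, the first $w$ of these equations become $\sum_{j=1}^{w}y_jz_j^{s}=0$ for $s=t,\dots,t+w-1$. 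In matrix form this is $V\mathbf{y}=\mathbf{0}$, where the coefficient matrix $V=(z_j^{s})$ equals $\mathrm{diag}(z_1^{t},\dots,z_w^{t})$ times a Vandermonde matrix in $z_1,\dots,z_w$.

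The crux is to show $V$ is invertible, forcing $\mathbf{y}=\mathbf{0}$ and contradicting $y_j\neq 0$. Its determinant is $\prod_{j}z_j^{t}\prod_{k<l}(z_l-z_k)$, so I must verify that the $z_j=\gamma^{2i_j}$ are nonzero and pairwise distinct. Nonvanishing is immediate since $\gamma$ is a unit. For distinctness I would use that $\gamma$ is a primitive $2n$-th root of unity, so $\gamma^{2}$ is a primitive $n$-th root of unity and $\gamma^{2i_j}=\gamma^{2i_k}$ forces $i_j\equiv i_k \pmod{n}$; because the $i_j$ lie in $[0,n-1]$ and are distinct, this cannot happen for $j\neq k$. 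Hence the Vandermonde factor is nonzero, $V$ is nonsingular, and the contradiction follows, yielding $d(\C)\ge\delta$.

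I anticipate the only genuinely negacyclic-specific subtlety — and therefore the main point to get right — is this passage to $z_j=\gamma^{2i_j}$: the odd exponents $1+2s$ prevent one from using $\gamma^{i_j}$ directly as the Vandermonde nodes, and one must both factor out the harmless scalar $\gamma^{i_j}$ and confirm that squaring does not collapse distinct supports, which is exactly where the primitivity of $\gamma$ as a $2n$-th (rather than $n$-th) root of unity is essential.
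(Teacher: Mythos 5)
Your argument is correct and complete: the factorization $\gamma^{(1+2s)i_j}=\gamma^{i_j}(\gamma^{2i_j})^{s}$, the observation that $\gamma^{2}$ has order exactly $n$ so the nodes $z_j=\gamma^{2i_j}$ are distinct, and the resulting nonsingular scaled Vandermonde system constitute the standard BCH-type bound for constacyclic codes. The paper itself gives no proof of this lemma (it is quoted from reference \cite{RefJ13}), and your proposal is essentially the canonical argument one would find there, so there is nothing to fault beyond the immaterial point that $V$ is the Vandermonde matrix times $\mathrm{diag}(z_1^{t},\dots,z_w^{t})$ on the column side rather than the row side.
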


	\begin{lemma}(\cite{RefJ16})\label{lem3}
		Let $m=\ord_{n}(q)$, $\F_{q^m}^*=\langle\alpha\rangle$ and $\beta= \alpha^{\frac{q^{m}-1}{n}}$. 
		Let $\C$ be a cyclic code of length $n$ over $\F_q$ and has $s$ nonzeros.
		Let $\beta^{i_1}, \beta^{i_2}, \ldots,\beta^{i_s}$ be the $s$ roots of $h(x)$, they are not mutually conjugate.
		Denote $|C_{i_t}^n|=m_t$ for $1\leq t\leq s$. Then the trace representation of $\C$
		is $$\C=\{c(a_1,a_2,\ldots,a_s):a_t\in\F_{q^{m_t}}, 1\leq t\leq s\},$$
		where $c(a_1,a_2,\ldots,a_s)=(\sum_{t=1}^{s}Tr_q^{q^{m_t}} (a_t\beta ^{-li_t}))_{l=0}^{n-1}.$
	\end{lemma}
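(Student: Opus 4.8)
The plan is to exhibit the trace formula as an explicit $\F_q$-linear parametrization of $\C$. Identifying a word $\mathbf{c}=(c_0,\dots,c_{n-1})$ with the polynomial $c(x)=\sum_{l=0}^{n-1}c_lx^l\in\F_q[x]/\langle x^n-1\rangle$, membership $\mathbf{c}\in\C=\langle g(x)\rangle$ is equivalent to $c(\beta^j)=0$ for every $j$ in the defining set $T$ of $\C$, i.e. for every index $j$ lying outside $\bigcup_{t=1}^{s}C_{i_t}^{n}$. I would therefore introduce the map
$$\Phi:\prod_{t=1}^{s}\F_{q^{m_t}}\longrightarrow\F_q^{n},\qquad \Phi(a_1,\dots,a_s)=c(a_1,\dots,a_s),$$
which is $\F_q$-linear since each $Tr_q^{q^{m_t}}$ is $\F_q$-linear in its argument $a_t$, and then prove that $\Phi$ maps into $\C$, is injective, and hence is a bijection onto $\C$ by a dimension count.

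For the first point I would compute $c(\beta^k)$ for each $k\in[0,n-1]$, where $\mathbf{c}=\Phi(a_1,\dots,a_s)$ has entries $c_l=\sum_{t=1}^{s}Tr_q^{q^{m_t}}(a_t\beta^{-li_t})$. Expanding the trace as $Tr_q^{q^{m_t}}(a_t\beta^{-li_t})=\sum_{u=0}^{m_t-1}a_t^{q^u}\beta^{-li_tq^u}$ and interchanging the order of summation gives
$$c(\beta^k)=\sum_{t=1}^{s}\sum_{u=0}^{m_t-1}a_t^{q^u}\sum_{l=0}^{n-1}\beta^{l(k-i_tq^u)}.$$
Since $\gcd(n,q)=1$, the integer $n$ is nonzero in $\F_{q^m}$ and $\beta$ is a primitive $n$-th root of unity, so the inner geometric sum equals $n$ when $k\equiv i_tq^u\pmod n$ and $0$ otherwise. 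Because the $i_t$ are pairwise non-conjugate coset representatives, each $k\in\bigcup_{t}C_{i_t}^{n}$ lies in exactly one coset $C_{i_t}^{n}$ and matches $i_tq^u$ for a unique $u\in[0,m_t-1]$, while for $k$ outside this union no term survives. Hence $c(\beta^k)=n\,a_t^{q^u}$ on the nonzeros and $c(\beta^k)=0$ for all $k\in T$, which shows $\Phi(a_1,\dots,a_s)\in\C$.

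Injectivity then follows by reading off the $a_t$: taking $k=i_t$ (so $u=0$) yields $c(\beta^{i_t})=n\,a_t$, so $\Phi(a_1,\dots,a_s)=\mathbf{0}$ forces every $a_t=0$, as $n$ is invertible. Finally, the domain has $\F_q$-dimension $\sum_{t=1}^{s}m_t=\deg h(x)=\dim\C$; since $\Phi$ is an injective $\F_q$-linear map whose image is contained in $\C$, that image is a subspace of $\C$ of full dimension and therefore equals $\C$, giving the stated trace representation.

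The interchange of summation and the evaluation of the geometric sum are routine; the one step that genuinely needs care is the combinatorial bookkeeping showing that for every $k$ in the nonzero set there is a \emph{unique} pair $(t,u)$ with $k\equiv i_tq^u\pmod n$. This is precisely where the hypotheses that the $i_t$ are pairwise non-conjugate and that $m_t=|C_{i_t}^{n}|$ are used, together with the standing assumption $\gcd(n,q)=1$ that keeps $n$ invertible throughout.
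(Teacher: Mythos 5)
This lemma is quoted from \cite{RefJ16} and the paper gives no proof of it, so there is nothing internal to compare against; judged on its own, your argument is correct and is the standard one (the discrete-Fourier/Delsarte-type inversion also used in the cited source). You expand the trace, evaluate $c(\beta^k)$ via the geometric sum $\sum_{l=0}^{n-1}\beta^{l(k-i_tq^u)}$, use $\gcd(n,q)=1$ to keep $n$ invertible, recover $a_t$ from $c(\beta^{i_t})=na_t$ for injectivity, and close with the dimension count $\sum_t m_t=\deg h(x)=\dim\C$; the one point you rightly single out, uniqueness of the pair $(t,u)$ with $k\equiv i_tq^u\pmod n$, is exactly where the non-conjugacy of the $\beta^{i_t}$ and $|C_{i_t}^n|=m_t$ enter, and your handling of it is sound. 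The only cosmetic gap is that you do not explicitly note $\beta^{i_t}\in\F_{q^{m_t}}$ (which follows from $i_tq^{m_t}\equiv i_t\pmod n$), needed for $Tr_q^{q^{m_t}}(a_t\beta^{-li_t})$ to make sense, but this is immediate.
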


	\begin{lemma}(\cite{RefJ7}, \cite{RefJ11})\label{lem4}
		 Let $\C$ be an $[n,k,d]$ code over $\F_q$. Then 
		\[\begin{split}
			\begin{cases}
				\sum_{i=0}^{\lfloor \frac{d-1}{2}\rfloor}(q-1)^{i} \binom{n}{i}\leq q^{n-k},\\
				\sum_{i=0}^{\frac{d-2}{2}}(q-1)^{i}\binom{n-1}{i}\leq q^{n-k-1}, &\text{if $2\mid d.$}\\  
		\end{cases} \end{split}\]
	\end{lemma}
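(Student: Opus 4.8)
The plan is to establish the two inequalities separately by classical sphere-packing (ball-counting) arguments, placing the $q^{k}$ translates of Hamming balls inside $\F_q^{n}$ and comparing cardinalities. For the first inequality I would set $t=\lfloor\frac{d-1}{2}\rfloor$ and count vectors in the balls $B(\mathbf{c})=\{\mathbf{v}\in\F_q^{n}:\mathrm{wt}(\mathbf{v}-\mathbf{c})\le t\}$ centred at the codewords $\mathbf{c}\in\C$. Since $d\ge 2t+1$, any two distinct codewords are at Hamming distance at least $2t+1$, so the triangle inequality forces the balls $B(\mathbf{c})$ to be pairwise disjoint. Each ball contains exactly $\sum_{i=0}^{t}(q-1)^{i}\binom{n}{i}$ vectors, because a vector at distance $i$ from $\mathbf{c}$ is obtained by choosing $i$ of the $n$ coordinates and altering each of them to one of the $q-1$ other field elements. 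As the $q^{k}$ disjoint balls all lie in $\F_q^{n}$, I obtain $q^{k}\sum_{i=0}^{t}(q-1)^{i}\binom{n}{i}\le q^{n}$, which is the first bound after dividing by $q^{k}$.

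For the second inequality, assume $2\mid d$ and write $e=\frac{d-2}{2}$, so that the radius above is exactly $e$. The key idea is to enlarge each ball by adjoining a carefully chosen half of the sphere of radius $e+1$: fixing one coordinate, say the last, I would set $D(\mathbf{c})=B_{e}(\mathbf{c})\cup\{\mathbf{v}:\mathrm{wt}(\mathbf{v}-\mathbf{c})=e+1,\ v_{n}\neq c_{n}\}$, where $B_{e}(\mathbf{c})$ is the ball of radius $e$. To see the enlarged sets are still pairwise disjoint, suppose $\mathbf{v}\in D(\mathbf{c}_{1})\cap D(\mathbf{c}_{2})$ for distinct $\mathbf{c}_{1},\mathbf{c}_{2}\in\C$. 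Then $\mathrm{wt}(\mathbf{c}_{1}-\mathbf{c}_{2})\le 2(e+1)=d$, so in fact $\mathrm{wt}(\mathbf{c}_{1}-\mathbf{c}_{2})=d$ and both distances equal $e+1$; tightness of the triangle inequality forces the supports of $\mathbf{v}-\mathbf{c}_{1}$ and $\mathbf{v}-\mathbf{c}_{2}$ to be disjoint. But membership in both sets at radius $e+1$ requires $v_{n}\neq (c_{1})_{n}$ and $v_{n}\neq(c_{2})_{n}$, i.e. the last coordinate lies in both supports, a contradiction. Counting then gives $|D(\mathbf{c})|=\sum_{i=0}^{e}(q-1)^{i}\binom{n}{i}+(q-1)^{e+1}\binom{n-1}{e}$, and a short binomial manipulation using $\binom{n}{i}=\binom{n-1}{i}+\binom{n-1}{i-1}$ collapses this to $q\sum_{i=0}^{e}(q-1)^{i}\binom{n-1}{i}$. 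Hence $q^{k}\cdot q\sum_{i=0}^{e}(q-1)^{i}\binom{n-1}{i}\le q^{n}$, which is exactly the second bound.

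The ball-counting is routine; the main obstacle is the even-distance refinement, where a naive shortening of $\C$ to length $n-1$ only yields the right-hand side $q^{n-k}$ and loses the extra factor of $q$. Recovering that factor requires the half-sphere enlargement above, and the crux is the disjointness verification: one must exploit that equality in the triangle inequality forces disjoint error supports, so a single fixed coordinate cannot be changed relative to two different codewords simultaneously. I would also double-check the boundary case $e=0$ (where the sphere term contributes $q-1$ vectors, $|D(\mathbf{c})|=q$, and the bound reads $1\le q^{n-k-1}$) and verify the final binomial identity carefully, since it is precisely what makes the enlarged count equal to $q\sum_{i=0}^{e}(q-1)^{i}\binom{n-1}{i}$ and delivers the stated $q^{n-k-1}$ on the right.
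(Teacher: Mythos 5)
Your proof is correct. Note first that the paper does not prove this lemma at all: it is quoted from \cite{RefJ7} and \cite{RefJ11}, so there is no in-paper argument to match; the relevant comparison is with the standard derivation behind those citations. Your first inequality is exactly the classical sphere-packing argument and is fine. For the second, your half-sphere enlargement $D(\mathbf{c})=B_{e}(\mathbf{c})\cup\{\mathbf{v}:\mathrm{wt}(\mathbf{v}-\mathbf{c})=e+1,\ v_{n}\neq c_{n}\}$ is a valid self-contained proof: equality in the Hamming triangle inequality does force the supports of $\mathbf{v}-\mathbf{c}_1$ and $\mathbf{v}-\mathbf{c}_2$ to be disjoint, which rules out the last coordinate lying in both, and the binomial identity $\sum_{i=0}^{e}(q-1)^{i}\binom{n}{i}+(q-1)^{e+1}\binom{n-1}{e}=q\sum_{i=0}^{e}(q-1)^{i}\binom{n-1}{i}$ checks out, so the count delivers the extra factor of $q$. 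The one substantive comment is that the ``main obstacle'' you describe is overstated because you have conflated shortening with puncturing. Shortening does drop the dimension to $k-1$ and loses a factor of $q$, as you say; but \emph{puncturing} $\C$ at one coordinate yields an $[n-1,k,\ge d-1]$ code (the dimension is preserved since $d\ge 2$), whose packing radius is still $\lfloor\frac{d-2}{2}\rfloor=e$, and applying your first inequality to that punctured code gives $q^{k}\sum_{i=0}^{e}(q-1)^{i}\binom{n-1}{i}\le q^{n-1}$, i.e.\ the second bound in one line. So your geometric argument buys a direct, puncturing-free proof at the cost of the disjointness verification, while the textbook route gets the same bound as an immediate corollary of the first.
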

	
	\begin{lemma}(\cite{RefJ20})\label{lem5}
		Let $m\geq 3$, $2\nmid m$ and $1 \leq i \leq q^{\frac{m+1}{2}}-1$. Then $i\in \Gamma_{2n}$ iff 
		$q\nmid i$. Moreover, $|C_i^{2n}|=m$.
	\end{lemma}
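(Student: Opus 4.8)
The plan is to exploit $2n=q^m-1$, so that $\Gamma_{2n}=\Gamma_{q^m-1}$ and Lemma~\ref{lem1} applies directly: for $1\le i\le q^m-2$ with $q$-adic sequence $i=(i_{m-1},\ldots,i_0)$, we have $i\in\Gamma_{2n}$ iff every circular left-shift satisfies $[iq^j]_{q^m-1}\ge i$ for $j\in[0,m-1]$, i.e. iff $i$ is the smallest integer among the $m$ cyclic rotations of its digit sequence (read most-significant-first). The whole lemma then becomes a statement about rotations of digit strings.

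For the forward implication I would argue by contraposition and note that it needs no bound on $i$. If $q\mid i$ then $i_0=0$ and $i\ge q$, so $i/q$ is a positive integer with $i/q<i$; since $q^m\equiv 1\pmod{q^m-1}$ we get $iq^{m-1}\equiv i/q\pmod{q^m-1}$, placing the strictly smaller element $i/q$ in $C_i^{2n}$. Hence $i$ is not the coset leader and $i\notin\Gamma_{2n}$, which gives $i\in\Gamma_{2n}\Rightarrow q\nmid i$.

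The substance is the converse, where the hypothesis $i\le q^{(m+1)/2}-1$ enters. Since $i<q^{(m+1)/2}$, the digits $i_{(m+1)/2},\ldots,i_{m-1}$ all vanish, so the sequence of $i$ opens with a run of leading zeros of length $L\ge (m-1)/2$; the first nonzero digit from the top sits at position $m-1-L$, and $i_0\ne0$ because $q\nmid i$. In the circular arrangement this top run of $L$ zeros is thus bounded by nonzero digits on both sides. The key step is to show it is the strictly unique longest zero-run: any second, disjoint zero-run of length $L'$ is separated from the top run by at least one nonzero digit on each side, so $L+L'+2\le m$, whence $L'\le m-L-2\le (m-3)/2<(m-1)/2\le L$. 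This is precisely where $m\ge3$ and the oddness of $m$ force the inequality to be strict, and I expect this run-length/counting estimate to be the main obstacle of the proof; everything else is routine once it is in place.

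Given uniqueness of the longest run, I would conclude by comparing $i$ with an arbitrary rotation $[iq^j]$ for $1\le j\le m-1$. Such a rotation exposes at the top a maximal zero-run of some length $k$, and by the previous paragraph $k=L$ occurs only for $i$ itself, so $k<L$. Then $i$ and $[iq^j]$ agree (both zero) in their top $k$ positions, while at the next position $[iq^j]$ has a nonzero digit and $i$ still has $0$ (as $k<L$ keeps us inside $i$'s leading run); hence $[iq^j]>i$. Thus $i$ is the unique minimal rotation, so $i\in\Gamma_{2n}$. Finally, uniqueness of the minimum forces the rotation period to equal $m$ (a proper period dividing $m$ would make the minimum recur), which yields $|C_i^{2n}|=m$ and completes the proof.
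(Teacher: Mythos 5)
The paper does not prove this statement at all: Lemma~\ref{lem5} is imported verbatim from \cite{RefJ20} (Liu--Ding--Li) as a known result, so there is no in-paper argument to compare against. Your self-contained proof is correct. Reducing to Lemma~\ref{lem1} via $2n=q^m-1$ is exactly the right move; the contrapositive for the forward direction ($q\mid i$ gives $i/q\in C_i^{2n}$ with $i/q<i$) is standard; and the run-length argument for the converse is sound: the leading zero-run of $i$ has length $L\ge\frac{m-1}{2}$ and is circularly bounded by nonzero digits (using $q\nmid i$ at the wrap-around), any disjoint maximal zero-run satisfies $L'\le m-L-2\le\frac{m-3}{2}<L$, and hence every nontrivial rotation has a strictly shorter leading zero-run, which forces $[iq^j]_{2n}\ge q^{m-L}>i$ for $1\le j\le m-1$; uniqueness of the minimum then gives $|C_i^{2n}|=m$. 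Two cosmetic points: the leading zero-run exposed by a rotation is in general only a suffix of a maximal circular run (not itself maximal), but your conclusion $k<L$ for $j\ne 0$ still holds since $k$ is bounded by the length of the maximal run containing the new top position, with equality to $L$ only when $j=0$; and the cardinality claim $|C_i^{2n}|=m$ is asserted for all $i$ in the range, so for $q\mid i$ one should add the one-line remark that $C_i^{2n}=C_{i/q^v}^{2n}$ where $q^v\,\|\,i$. Neither affects correctness; your argument would serve as a legitimate proof of the cited lemma.
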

	
	\begin{lemma}(\cite{RefJ20})\label{lem6}
		Let $2\mid m$ and $1 \leq i \leq 2q^{\frac{m}{2}}-1$. Then $i\in \Gamma_{2n}$ iff 
		$q\nmid i$. Moreover, $|C_{q^\frac{m}{2}+1}^{2n}|=\frac{m}{2}$ and 
		$|C_i^{2n}|=m$ for $i\neq q^\frac{m}{2}+1$.
	\end{lemma}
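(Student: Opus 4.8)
The plan is to reduce everything to the cyclic-shift criterion of Lemma~\ref{lem1}. Since here $2n=q^{m}-1$, the $q$-cyclotomic cosets modulo $2n$ coincide with those modulo $q^{m}-1$, so $i\in\Gamma_{2n}$ iff every circular left-shift of the $q$-adic digit vector $i=(i_{m-1},\dots,i_0)$ satisfies $[iq^{j}]_{q^{m}-1}\ge i$ for all $j\in[0,m-1]$. For the easy direction, if $q\mid i$ then $i_0=0$ and $iq^{m-1}\equiv i/q\pmod{q^{m}-1}$ lies in $C_i^{2n}$ with $i/q<i$, so $i$ is not a coset leader; this settles one implication for every $i$. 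It then remains to show that $q\nmid i$ forces $i\in\Gamma_{2n}$ in the range $1\le i\le 2q^{m/2}-1$, and to compute the coset sizes.

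First I would record the digit structure forced by the range: since $2q^{m/2}-1<q^{m/2+1}$, all digits above position $m/2$ vanish, $i_{m/2}\in\{0,1\}$, and $q\nmid i$ means $i_0\ne 0$. I would then compare $i$ with an arbitrary shift by tracking the longest run of consecutive zeros in the cyclic digit string, since a rotation is smaller precisely when it begins with more leading zeros (ties broken lexicographically). Split into two cases. If $i_{m/2}=0$ (so $i<q^{m/2}$), the top zero block has length $\ge m/2$ and is the unique longest run; concretely, every nontrivial shift either keeps all nonzero digits below position $m/2$, giving value exactly $q^{j}i>i$, or pushes a nonzero digit to a position $\ge m/2$, giving value $\ge q^{m/2}>i$. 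Hence $i$ is a coset leader and, the minimal rotation being unique, $|C_i^{2n}|=m$.

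The substantive case is $i_{m/2}=1$, i.e. $q^{m/2}\le i\le 2q^{m/2}-1$. Here the top zero block has length exactly $m/2-1$, so no rotation can exhibit $m/2$ leading zeros; consequently $[iq^{j}]_{q^{m}-1}\ge q^{m/2}$ for every $j$, which already rules out any shift dropping below $q^{m/2}$. Writing $i=q^{m/2}+i'$ with $1\le i'\le q^{m/2}-1$, I would argue that a shift can be $\le i$ only when it has exactly $m/2-1$ leading zeros with top digit $1$, and that such shifts other than $i$ itself arise only from a second zero run of length $m/2-1$. This occurs precisely when $i'=i_0$, i.e. $i=q^{m/2}+c$ with $c\in[1,q-1]$, in which case the competing rotation has value $cq^{m/2}+1$; the inequality $cq^{m/2}+1\ge q^{m/2}+c$ reduces to $(c-1)(q^{m/2}-1)\ge 0$, which holds, with equality exactly when $c=1$. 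Thus $i$ is always a coset leader, the digit string of $q^{m/2}+1$ has period $m/2$ (giving $|C_{q^{m/2}+1}^{2n}|=m/2$), and every other $i$ in the range has a unique minimal rotation, hence $|C_i^{2n}|=m$.

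The main obstacle is this last case $i_{m/2}=1$: one must verify that the only rotations capable of undercutting $i$ come from a second maximal zero run, isolate the single element $q^{m/2}+1$ for which equality (and the shortened period $m/2$) occurs, and confirm that all remaining rotations — those with at most $m/2-2$ leading zeros — have value at least $q^{m/2+1}>2q^{m/2}>i$. Once this zero-run bookkeeping is in place, both the coset-leader characterization and the two size formulas follow.
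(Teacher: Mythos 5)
The paper does not prove this statement: Lemma~\ref{lem6} is quoted verbatim from \cite{RefJ20} as a known result, so there is no internal proof to compare against. Your blind proof is, as far as I can check, a correct self-contained derivation via the shift criterion of Lemma~\ref{lem1}, and it is the natural argument: since $2n=q^m-1$ the cosets modulo $2n$ are the primitive-length cosets, the direction $q\mid i\Rightarrow i\notin\Gamma_{2n}$ is immediate from $i/q\in C_i^{2n}$, and your zero-run bookkeeping in the two cases $i<q^{\frac{m}{2}}$ and $q^{\frac{m}{2}}\le i\le 2q^{\frac{m}{2}}-1$ correctly isolates the unique competing rotation $j=\frac{m}{2}$, the inequality $i'q^{\frac{m}{2}}+1\ge q^{\frac{m}{2}}+i'$, and the exceptional element $q^{\frac{m}{2}}+1$ whose digit string has period $\frac{m}{2}$. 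Two small points worth tightening if this were written out in full: first, the ``moreover'' clause of the lemma is stated for all $i$ in the range, not only coset leaders, so for $q\mid i$ you should add the one-line observation that $C_i^{2n}=C_{i/q^{v}}^{2n}$ with $i/q^{v}$ a coset leader in the same range and $i/q^{v}\ne q^{\frac{m}{2}}+1$ (since $q^{v}(q^{\frac{m}{2}}+1)>2q^{\frac{m}{2}}-1$ for $v\ge 1$); second, in the case $i_{m/2}=1$ you should note explicitly that a rotation with at least $\frac{m}{2}-1$ leading zeros must align a cyclic zero run of that length onto positions $\frac{m}{2}+1,\dots,m-1$, and the only such runs are the top one (forcing $j=0$) and, when $i=q^{\frac{m}{2}}+c$ with $c\in[1,q-1]$, the run at positions $1,\dots,\frac{m}{2}-1$ (forcing $j=\frac{m}{2}$). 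With those details filled in, the argument is complete.
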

	
	\begin{lemma}(\cite{RefJ27})\label{lem7}
		Let $2\leq \delta \leq q^{\frac{m}{2}}+1$ with $2\mid m$ and  $2\leq \delta \leq \frac{q^{\frac{m+1}{2}}+1}{2}$ with $2\nmid m$.
		Then $\C_{(n,-1,\delta ,0)}$ is an $[n,n-m\left\lceil \frac{(2\delta-3)(q-1)}{2q}\right\rceil,d]$ negacyclic BCH code, 
		where $$d\geq \begin{cases} \delta +1,& \text{if $q\mid (\delta-\frac{q+1}{2})$};\\
			\delta ,& \text{otherwise}.\\ \end{cases}$$
	\end{lemma}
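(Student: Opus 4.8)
The plan is to obtain the dimension by counting the defining set $T=\bigcup_{i=0}^{\delta-2}C_{1+2i}^{2n}$, and to obtain the distance from the negacyclic BCH bound in Lemma~\ref{lem2}. I would begin with two structural observations. Since $n=\frac{q^m-1}{2}\in Z^+$, the number $q^m-1$ is even, so $q$ is odd; then $2n=q^m-1$ is even and, as $\gcd(q,2n)=1$, multiplication by $q$ on $\Z_{2n}$ preserves parity. Hence each $q$-cyclotomic coset modulo $2n$ is either entirely odd or entirely even, so every $C_{1+2i}^{2n}$ is an odd coset. Also $l=\ord_{2n}(q)=\ord_{q^m-1}(q)=m$, so $\gamma=\alpha$ is a primitive element of $\F_{q^m}$.

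For the dimension I would next pin down the distinct cosets in $T$. Because $2\delta-3\le 2q^{\frac m2}-1$ when $2\mid m$ and $2\delta-3\le q^{\frac{m+1}2}-2$ when $2\nmid m$, Lemma~\ref{lem5} and Lemma~\ref{lem6} apply to every integer of $[1,2\delta-3]$: it is a coset leader if and only if it is not divisible by $q$. The only small coset allowed by Lemma~\ref{lem6} is $C_{q^{m/2}+1}^{2n}$, and since $q$ is odd the integer $q^{m/2}+1$ is even, so this is an even coset and cannot coincide with any $C_{1+2i}^{2n}$; therefore every coset appearing in $T$ has size exactly $m$. I would then argue that the distinct cosets in $T$ correspond bijectively to the odd integers of $[1,2\delta-3]$ not divisible by $q$: for each generator $1+2i$ its leader $CL(1+2i)$ is odd, lies in $[1,2\delta-3]$, and (being a coset leader in range) is not divisible by $q$, hence is itself a generator; and distinct leaders give disjoint cosets. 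Counting the odd integers of $[1,2\delta-3]$ not divisible by $q$ yields $\lceil\frac{(2\delta-3)(q-1)}{2q}\rceil$ cosets, whence $|T|=m\lceil\frac{(2\delta-3)(q-1)}{2q}\rceil$ and the claimed dimension follows.

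For the minimum distance, the consecutive odd powers $\gamma^{1+2s}$ with $0\le s\le\delta-2$ all have their exponents in $T$, so Lemma~\ref{lem2} gives $d\ge\delta$ immediately. The sharper bound rests on the equivalence $q\mid(\delta-\frac{q+1}{2})\Longleftrightarrow q\mid(2\delta-1)$, which follows from $2\delta-1=2(\delta-\frac{q+1}{2})+q$ together with $\gcd(2,q)=1$. When $q\mid(2\delta-1)$, write $2\delta-1=qr$; then $C_{2\delta-1}^{2n}=C_{r}^{2n}$ with $r=\frac{2\delta-1}{q}$ odd and $r<2\delta-1$, so $CL(2\delta-1)\le 2\delta-3$ is again a generator and the exponent $2\delta-1$ lies in $T$ as well. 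This extends the consecutive run to $0\le s\le\delta-1$, and Lemma~\ref{lem2} upgrades the bound to $d\ge\delta+1$.

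The step I expect to be the main obstacle is the exact coset count: verifying that the number of odd integers in $[1,2\delta-3]$ not divisible by $q$ is precisely $\lceil\frac{(2\delta-3)(q-1)}{2q}\rceil$ takes a careful floor/ceiling argument, and the guarantee that the $C_{1+2i}^{2n}$ do not secretly merge relies on the coset-leader characterisation of Lemma~\ref{lem5} and Lemma~\ref{lem6} being valid across the whole interval $[1,2\delta-3]$ --- which is exactly what the hypotheses $\delta\le q^{m/2}+1$ and $\delta\le\frac{q^{(m+1)/2}+1}{2}$ secure.
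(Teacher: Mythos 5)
This lemma is quoted from \cite{RefJ27} and the paper gives no proof of it, so there is nothing internal to compare against; judged on its own, your reconstruction is correct and is the standard argument one would expect. The key points are all in place: parity is preserved by multiplication by $q$ modulo $2n=q^m-1$, so the cosets $C_{1+2i}^{2n}$ are genuinely ``odd'' cosets; the hypotheses on $\delta$ put the whole interval $[1,2\delta-3]$ inside the range of Lemmas \ref{lem5} and \ref{lem6}; the exceptional coset $C_{q^{m/2}+1}^{2n}$ is excluded because its representative is even; the reduction of $q\mid(\delta-\frac{q+1}{2})$ to $q\mid(2\delta-1)$ and the resulting extension of the consecutive run to $\delta$ terms are both sound. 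The one step you leave as an assertion is the count of odd integers in $[1,2\delta-3]$ not divisible by $q$: you should record that this equals $(\delta-1)-\left\lceil \tfrac{1}{2}\left\lfloor \frac{2\delta-3}{q}\right\rfloor\right\rceil$ (the odd multiples of $q$ up to $2\delta-3$ being $q,3q,5q,\dots$) and then verify the identity with $\left\lceil \frac{(2\delta-3)(q-1)}{2q}\right\rceil$; this is a routine floor/ceiling manipulation and does hold, but as written it is the only place where your proof is a plan rather than an argument.
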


	\section{The parameters of negacyclic BCH code of length $n=\frac{q^m-1}{2}$}\label{set3}
	Let $n=\frac{q^m-1}{2}$.
	We will give $\dim(\C_{(n,-1,\left\lceil \frac{\widetilde{\delta}+1}{2}\right\rceil,0)})$ for $\widetilde{\delta} =a\frac{q^m-1}{q-1}$, $a\frac{q^m-1}{q-1}+b\frac{q^m-1}{q^2-1}$
	and $\frac{{\delta_i}'+1}{2}$. Furthermore, 
	we discuss $\dim(\C_{(n,-1,\delta ,b)})$ and the range of $d (\C_{(n,-1,\delta,1)})$ for $b\neq 0$.
	
	\subsection{The dimension of $\C_{(n,-1,\delta ,0)}$ with four specific designed distances}\label{set3.1}
	Let $1\leq a <\frac{q-1}{2}$, $q>3$ and $m>2$ in Lemmas \ref{lem8} and \ref{lem10}.
	\begin{lemma}\label{lem8}
		 Let $\delta =\left\lceil \frac{\widetilde{\delta}+1}{2}\right\rceil $ and $\widetilde{\delta} =a\frac{q^m-1}{q-1}$.
		Then $\C_{(n,-1,\delta,0)}$ is an $[n,k,d\geq \delta]$ negacyclic BCH code, where
		$$k=\sum_{j=0}^{\left\lfloor \frac{m-1}{2}\right\rfloor}\left\lceil \frac{q-a-1}{2}\right\rceil ^{2j+1}
		\left\lfloor \frac{q-a+1}{2}\right\rfloor^{m-2j-1}\binom{m}{2j+1}.$$
		\begin{proof}
			Note that
			$$\widetilde{\delta}=(a,a,\ldots,a,a).$$	
			For code $\C_{(n,-1,\delta,0)}$, we have $T=\bigcup _{i=0}^{\delta -2}C_{1+2i}^{2n}$ and 
			$\dim(\C_{(n,-1,\delta,0)})=n-\left\lvert T\right\rvert$. 
			To determine $\dim(\C_{(n,-1,\delta,0)})$, we need to determine the number of $s\notin T$ and $s\in \Z_{2n}$ with $2\nmid s$.
			Let $A=\{s:s\in \Z_{2n},2\nmid s,s\notin \bigcup _{i=1}^{\widetilde{\delta} -1}C_{i}^{2n}\}$. 
			Hence, we can get $\dim(\C_{(n,-1,\delta,0)})=|A|$. 

			Let ${s}'=\sum_{t=0}^{m-1}s_{t}q^t$ 
			be the $q$-adic expansion. Let ${s}'\notin \bigcup _{i=1}^{\widetilde{\delta} -1}C_{i}^{2n}$ and ${s}'\in \Z_{2n}$.
			Clearly, ${s}'\notin \bigcup _{i=1}^{\widetilde{\delta}-1}C_{i}^{2n}$
			$\Leftrightarrow $ $a \leq s_t \leq q-1$ for any $0 \leq t \leq m-1$ by Theorem 3.2 of \cite{RefJ2}.
			Let $B=\{s_t :2\nmid s_t, 0 \leq t \leq m-1\}$. Since $q^t$ is odd, then $2\nmid {s}'\Leftrightarrow 2\nmid \left\lvert B\right\rvert $.\\
			$\bullet$ $\textbf{Case 1:}$ $2 \mid m$.
			
			If $2\nmid a$, then
			$|\{t:2\nmid s_t\}|=\frac{q-a}{2}$ and $|\{t:2\mid s_t\}|=\frac{q-a}{2}$ for any $s_t\in [a,q-1]$.
			Since $2\nmid {s}'\Leftrightarrow 2\nmid \left\lvert B\right\rvert $, then $|A|=\sum_{j=0}^{\frac{m}{2}-1}\binom{m}{2j+1}(\frac{q-a}{2})^m$.			
			If $2\mid a$, then 
			$|\{t:2\nmid s_t\}|=\frac{q-a-1}{2}$ and $|\{t:2\mid s_t\}|=\frac{q-a+1}{2}$ for any $s_t\in [a,q-1]$.
			Since $2\nmid {s}'\Leftrightarrow 2\nmid \left\lvert B\right\rvert $, then 
			$|A|=\sum_{j=0}^{\frac{m}{2}-1}\binom{m}{2j+1}(\frac{q-a-1}{2})^{2j+1}(\frac{q-a+1}{2})^{m-2j-1}$.\\
			$\bullet$ $\textbf{Case 2:}$ $2\nmid m$.
			
			If $2\nmid a$, since $|\{t:2\nmid s_t\}|=|\{t:2\mid s_t\}|=\frac{q-a}{2}$ and $2\nmid {s}'\Leftrightarrow 2\nmid \left\lvert B\right\rvert $, then 
			$|A|=\sum_{j=0}^{\frac{m-1}{2}}\binom{m}{2j+1}(\frac{q-a}{2})^m$. By the same way, we have $|A|=\sum_{j=0}^{\frac{m-1}{2}}\binom{m}{2j+1}(\frac{q-a-1}{2})^{2j+1}(\frac{q-a+1}{2})^{m-2j-1}$ for $2\mid a$.
			
		Summing up all the discussions above, we complete our proof.
		\end{proof}
	\end{lemma}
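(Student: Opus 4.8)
The plan is to convert the dimension statement into a $q$-adic digit count. By construction $\dim(\C_{(n,-1,\delta,0)})=n-|T|$ with $T=\bigcup_{i=0}^{\delta-2}C_{1+2i}^{2n}$, so I must count the odd residues modulo $2n$ that avoid $T$. The crucial first observation is that $q$ is odd (since $2n=q^m-1$), so multiplication by $q$ carries odd residues to odd residues modulo the even modulus $2n$; hence every $q$-cyclotomic coset mod $2n$ is entirely odd or entirely even. Consequently, restricted to odd residues, the union $T$ over the odd coset leaders $1,3,\dots,2\delta-3$ agrees with the odd part of $\bigcup_{i=1}^{\widetilde{\delta}-1}C_i^{2n}$, once one checks that $\delta=\left\lceil\frac{\widetilde{\delta}+1}{2}\right\rceil$ makes $\{1,3,\dots,2\delta-3\}$ exactly the odd integers in $[1,\widetilde{\delta}-1]$ (a short case split on the parity of $\widetilde{\delta}$). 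This identifies $\dim(\C_{(n,-1,\delta,0)})$ with $|A|$, where $A=\{s\in\Z_{2n}:2\nmid s,\ s\notin\bigcup_{i=1}^{\widetilde{\delta}-1}C_i^{2n}\}$.

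The second step is to characterize $A$ through base-$q$ digits. Writing $\widetilde{\delta}=a\frac{q^m-1}{q-1}=(a,a,\dots,a)$ in base $q$ and invoking Theorem 3.2 of \cite{RefJ2}, the condition $s\notin\bigcup_{i=1}^{\widetilde{\delta}-1}C_i^{2n}$ is equivalent to every digit satisfying $a\le s_t\le q-1$. Intuitively, the coset leader of $s$ is at least $\widetilde{\delta}$ precisely when no cyclic shift of its digit string falls below the constant string $(a,\dots,a)$, and since all entries of $\widetilde{\delta}$ equal $a$ this forces each $s_t\ge a$.

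The final step is the parity count. Because each $q^t$ is odd, $s\equiv\sum_t s_t\pmod 2$, so $s$ is odd iff an odd number of its digits are odd. I would then count, among the admissible values $[a,q-1]$, the number $O$ of odd choices and $E$ of even choices per position: for $a$ odd one gets $O=E=\frac{q-a}{2}$, while for $a$ even one gets $O=\frac{q-a-1}{2}$ and $E=\frac{q-a+1}{2}$; uniformly $O=\left\lceil\frac{q-a-1}{2}\right\rceil$ and $E=\left\lfloor\frac{q-a+1}{2}\right\rfloor$. Choosing an odd number of odd-digit positions gives $|A|=\sum_{j}\binom{m}{2j+1}O^{2j+1}E^{m-2j-1}$, which is the claimed formula after substituting $O,E$ and letting $j$ run to $\left\lfloor\frac{m-1}{2}\right\rfloor$. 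Splitting on the parity of $m$ (Cases 1 and 2 in the statement) only changes the top index of the sum.

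I expect the main obstacle to be the reduction of the first paragraph: verifying cleanly that passing from the odd cyclotomic cosets indexed by $1,3,\dots,2\delta-3$ to the full initial segment $\bigcup_{i=1}^{\widetilde{\delta}-1}C_i^{2n}$ is valid for both parities of $\widetilde{\delta}$, and that the digit characterization of \cite{RefJ2} applies verbatim to the constant string $(a,\dots,a)$ with the strict/non-strict boundary handled correctly. Once $O$ and $E$ are pinned down, the parity arithmetic of the third step is routine.
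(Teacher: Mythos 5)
Your proposal is correct and follows essentially the same route as the paper: reduce $\dim(\C_{(n,-1,\delta,0)})$ to the count $|A|$ of odd residues avoiding $\bigcup_{i=1}^{\widetilde{\delta}-1}C_i^{2n}$, characterize membership via the digit condition $a\le s_t\le q-1$ from Theorem 3.2 of the cited reference, and count by the parity of the number of odd digits. Your explicit justification that cosets modulo $2n$ have constant parity and that $\{1,3,\dots,2\delta-3\}$ is exactly the set of odd integers in $[1,\widetilde{\delta}-1]$ fills in a step the paper only asserts, but the argument is otherwise the same.
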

	
	\begin{example}
		Let $(q,m)=(5,3)$ and $a=1$, one has $n=62$, $\widetilde{\delta}=31$ and $\delta=16$. Then $\dim (\C_{(62,-1,16,0)})=32$.
		Let $(q,m)=(7,3)$ and $a=2$, one has $n=171$, $\widetilde{\delta}=114$ and $\delta=58$. Then $\dim (\C_{(171,-1,58,0)})=62$.
	\end{example}
	
	\begin{lemma}\label{lem10}
		Let $\delta =\left\lceil \frac{\widetilde{\delta}^{'}+1}{2}\right\rceil$ and $\widetilde{\delta}^{'}=aq^{m-1}-1$.
		Then $\C_{(n,-1,\delta,0)}$ is an $[n,k,d\geq \delta]$ negacyclic BCH code, where
		$$k=\sum_{j=0}^{\left\lfloor \frac{m-1}{2}\right\rfloor}\left\lceil \frac{q-a-1}{2}\right\rceil ^{2j+1}
		\left\lfloor \frac{q-a+1}{2}\right\rfloor^{m-2j-1}\binom{m}{2j+1}+\kappa m,$$ $\kappa=1$ for $2\mid a$ and $\kappa=0$ for $2\nmid a$.
		\begin{proof}
From Theorem 3.5 of \cite{RefJ2}, 
we know that $\bigcup _{i=1}^{\widetilde{\delta}^{'}}C_{i}^{2n}=\bigcup _{i=1}^{\widetilde{\delta}-1}C_{i}^{2n}$, $\widetilde{\delta}^{'}\in \Gamma_{2n}$ and $|C_{\widetilde{\delta}^{'}}^{2n}|=m$, 
where $\widetilde{\delta} =a\frac{q^m-1}{q-1}$. Similar to Lemma \ref{lem8}, we can get 
$$\begin{aligned}
	\dim(\C_{(n,-1,\delta,0)})&=|\{s:s\in \Z_{2n},2\nmid s,s\notin \bigcup _{i=1}^{\widetilde{\delta}^{'} -1}C_{i}^{2n}\}|\\
	&=|\{s:s\in \Z_{2n},2\nmid s,s\notin \bigcup _{i=1}^{\widetilde{\delta}^{'} }C_{i}^{2n}\}|+|\{s:2\nmid s,s\in C_{\widetilde{\delta}^{'}}^{2n}\}|.
	\end{aligned}$$
Since $2\mid \widetilde{\delta}^{'}$ for $2\nmid a$ and $2\nmid \widetilde{\delta}^{'}$ for $2\mid a$, then $$|\{s:2\nmid s,s\in C_{\widetilde{\delta}^{'}}^{2n}\}|=\begin{cases}
	m\ &\text{if}\ 2\mid a;\\
	0\ &\text{if}\ 2\nmid a.\end{cases}$$
Combining Lemma \ref{lem8}, we get $\dim(\C_{(n,-1,\delta,0)})$.
\end{proof} \end{lemma}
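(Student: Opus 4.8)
The plan is to follow the counting strategy already established in Lemma~\ref{lem8} and then account for exactly one additional cyclotomic coset. Recall that $\dim(\C_{(n,-1,\delta,0)})=n-|T|$ with $T=\bigcup_{i=0}^{\delta-2}C_{1+2i}^{2n}$, and that every $q$-cyclotomic coset modulo $2n$ generated by an odd seed consists entirely of odd residues, because $q$ is odd and $2n$ is even, so reduction of $qs$ modulo $2n$ preserves parity. Hence the dimension equals the number of odd residues of $\Z_{2n}$ that are not covered by $T$. First I would translate the designed distance: with $\delta=\left\lceil\frac{\widetilde{\delta}'+1}{2}\right\rceil$ and $\widetilde{\delta}'=aq^{m-1}-1$, a short parity check on $2\delta-3$ shows that the odd seeds $1,3,\dots,2\delta-3$ cover precisely the odd residues appearing in $\bigcup_{i=1}^{\widetilde{\delta}'-1}C_i^{2n}$, exactly mirroring the setup of Lemma~\ref{lem8}, so that
$$\dim(\C_{(n,-1,\delta,0)})=\bigl|\{s\in\Z_{2n}:2\nmid s,\ s\notin \textstyle\bigcup_{i=1}^{\widetilde{\delta}'-1}C_i^{2n}\}\bigr|.$$

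Next I would invoke Theorem 3.5 of \cite{RefJ2}, which tells us that $\widetilde{\delta}'=aq^{m-1}-1$ is itself a coset leader with $|C_{\widetilde{\delta}'}^{2n}|=m$, and that $\bigcup_{i=1}^{\widetilde{\delta}'}C_i^{2n}=\bigcup_{i=1}^{\widetilde{\delta}-1}C_i^{2n}$ for $\widetilde{\delta}=a\frac{q^m-1}{q-1}$. Because $\widetilde{\delta}'$ is a coset leader, $C_{\widetilde{\delta}'}^{2n}$ meets none of the earlier cosets, so the odd residues avoiding $\bigcup_{i=1}^{\widetilde{\delta}'-1}C_i^{2n}$ split disjointly into those avoiding $\bigcup_{i=1}^{\widetilde{\delta}'}C_i^{2n}$ and those lying in $C_{\widetilde{\delta}'}^{2n}$:
$$\dim(\C_{(n,-1,\delta,0)})=\bigl|\{s:2\nmid s,\ s\notin\textstyle\bigcup_{i=1}^{\widetilde{\delta}'}C_i^{2n}\}\bigr|+\bigl|\{s:2\nmid s,\ s\in C_{\widetilde{\delta}'}^{2n}\}\bigr|.$$
Via the identity $\bigcup_{i=1}^{\widetilde{\delta}'}C_i^{2n}=\bigcup_{i=1}^{\widetilde{\delta}-1}C_i^{2n}$, the first term is exactly the quantity $|A|$ computed in Lemma~\ref{lem8}, hence it equals $\sum_{j=0}^{\lfloor(m-1)/2\rfloor}\left\lceil\frac{q-a-1}{2}\right\rceil^{2j+1}\left\lfloor\frac{q-a+1}{2}\right\rfloor^{m-2j-1}\binom{m}{2j+1}$.

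For the second term I would run a parity argument. Since $q$ is odd, $aq^{m-1}$ has the same parity as $a$, so $\widetilde{\delta}'=aq^{m-1}-1$ is odd when $a$ is even and even when $a$ is odd. As a coset generated by an odd (resp.\ even) seed is entirely odd (resp.\ entirely even), the full coset $C_{\widetilde{\delta}'}^{2n}$, of size $m$, consists of odd residues precisely when $2\mid a$; thus $|\{s:2\nmid s,\ s\in C_{\widetilde{\delta}'}^{2n}\}|=m$ for $2\mid a$ and $0$ for $2\nmid a$, i.e.\ it equals $\kappa m$. Adding the two terms yields the claimed $k$. Finally, $d\geq\delta$ follows from the negacyclic BCH bound of Lemma~\ref{lem2}, since $\{\gamma^{1+2s}:0\leq s\leq\delta-2\}\subseteq T$ by the definition of $T$.

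I expect the only genuine obstacle to lie in the reduction step: one must be confident that Theorem 3.5 of \cite{RefJ2} really identifies $\widetilde{\delta}'=aq^{m-1}-1$ as the coset leader sitting immediately below $\widetilde{\delta}=a\frac{q^m-1}{q-1}$ and that the two prefix-unions $\bigcup_{i=1}^{\widetilde{\delta}'}C_i^{2n}$ and $\bigcup_{i=1}^{\widetilde{\delta}-1}C_i^{2n}$ coincide, since the entire disjoint split hinges on $\widetilde{\delta}'$ being a coset leader (so that $C_{\widetilde{\delta}'}^{2n}$ contributes cleanly and without overlap). Everything after that reuses Lemma~\ref{lem8} essentially verbatim and reduces to the parity bookkeeping above.
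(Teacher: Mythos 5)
Your proposal is correct and follows essentially the same route as the paper: reduce to counting odd residues outside $\bigcup_{i=1}^{\widetilde{\delta}'-1}C_i^{2n}$, invoke Theorem 3.5 of \cite{RefJ2} to identify this union with the one handled in Lemma \ref{lem8} plus the single coset $C_{\widetilde{\delta}'}^{2n}$, and settle the extra term $\kappa m$ by the parity of $\widetilde{\delta}'=aq^{m-1}-1$. The additional justifications you supply (parity preservation within a coset, the matching of odd seeds $1,3,\dots,2\delta-3$ with the odd residues of the union) are details the paper leaves implicit, but the argument is the same.
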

	
	We assume that $1\leq a+b \leq q-1$, $m>2$ is even and $\left\lceil \frac{q-a-2}{2}\right\rceil\geq 1$ in Lemmas \ref{lem11} and \ref{lem13}.
	
	\begin{lemma}\label{lem11}
		Let $\delta =\left\lceil \frac{\widetilde{\delta}+1}{2}\right\rceil$ and $\widetilde{\delta} =a\frac{q^m-1}{q-1}+b\frac{q^m-1}{q^2-1}$. 
		Then $\C_{(n,-1,\delta,0)}$ is an $[n,k,d\geq \delta]$ negacyclic BCH code, where
		$$k=\sum_{t=0}^{\frac{m}{2}}\frac{m}{m-t}\binom{m-t}{t} \sum_{j=0}^{t} \binom{t}{j}\left\lfloor 
		\frac{q-a-b}{2}\right\rfloor^j \left\lceil \frac{q-a-b}{2}\right\rceil^{t-j} \phi,$$
		when $2\nmid a$,
		\begin{center}
			$\phi$=$\begin{cases}
				\sum_{i=0}^{\frac{m}{2}-t}\binom{m-2t}{2i} \left\lceil \frac{q-a-2}{2}\right\rceil^{2i}\left\lfloor \frac{q-a}{2}\right\rfloor^{m-2t-2i} ,
				& \text{if $2\nmid t$ and $2\mid j$, or $2\mid t$ and $2\nmid j$ } ;\\
				\sum_{i=0}^{\frac{m}{2}-t-1}\binom{m-2t}{2i+1} \left\lceil \frac{q-a-2}{2}\right\rceil^{2i+1}\left\lfloor \frac{q-a}{2}\right\rfloor^{m-2t-2i-1} ,
				& \text{if $2\nmid t$ and $2\nmid j$, or $2\mid t$ and $2\mid j$ };\\
			\end{cases}$
		\end{center}
		when $2\mid a$,
		\begin{center}
			$\phi$=$\begin{cases}
				\sum_{i=0}^{\frac{m}{2}-t}\binom{m-2t}{2i} \left\lceil \frac{q-a-2}{2}\right\rceil^{2i}\left\lfloor \frac{q-a}{2}\right\rfloor^{m-2t-2i} ,
				& \text{if $2\nmid j$} ;\\
				\sum_{i=0}^{\frac{m}{2}-t-1}\binom{m-2t}{2i+1} \left\lceil \frac{q-a-2}{2}\right\rceil^{2i+1}\left\lfloor \frac{q-a}{2}\right\rfloor^{m-2t-2i-1} ,
				& \text{if $2\mid j$ }.\\
			\end{cases}$
		\end{center}
		\begin{proof}
			Note that 
$$\widetilde{\delta}=(a,a+b,a,a+b,\ldots,a,a+b).$$
Let $A=\{s:s\in \Z_{2n},2\nmid s,s\notin \bigcup _{i=1}^{\widetilde{\delta} -1}C_{i}^{2n}\}$. By the same way as Lemma \ref{lem8}, we have $\dim(\C_{(n,-1,\delta,0)})=|A|$. 
Let ${s}'=\sum_{u=0}^{m-1}s_{u}q^u\notin \bigcup _{i=1}^{\widetilde{\delta} -1}C_{i}^{2n}$ and ${s}'\in \Z_{2n}$. This is equivalent to $a+b \leq s_{(u-1)}\leq q-1$ if $s_u=a$ and $a \leq s_{(u-1)}\leq q-1$ if $s_u>a$ by Theorem 3.1 of \cite {RefJ22}, where “$s_{(u-1)}$” means “$s_{(u-1) \ ({\rm mod}~m)}$”.
Let $B=\{u:s_u=a, \ s_{(u-1)}\in [a+b,q-1], \ 0 \leq u \leq m-1 \} $ and $\left\lvert B\right\rvert=t$. It is clear that $0 \leq t \leq \frac{m}{2}$ 
and there exist $\frac{m}{m-t}\binom{m-t}{t}$ alternatives for $|B|=t$.
Since $2\nmid {s}'\Leftrightarrow {s}'\in A \Leftrightarrow 2\nmid (|\{u: 2\nmid s_u\}|+|\{u-1: 2\nmid s_{(u-1)}\}|+|\{i:2\nmid s_{i}\}|)$ with $u\in B$ and $i\in \Z_m\backslash \{u,(u-1) \ ({\rm mod}~m):u\in B\}$, then we will discuss the parity of $a$, $b$ and $t$.\\
$\bullet$ $\textbf{Case 1:}$ $2\nmid a,b$.
			
In this case, $s_{(u-1)}\in [a+b,q-1]$ and $s_{i}\in [a+1,q-1]$ for $i\in \Z_m\backslash \{u,(u-1) \ ({\rm mod}~m):u\in B\}$. 
Clearly, $|\{u-1: 2\nmid s_{(u-1)}\}|=\frac{q-a-b-1}{2}$ and 
$|\{u-1:2\mid s_{(u-1)}\}|=\frac{q-a-b+1}{2}$ for $u\in B$ and $s_{(u-1)}\in [a+b,q-1]$. Furthermore,
$|\{i:s_{i}> a,2\nmid s_{i}\}|=\frac{q-a-2}{2}$ and $|\{i:s_{i}> a,2\mid s_{i}\}|=\frac{q-a}{2}$ for $i\in \Z_m\backslash \{u,(u-1) \ ({\rm mod}~m):u\in B\}$.
			
If $2\nmid t$, then $|A|=\frac{m}{m-t}\binom{m-t}{t} \sum_{j=0}^{t}\binom{t}{j}(\frac{q-a-b-1}{2})^j(\frac{q-a-b+1}{2})^{t-j}\phi_1$, where
\begin{center}
	$\phi _1$=$\begin{cases}
\sum_{i=0}^{\frac{m}{2}-t}\binom{m-2t}{2i}  (\frac{q-a-2}{2})^{2i} (\frac{q-a}{2})^{m-2t-2i},  & \text{if $2\mid j$} ;\\
\sum_{i=0}^{\frac{m}{2}-t-1}\binom{m-2t}{2i+1}  (\frac{q-a-2}{2})^{2i+1} (\frac{q-a}{2})^{m-2t-2i-1}, & \text{if $2\nmid j$ }.\\
\end{cases}$ 
\end{center}
			
If $2\mid t$, then $|A|=\frac{m}{m-t}\binom{m-t}{t} \sum_{j=0}^{t}\binom{t}{j}(\frac{q-a-b-1}{2})^j(\frac{q-a-b+1}{2})^{t-j}\phi _2$, where
\begin{center}
	$\phi _2$=$\begin{cases}
\sum_{i=0}^{\frac{m}{2}-t-1}\binom{m-2t}{2i+1}  (\frac{q-a-2}{2})^{2i+1} (\frac{q-a}{2})^{m-2t-2i-1}, & \text{if $2\mid j$ };\\
\sum_{i=0}^{\frac{m}{2}-t}\binom{m-2t}{2i}  (\frac{q-a-2}{2})^{2i} (\frac{q-a}{2})^{m-2t-2i},  & \text{if $2\nmid j$} .\\
\end{cases}$ \end{center}
$\bullet$ $\textbf{Case 2:}$ $2\nmid a$ and $2\mid b$.
			
In this case, $s_{(u-1)}\in [a+b,q-1]$ for $u \in B$ and $s_{i}\in [a+1,q-1]$ for $i\in \Z_m\backslash \{u,(u-1) \ ({\rm mod}~m):u\in B\}$. 
Clearly, $|\{u-1: 2\nmid s_{(u-1)}\}|=\frac{q-a-b}{2}$ and 
$|\{u-1:2\mid s_{(u-1)}\}|=\frac{q-a-b}{2}$ for $s_{(u-1)}\in [a+b,q-1]$. Furthermore,
$|\{i:s_{i}> a,2\nmid s_{i}\}|=\frac{q-a-2}{2}$ and $|\{i:s_{i}> a,2\mid s_{i}\}|=\frac{q-a}{2}$ for $i\in \Z_m\backslash \{u,(u-1) \ ({\rm mod}~m):u\in B\}$.
							
If $2\nmid t$, then $|A|=\frac{m}{m-t}\binom{m-t}{t} \sum_{j=0}^{t}\binom{t}{j}(\frac{q-a-b}{2})^j(\frac{q-a-b}{2})^{t-j}\phi _3$, 
where $\phi _3=\phi_1$.					

If $2\mid t$, then $|A|=\frac{m}{m-t}\binom{m-t}{t} \sum_{j=0}^{t}\binom{t}{j}(\frac{q-a-b}{2})^j(\frac{q-a-b}{2})^{t-j}\phi _4$, where $\phi _4=\phi_2$.\\
$\bullet$ $\textbf{Case 3:}$ $2\mid a,b$.

In this case, $|\{u:2\nmid s_u,u\in B\}|=0$. 
By the same way as above, $|\{u-1: 2\nmid s_{(u-1)}\}|=\frac{q-a-b-1}{2}$ and 
$|\{u-1:2\mid s_{(u-1)}\}|=\frac{q-a-b+1}{2}$ for $s_{(u-1)}\in [a+b,q-1]$.  Furthermore,
$|\{i:s_{i}> a,2\nmid s_{i}\}|=\frac{q-a-1}{2}$ and $|\{i:s_{i}> a,2\mid s_{i}\}|=\frac{q-a-1}{2}$ for $i\in \Z_m\backslash \{u,(u-1) \ ({\rm mod}~m):u\in B\}$.
			
Then $|A|=\frac{m}{m-t}\binom{m-t}{t} \sum_{j=0}^{t}\binom{t}{j}(\frac{q-a-b-1}{2})^j(\frac{q-a-b+1}{2})^{t-j}\phi_5$, where 			
\begin{center}
	$\phi_5$=$\begin{cases}
\sum_{i=0}^{\frac{m}{2}-t}\binom{m-2t}{2i}  (\frac{q-a-1}{2})^{2i} (\frac{q-a-1}{2})^{m-2t-2i},  & \text{if $2\nmid j$} ;\\
\sum_{i=0}^{\frac{m}{2}-t-1}\binom{m-2t}{2i+1}  (\frac{q-a-1}{2})^{2i+1} (\frac{q-a-1}{2})^{m-2t-2i-1}, & \text{if $2\mid j$}.\\
\end{cases}$ \end{center}
$\bullet$ $\textbf{Case 4:}$ $2\mid a$ and $2\nmid b$.
			
In this case, $|\{u:2\nmid s_u,u\in B\}|=0$. 
Clearly, $|\{u-1: 2\nmid s_{(u-1)}\}|=\frac{q-a-b}{2}$ and 
$|\{u-1:2\mid s_{(u-1)}\}|=\frac{q-a-b}{2}$ for $s_{(u-1)}\in [a+b,q-1]$. Furthermore, 
$|\{i:s_{i}> a,2\nmid s_{i}\}|=\frac{q-a-1}{2}$ and $|\{i:s_{i}> a,2\mid s_{i}\}|=\frac{q-a-1}{2}$ for $i\in \Z_m\backslash \{u,(u-1) \ ({\rm mod}~m):u\in B\}$.
			
Then $|A|=\frac{m}{m-t}\binom{m-t}{t} \sum_{j=0}^{t}\binom{t}{j}(\frac{q-a-b}{2})^j(\frac{q-a-b}{2})^{t-j}\phi_6$, where $\phi _6=\phi_5$.			

	Summarizing the above, we arrive at the results.
			\end{proof}
		\end{lemma}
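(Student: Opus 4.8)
The plan is to reuse the reduction established in Lemma~\ref{lem8}: since $\dim(\C_{(n,-1,\delta,0)})=n-|T|$ with $T=\bigcup_{i=0}^{\delta-2}C_{1+2i}^{2n}$, and the choice $\delta=\left\lceil\frac{\widetilde\delta+1}{2}\right\rceil$ makes $T$ collect exactly the odd residues that fall inside $\bigcup_{i=1}^{\widetilde\delta-1}C_i^{2n}$, one gets $\dim(\C_{(n,-1,\delta,0)})=|A|$ for $A=\{s:s\in\Z_{2n},\ 2\nmid s,\ s\notin\bigcup_{i=1}^{\widetilde\delta-1}C_i^{2n}\}$. The first concrete step is to record the $q$-adic shape of $\widetilde\delta$. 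As $2\mid m$, we have $a\frac{q^m-1}{q-1}=(a,a,\dots,a)$ and $b\frac{q^m-1}{q^2-1}=(0,b,0,b,\dots,0,b)$, whence $\widetilde\delta=(a,a+b,a,a+b,\dots,a,a+b)$; this periodic pattern is what governs the entire count. The distance bound $d\geq\delta$ then follows at once from Lemma~\ref{lem2}, since $T$ contains the consecutive odd powers $\gamma^{1},\gamma^{3},\dots,\gamma^{2\delta-3}$.

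Next I would turn membership in $A$ into a local rule on the digit string $(s_{m-1},\dots,s_0)$. By Theorem~3.1 of \cite{RefJ22}, $s'=\sum_u s_u q^u$ avoids $\bigcup_{i=1}^{\widetilde\delta-1}C_i^{2n}$ exactly when every digit satisfies $s_u\geq a$ and, cyclically, $s_u=a$ forces $a+b\leq s_{(u-1)}\leq q-1$ while $s_u>a$ forces only $a\leq s_{(u-1)}\leq q-1$. I then set $B=\{u:s_u=a\}$ with $|B|=t$; because an index of $B$ forces its successor to exceed $a$, no two indices of $B$ are cyclically adjacent, so $0\leq t\leq\frac m2$ and there are $\frac{m}{m-t}\binom{m-t}{t}$ ways to place $B$ on the length-$m$ cycle. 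After $B$ is fixed, the $t$ predecessor positions range over $[a+b,q-1]$, the $t$ positions of $B$ are pinned to the value $a$, and the remaining $m-2t$ positions range over $[a+1,q-1]$.

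Finally I would impose oddness. Since $q$ is odd, $s'\equiv\sum_u s_u\pmod 2$, so $s'\in A$ requires the total number of odd digits to be odd. This total is the sum of three contributions: the $t$ pinned digits equal to $a$ (all odd when $a$ is odd, all even when $a$ is even), the number $j$ of odd digits among the predecessor positions, and the number of odd digits among the $m-2t$ free positions. Counting odd versus even values in $[a+b,q-1]$ produces the weight $\binom{t}{j}\left\lfloor\frac{q-a-b}{2}\right\rfloor^{j}\left\lceil\frac{q-a-b}{2}\right\rceil^{t-j}$, and counting them in $[a+1,q-1]$ produces the powers of $\left\lceil\frac{q-a-2}{2}\right\rceil$ and $\left\lfloor\frac{q-a}{2}\right\rfloor$ appearing in $\phi$; whether the free block is summed over even indices $\binom{m-2t}{2i}$ or odd indices $\binom{m-2t}{2i+1}$ is dictated by the parity the free block must carry to make the grand total odd. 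I expect the parity bookkeeping to be the main obstacle, because this required parity depends jointly on the parities of $a$, $t$, and $j$; accordingly I would split into the four cases on the parities of $a$ and $b$ and, inside each, into subcases on the parity of $t$, checking in every instance that the value counts and the selected index-sum agree. Summing first over $j$ and then over $t$, and unifying the four cases through the floor/ceiling notation, delivers the claimed formula for $k$.
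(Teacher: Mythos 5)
Your proposal follows essentially the same route as the paper's proof: the same reduction of the dimension to counting the odd residues outside $\bigcup_{i=1}^{\widetilde\delta-1}C_i^{2n}$ via Lemma~\ref{lem8}, the same digit characterization from Theorem~3.1 of \cite{RefJ22}, the same enumeration of the set $B$ as an independent set of size $t$ on the $m$-cycle giving the factor $\frac{m}{m-t}\binom{m-t}{t}$, and the same parity bookkeeping split over the parities of $a$, $b$, $t$, $j$. The only differences are cosmetic (you make explicit why $t\leq\frac m2$ and why the cycle-placement count holds, and you note the BCH bound for $d\geq\delta$), so the argument matches the paper's.
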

		\begin{example}
Let $(q,m)=(5,4)$, one has $n=312$. When $(a,b)=(1,2)$, we have $\widetilde{\delta}=208$ and $\delta =105$, then $\dim (\C_{(312,-1,105,0)})=80$.
			When $(a,b)=(2,2)$, we have $\widetilde{\delta}=364$ and $\delta =183$, then $\dim (\C_{(312,-1,183,0)})=16$.
		\end{example}
		
		\begin{lemma}\label{lem13}
			Let $\delta =\left\lceil \frac{\widetilde{\delta}^{'}+1}{2}\right\rceil$ and $\widetilde{\delta}^{'}=a(q^{m-1}+q^{m-2})+bq^{m-2}-1$.
		Then $\C_{(n,-1,\delta,0)}$ is an $[n,k,d\geq \delta]$ negacyclic BCH code, where
			$$k=\sum_{t=0}^{\frac{m}{2}}\frac{m}{m-t}\binom{m-t}{t} \sum_{j=0}^{t} \binom{t}{j}\left\lfloor 
			\frac{q-a-b}{2}\right\rfloor^j \left\lceil \frac{q-a-b}{2}\right\rceil^{t-j} \phi+\kappa m,$$ 
			 $\phi$ is given in Lemma \ref{lem11}, $\kappa=1$ for $2\mid b$ and $\kappa=0$ for $2\nmid b$.
			\begin{proof}
		From the Theorem 3.4 of \cite{RefJ22}, $r\notin \Gamma_{2n}$ for any $\widetilde{\delta}^{'}<r<\widetilde{\delta}$, where $\widetilde{\delta}=a\frac{q^m-1}{q-1}+b\frac{q^m-1}{q^2-1}$, i.e.,
		$\bigcup _{i=1}^{\widetilde{\delta}^{'}}C_{i}^{2n}=\bigcup _{i=1}^{\widetilde{\delta}-1}C_{i}^{2n}$.
		Similar to Lemma \ref{lem11}, we can get 
				$$\begin{aligned}
					\dim(\C_{(n,-1,\delta,0)})&=|\{s:s\in \Z_{2n},2\nmid s,s\notin \bigcup _{i=1}^{\widetilde{\delta}^{'} }C_{i}^{2n}\}|+ |\{s:2\nmid s,s\in C_{\widetilde{\delta}^{'}}^{2n}\}|.
				\end{aligned}$$
		Note that $ \widetilde{\delta}^{'}\in \Gamma_{2n}$ and $|C_{\widetilde{\delta}^{'}}^{2n}|=m$, combining $2\mid \widetilde{\delta}^{'}$ for $2\nmid b$ and $2\nmid \widetilde{\delta}^{'}$ for $2\mid b$, we have $$|\{s:2\nmid s,s\in C_{\widetilde{\delta}^{'}}^{2n}\}|=\begin{cases}
				m\ &\text{if}\ 2\mid b;\\
				0\ &\text{if}\ 2\nmid b.
			\end{cases}$$
		Hence, we get $\dim(\C_{(n,-1,\delta,0)})$.
		\end{proof} \end{lemma}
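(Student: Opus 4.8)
The plan is to follow the same strategy as Lemma \ref{lem10}: reduce the dimension count to the one already established in Lemma \ref{lem11} and then account for a single extra coset. The minimum distance bound $d\geq\delta$ is immediate from Lemma \ref{lem2}, since by construction the defining set $T=\bigcup_{i=0}^{\delta-2}C_{1+2i}^{2n}$ contains the consecutive odd string $\{\gamma^{1+2s}:0\leq s\leq \delta-2\}$. The real task is to compute $k=\dim(\C_{(n,-1,\delta,0)})$, which equals $|\{s:s\in\Z_{2n},\,2\nmid s,\,s\notin\bigcup_{i=1}^{\widetilde{\delta}^{'}-1}C_i^{2n}\}|$ exactly as in the opening of Lemma \ref{lem8}.

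The key structural input is Theorem 3.4 of \cite{RefJ22}, which I would invoke to show that no coset leader lies strictly between $\widetilde{\delta}^{'}$ and $\widetilde{\delta}=a\frac{q^m-1}{q-1}+b\frac{q^m-1}{q^2-1}$, so that $\bigcup_{i=1}^{\widetilde{\delta}^{'}}C_i^{2n}=\bigcup_{i=1}^{\widetilde{\delta}-1}C_i^{2n}$, and moreover $\widetilde{\delta}^{'}\in\Gamma_{2n}$ with $|C_{\widetilde{\delta}^{'}}^{2n}|=m$. Using that $C_{\widetilde{\delta}^{'}}^{2n}$ is disjoint from $\bigcup_{i=1}^{\widetilde{\delta}^{'}-1}C_i^{2n}$, I would split
$$k=\Big|\{s:2\nmid s,\,s\notin\textstyle\bigcup_{i=1}^{\widetilde{\delta}^{'}}C_i^{2n}\}\Big|+\Big|\{s:2\nmid s,\,s\in C_{\widetilde{\delta}^{'}}^{2n}\}\Big|.$$
The first summand coincides with the quantity $|A|$ computed in Lemma \ref{lem11}, because the union there is exactly $\bigcup_{i=1}^{\widetilde{\delta}-1}C_i^{2n}$; this contributes the entire $\phi$-sum.

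It remains to evaluate the correction term, which I expect to be the only delicate point. The idea is that the $m$ elements of $C_{\widetilde{\delta}^{'}}^{2n}$ are the circular shifts $[\widetilde{\delta}^{'}q^j]_{2n}$ of the $q$-adic digit string of $\widetilde{\delta}^{'}$, and since each $q^t$ is odd the parity of any $s=\sum_t s_tq^t$ equals the parity of $\sum_t s_t$. A cyclic shift merely permutes the digits and hence leaves $\sum_t s_t \bmod 2$ unchanged, so every element of the coset shares the parity of $\widetilde{\delta}^{'}$; the correction term is therefore $m$ if $\widetilde{\delta}^{'}$ is odd and $0$ if it is even. Finally, because $q$ is odd we have $\widetilde{\delta}^{'}=a(q^{m-1}+q^{m-2})+bq^{m-2}-1\equiv b-1\pmod 2$, so $\widetilde{\delta}^{'}$ is odd precisely when $2\mid b$. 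This yields $\kappa=1$ for $2\mid b$ and $\kappa=0$ for $2\nmid b$, and adding the two summands gives the stated $k$.
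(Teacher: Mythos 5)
Your proposal is correct and follows essentially the same route as the paper: invoke Theorem 3.4 of \cite{RefJ22} to collapse $\bigcup_{i=1}^{\widetilde{\delta}-1}C_i^{2n}$ to $\bigcup_{i=1}^{\widetilde{\delta}^{'}}C_i^{2n}$, reuse the count $|A|$ from Lemma \ref{lem11}, and add the correction $|\{s:2\nmid s,\,s\in C_{\widetilde{\delta}^{'}}^{2n}\}|$ determined by the parity of $\widetilde{\delta}^{'}\equiv b-1\pmod 2$. Your explicit justification that all $m$ elements of $C_{\widetilde{\delta}^{'}}^{2n}$ share the parity of $\widetilde{\delta}^{'}$ (cyclic digit shifts preserve the digit sum modulo $2$, and $q$ is odd) is a detail the paper leaves implicit, but the argument is the same.
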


		We immediately get the following lemma for $a=0$ and $1\leq b\leq q-1$ in Lemma \ref{lem11}.
		\begin{lemma}\label{lem14}
		Let $\delta =\left\lceil \frac{\delta^{''}+1}{2} \right\rceil $ and $\delta^{''}=b \frac{q^m-1}{q^2-1}$. 
			Then $\C_{(n,-1,\delta,0)}$ is an $[n,k,d\geq \delta]$ negacyclic BCH code, where
			$$k=\sum_{t=0}^{\frac{m}{2}}\frac{m}{m-t}\binom{m-t}{t} \sum_{j=0}^{t} \binom{t}{j}\left\lfloor 
			\frac{q-b}{2}\right\rfloor^j \left\lceil \frac{q-b}{2}\right\rceil^{t-j} \phi,$$ where 
			\begin{center}
				$\phi$=$\begin{cases}
					\sum_{i=0}^{\frac{m}{2}-t}\binom{m-2t}{2i} \left\lceil \frac{q-2}{2}\right\rceil^{2i}\left\lfloor \frac{q}{2}\right\rfloor^{m-2t-2i} ,
					& \text{if $2\nmid j$} ;\\
					\sum_{i=0}^{\frac{m}{2}-t-1}\binom{m-2t}{2i+1} \left\lceil \frac{q-2}{2}\right\rceil^{2i+1}\left\lfloor \frac{q}{2}\right\rfloor^{m-2t-2i-1} ,
					& \text{if $2\mid j$ }.\\
				\end{cases}$
			\end{center}
	\end{lemma}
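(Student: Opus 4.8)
The plan is to derive Lemma~\ref{lem14} directly from Lemma~\ref{lem11} by specialising to $a=0$, since $\delta''=b\frac{q^m-1}{q^2-1}$ is exactly the value $\widetilde{\delta}=a\frac{q^m-1}{q-1}+b\frac{q^m-1}{q^2-1}$ at $a=0$, whose $q$-adic expansion is $(0,b,0,b,\ldots,0,b)$. First I would verify that the standing hypotheses of Lemma~\ref{lem11} remain valid: the assumption $1\le a+b\le q-1$ becomes $1\le b\le q-1$, which is precisely the hypothesis of Lemma~\ref{lem14}; the requirement that $m>2$ be even is inherited unchanged; and $\left\lceil\frac{q-a-2}{2}\right\rceil\ge 1$ becomes $\left\lceil\frac{q-2}{2}\right\rceil\ge 1$, which holds for every $q\ge 3$. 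Note that $q$ is odd here, since $n=\frac{q^m-1}{2}\in\Z^+$ forces $q^m$ to be odd. The bound $d\ge\delta$ is automatic from the negacyclic BCH construction via Lemma~\ref{lem2}, exactly as in Lemma~\ref{lem11}, so only the dimension $k$ needs to be tracked.

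Next I would confirm that the digit analysis underlying Lemma~\ref{lem11} does not degenerate at $a=0$. With $a=0$ the characterisation of Theorem~3.1 of \cite{RefJ22} reads: a candidate ${s}'=\sum_u s_u q^u$ lies outside $\bigcup_{i=1}^{\delta''-1}C_i^{2n}$ iff $s_{(u-1)}\in[b,q-1]$ whenever $s_u=0$ and $s_{(u-1)}\in[1,q-1]$ whenever $s_u>0$. The set $B=\{u:s_u=0\}$ then consists of non-adjacent positions on the length-$m$ cycle (a zero digit forces its predecessor to be at least $b\ge 1$), so the count $\frac{m}{m-t}\binom{m-t}{t}$ of placements with $|B|=t$ is unchanged, as is the threefold split into the positions of $B$, their predecessors, and the remaining positions. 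Since $a=0$ is even, every digit indexed by $B$ equals $0$ and hence contributes nothing to the count of odd digits; this is exactly the feature that places us in the $2\mid a$ branch (Cases~3 and~4) of Lemma~\ref{lem11}.

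Finally I would substitute $a=0$ into the $2\mid a$ formulas. The outer factor $\left\lfloor\frac{q-a-b}{2}\right\rfloor^{j}\left\lceil\frac{q-a-b}{2}\right\rceil^{t-j}$ becomes $\left\lfloor\frac{q-b}{2}\right\rfloor^{j}\left\lceil\frac{q-b}{2}\right\rceil^{t-j}$, recording the odd/even parities of the predecessor digits in $[b,q-1]$, while $\phi$ becomes the displayed function of Lemma~\ref{lem14} upon replacing $\left\lceil\frac{q-a-2}{2}\right\rceil,\left\lfloor\frac{q-a}{2}\right\rfloor$ by $\left\lceil\frac{q-2}{2}\right\rceil,\left\lfloor\frac{q}{2}\right\rfloor$; since $q$ is odd both of the latter equal $\frac{q-1}{2}$, matching the odd and even digit counts on the range $[1,q-1]$ used in Cases~3 and~4. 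There is no genuine obstacle in this argument: the only point deserving care is to check that the $b$-parity distinction between Cases~3 and~4 of Lemma~\ref{lem11} is already absorbed into the floor/ceiling notation of the statement, so that the two cases merge into the single formula displayed in Lemma~\ref{lem14}.
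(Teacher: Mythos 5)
Your proposal is correct and follows exactly the paper's route: the paper derives Lemma~\ref{lem14} by the single remark that it is the case $a=0$, $1\leq b\leq q-1$ of Lemma~\ref{lem11}, which is precisely your specialisation argument. Your additional checks (that the standing hypotheses survive at $a=0$, that the digit analysis does not degenerate, and that the $2\mid a$ branch with $\left\lceil\frac{q-a-2}{2}\right\rceil=\left\lceil\frac{q-2}{2}\right\rceil$ and $\left\lfloor\frac{q-a}{2}\right\rfloor=\left\lfloor\frac{q}{2}\right\rfloor$ reproduces the displayed $\phi$) are exactly the details the paper leaves implicit.
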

	
	\subsection{The dimensions of some codes with few nonzeros }\label{set3.2}
		\begin{lemma}\label{lem15}  (\cite{RefJ27})
			Let $n=\frac{q^m-1}{2}$. Then  
		$${\delta _1}'=q^m-q^{m-1}-1, \
		{\delta _2}'=q^m-q^{m-1}-q^{\left\lfloor \frac{2m-1}{3}\right\rfloor}-q^{\left\lfloor \frac{m-1}{3}\right\rfloor}-1$$
		and $|C_{{\delta _1'}}^{2n}|=m$, $|C_{{\delta _2}'}^{2n}|=\begin{cases}m, & \text{if $3\nmid m$},\\
			\frac{m}{3}, & \text{if $3\mid m$}\\\end{cases}$.
		If $q^m\geq 25$, then $${\delta _3}'=\begin{cases} q^m-q^{m-1}-q^{\left\lceil \frac{2m-1}{3}\right\rceil }-q^{\left\lfloor \frac{m-1}{3}\right\rfloor}-1,
			& \text{if $3\nmid (m+1)$},\\ q^m-q^{m-1}-q^{\frac{2m-1}{3}}-q^{\frac{m+1}{3}}-1, & \text{if $3\mid (m+1)$}\\
		\end{cases}$$and $|C_{{\delta _3}'}^{2n}|=m$.
		\end{lemma}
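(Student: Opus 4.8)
The plan is to translate the statement into the language of $q$-adic digit strings and run everything through Lemma \ref{lem1}. Since $2n=q^m-1$ we have $\Gamma_{2n}=\Gamma_{q^m-1}$, and Lemma \ref{lem1} says an integer $i\in[1,q^m-2]$ is a coset leader iff its digit sequence $(i_{m-1},\ldots,i_0)$ is lexicographically smallest among all its cyclic left-shifts. First I would record two reductions. Comparing leading digits of $i$ and of a shift shows a coset leader satisfies $i_{m-1}\le q-2$; moreover the largest leaders must have $i_{m-1}=q-2$ with every remaining digit in $\{q-2,q-1\}$, since any leader with leading digit $\le q-3$ is below this entire band. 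Second, because $q$ is odd, $i\equiv\sum_t i_t\equiv |\{t:i_t=q-2\}|\pmod 2$ (as $q-2$ is odd and $q-1$ even), so such a leader is odd iff the number of its digits equal to $q-2$ is odd.

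I would then encode a leader of this band by the positions of its ``$q-2$'' digits. With a single such digit it is forced into the leading slot, giving the unique largest coset leader $\delta_1'=(q-2,q-1,\ldots,q-1)=q^m-q^{m-1}-1$, which is odd and aperiodic, so $|C_{\delta_1'}^{2n}|=m$. Oddness then forces the next band to have exactly three ``$q-2$'' digits: one leading, two in positions $p_1>p_2$, so $i=q^m-1-q^{m-1}-q^{p_1}-q^{p_2}$. Writing $(a_1,a_2,a_3)$ for the lengths of the three runs of $(q-1)$'s, we have $a_1+a_2+a_3=m-3$, $a_1=m-2-p_1$, $a_3=p_2$, and Lemma \ref{lem1} says $i\in\Gamma_{2n}$ iff $(a_1,a_2,a_3)$ is the lex-minimal rotation of itself. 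Ranking the band is the problem of minimizing $q^{p_1}+q^{p_2}$; since $p_1-p_2=a_2+1\ge 1$ the term $q^{p_1}$ dominates (using $q\ge 3$), so I would first maximize $a_1$, then minimize $a_3=p_2$.

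Maximizing $a_1$ subject to $a_1=\min(a_1,a_2,a_3)$ and the sum constraint forces the balanced partition $a_1=\lfloor(m-3)/3\rfloor$; putting it into lex-minimal order and splitting on $m\bmod 3$ yields $p_1=\lfloor(2m-1)/3\rfloor$, $p_2=\lfloor(m-1)/3\rfloor$, i.e.\ the stated $\delta_2'$. The coset-size claim falls out of the same description: the balanced triple is a genuine three-fold repetition exactly when $3\mid m$, giving period $m/3$, and is aperiodic otherwise, giving period $m$. For $\delta_3'$ I would split on whether a second admissible triple survives at the maximal first-run length: when $3\mid(m+1)$ the triple $(a_1,a_1,a_1+2)$ is available at the same $a_1$ and supplies $\delta_3'$ with $p_2=(m+1)/3$; when $3\nmid(m+1)$ no second triple remains, so one decreases $a_1$ by one, re-balances, and reads off $p_1=\lceil(2m-1)/3\rceil$.

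The main obstacle is precisely this $\delta_3'$ step: one must show no coset leader lies strictly between $\delta_2'$ and the claimed $\delta_3'$, and this hinges on a delicate interaction between ``minimize $a_3$'' and the lex-minimal constraint. Driving $a_3$ down to the current minimum run value creates a tie that reshuffles the necklace and can paradoxically \emph{enlarge} the effective last run, so the true minimizer of $p_2$ at fixed $a_1$ is not always the naive balanced arrangement. I would therefore treat $m\equiv 0,1,2\pmod 3$ entirely separately, in each case enumerating every lex-minimal triple at the top two values of $a_1$ and comparing their $q^{p_2}$, and I would also dispose of the small/degenerate $m$ (where the three-digit band is too sparse and $\delta_3'$ can drop to a smaller leading digit, or where the two exponents collide) by direct computation. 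I would validate the final closed form against explicit instances such as $q=3$ with $m\in\{5,6,7\}$ before committing to it; the residue $m\equiv 1\pmod 3$ is the one where an off-by-one in $p_2$ is easiest to introduce and so deserves the most scrutiny.
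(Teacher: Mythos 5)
The paper does not prove this lemma at all: it is imported verbatim from \cite{RefJ27}, so there is no internal proof to compare against. The closest analogue inside the paper is the proof of Lemma \ref{lem16}, which uses exactly the machinery you propose (digit strings, Lemma \ref{lem1} as a lex-minimality-of-rotations criterion, the band of strings with leading digit $q-2$ and remaining digits in $\{q-2,q-1\}$, and parity counted by the number of $q-2$ digits). Your reductions are all sound: a leader has $i_{m-1}=\min_t i_t\le q-2$, every band element dominates every leader with leading digit $\le q-3$, oddness is equivalent to an odd number of $q-2$ digits, and the run-length encoding with the condition that $(a_1,a_2,a_3)$ be the lex-minimal rotation is the right combinatorial model. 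Your derivation of $\delta_1'$, of $\delta_2'$ in all three residue classes, and of the coset sizes (period $m/3$ exactly when the triple is balanced, i.e. $3\mid m$) is correct.

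The genuine problem sits exactly where you predicted it would, and it is worse than an obstacle in your write-up: carried out honestly, your method \emph{contradicts the statement} in the residue class $m\equiv 1\pmod 3$. At $a_1=\lfloor(m-3)/3\rfloor=\frac{m-4}{3}$ only the single triple $\bigl(\frac{m-4}{3},\frac{m-4}{3},\frac{m-1}{3}\bigr)$ survives, so $\delta_3'$ must come from $a_1=\frac{m-7}{3}$, where the lex-minimal triple with smallest last run is $\bigl(d,d+3,d+1\bigr)$ with $d=\frac{m-7}{3}$; this gives $p_2=\frac{m-4}{3}$, not $\lfloor\frac{m-1}{3}\rfloor=\frac{m-1}{3}$ as the statement asserts. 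Concretely, for $(q,m)=(3,7)$ the integer $1211=(1,1,2,2,2,1,2)_3=3^7-3^6-3^5-3-1$ has cyclotomic coset $\{1211,1447,2155,2093,1907,1349,1861\}$ modulo $2186$, so it is an odd coset leader, and it lies strictly between the claimed $\delta_3'=3^7-3^6-3^5-3^2-1=1205$ and $\delta_2'=1367$; the stated value is in fact the fourth largest odd leader here. (The cases $m\equiv 0$ and $m\equiv 2\pmod 3$ of the stated $\delta_3'$ do check out against your method.) So your proposal cannot be completed as a proof of the statement as written; the numerical validation at $m=7$ that you yourself planned would have exposed this, and the correct closed form in that branch is $q^m-q^{m-1}-q^{\frac{2m+1}{3}}-q^{\frac{m-4}{3}}-1$ (for $m\ge 7$; $m=4$ degenerates further, with $\delta_3'$ leaving the band entirely). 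Either the lemma is mistranscribed from \cite{RefJ27} or the source itself errs, but in any case you should report the discrepancy rather than try to force the stated formula.
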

		\begin{lemma}\label{lem16}
			Let $m\equiv 2\ ({\rm mod}~3)$, $m\geq 8$ and $2 \leq i \leq 7$. Then 
			$${\delta_i}'=q^m-q^{m-1}-q^{\frac{2m-1}{3}+\varepsilon_1}-q^{\frac{m+1}{3}+\zeta_1}-1,$$
		where
			 $$\varepsilon_1=\begin{cases}0,& \text{if $2\leq i \leq 3$};\\
				1, &\text{if $4\leq i \leq 7$},\\\end{cases}  \quad \zeta_1=\begin{cases}-1,& \text{if $i=2$};\\
				1, &\text{if $i=3$};\\i-6, &\text{if $4\leq i \leq 7$}.\\\end{cases}$$
			And $\left\lvert C_{{\delta_i}'}^{2n}\right\rvert =m$ for $4\leq i \leq 7$.
		\end{lemma}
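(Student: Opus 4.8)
The plan is to argue entirely with the base-$q$ digit strings of elements of $\Z_{2n}$. Since $2n=q^m-1$, Lemma~\ref{lem1} applies directly: an integer $i\in[1,q^m-2]$ lies in $\Gamma_{2n}$ iff its length-$m$ digit string is smallest among all its cyclic shifts $[iq^j]_{q^m-1}$. As $q$ is odd, each $q^s$ is odd, so $i\equiv\sum_s t_s\pmod 2$; and because every digit of the leaders in question equals $q-1$ (even) or $q-2$ (odd), oddness of $i$ is equivalent to an odd number of digits being $q-2$. The top leader $\delta_1'=q^m-q^{m-1}-1$ has the one-dip string $(q-2,q-1,\dots,q-1)$, and any minimal rotation must begin at a dip, so every large leader carries a dip in the top position $m-1$. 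A two-dip string has even digit sum, hence is an even integer; therefore the odd leaders just below $\delta_1'$ carry exactly three dips, which is precisely the shape $q^m-q^{m-1}-q^a-q^b-1$ of the statement.

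I would encode a three-dip candidate with dips at $m-1>a>b\ge 0$ as three cyclic blocks, each a $q-2$ followed by a run of $q-1$'s, of lengths $L_A=m-1-a$, $L_B=a-b$, $L_C=b+1$ summing to $m$. Since all blocks begin with $q-2$ and are padded by the maximal digit, comparing two shifts reduces to comparing block lengths: the shift beginning at a shorter block is strictly smaller at the position where its next dip first appears. Hence the coset-leader condition is the combinatorial requirement that block $A$ be a shortest block, ties broken by the lengths of the following blocks. Maximizing $\delta_i'$ means minimizing $q^a+q^b$, i.e. lowering $a$ first and then $b$, while minimality forces the three lengths to be as balanced as possible. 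For $m=3k+2$ the most balanced split with $A$ shortest is $\{L_A,L_B,L_C\}=\{k,k+1,k+1\}$, fixing $a=2k+1=\frac{2m-1}{3}$; holding $a$ at this value I would check that exactly two choices of $b$ survive the minimality test, namely $b=k$ and $b=k+1$, which give the second- and third-largest odd leaders and should be reconciled with $\delta_2',\delta_3'$ of Lemma~\ref{lem15}.

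Since no further leader with $a=\frac{2m-1}{3}$ exists, the next odd leaders arise by shrinking block $A$ by one, i.e. raising $a$ to $\frac{2m-1}{3}+1$ (the factor $\varepsilon_1=1$), and then letting $b=\frac{m+1}{3}+\zeta_1$ run through its admissible values $\zeta_1\in\{-2,-1,0,1\}$. For each I would verify that block $A$ is now strictly shortest, so the string is rotation-minimal with no tie-breaking, and that oddness is automatic from the three dips; ordering by $q^a+q^b$ then produces $\delta_4',\dots,\delta_7'$ in the stated order. The full-size claim $|C_{\delta_i'}^{2n}|=m$ follows because a proper period $d\mid m$ with $d<m$ would force $m/d$ to divide the dip count $3$, whence $m/d=3$ and $3\mid m$, against $m\equiv 2\pmod 3$; thus each three-dip string is aperiodic and its coset has full size $m$. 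The hypothesis $m\ge 8$ enters exactly here, as the condition $\frac{2m-1}{3}+1\le m-2$ (equivalently $k\ge 2$) that keeps the raised dip strictly below the top dip and leaves block $A$ nonempty.

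The main obstacle is completeness together with correct ordering: I must show these six values are really the $2$nd through $7$th largest odd leaders, with none omitted between them. There are two layers. First, one rules out leaders with five or more dips in this range; the balancing principle helps, since for $2j+1$ dips the largest subtracted power sits near $q^{2jm/(2j+1)}$, which grows with $j$, so such leaders subtract more and are strictly smaller than the three-dip ones. Second, and more delicate, one shows that within the three-dip family no admissible $(a,b)$ is skipped and that the excluded perturbations---$b$ too small, or the tie cases $L_A=L_B$ and $L_A=L_C$---genuinely fail minimality; this boundary analysis, where minimality turns only on the relative lengths of the second and third blocks, is the technical heart of the argument.
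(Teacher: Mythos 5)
Your block-length formalism is a legitimate and in fact cleaner organization of the same underlying argument. Both you and the paper work from Lemma~\ref{lem1} on the base-$q$ string; where the paper verifies by hand, for each of five ranges of the shift $t$, that every rotation of ${\delta_i}'$ exceeds ${\delta_i}'$, you reduce this to lexicographic minimality of the cyclic sequence of block lengths, and you replace the paper's ``all nontrivial shifts are strictly larger'' justification of $|C_{{\delta_i}'}^{2n}|=m$ with an aperiodicity argument (a proper period of a three-dip string would force $3\mid m$). Your first-tier computation correctly yields $b=k$ and $b=k+1$ (writing $m=3k+2$), which matches Lemma~\ref{lem15}; note that the statement's $\zeta_1=1$ for $i=3$ gives $b=k+2$, whose block sequence $(k,k-1,k+3)$ has a strictly shorter second block and hence is not rotation-minimal, so the displayed formula for $i=3$ is off by one power of $q$. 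Your method detects this; the paper's proof does not engage with $i=2,3$ at all, as it simply cites \cite{RefJ27}.

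The genuine gap is that the two steps you yourself flag as the technical heart are not executed, and the argument you offer for one of them fails at the boundary. To exclude leaders with five or more dips you invoke the growth of the largest subtracted power, roughly $q^{2jm/(2j+1)}$, with the dip count $2j+1$; but for $m=8$ and $m=11$ one has $\left\lfloor m/5\right\rfloor=\left\lfloor m/3\right\rfloor-1$, so a five-dip leader can have the \emph{same} largest subtracted power $q^{\frac{2m-1}{3}+1}$ as ${\delta_4}',\dots,{\delta_7}'$, and the comparison must be pushed to the lower-order terms. Concretely, for $q=3$, $m=8$ the five-dip odd coset leader $3^8-3^7-3^6-3^4-3^3-3-1=3533$ lies only $30$ below ${\delta_7}'=3563$; one must actually prove that every five-dip leader whose top block has length $\left\lfloor m/3\right\rfloor-1$ still subtracts more than $q^{\frac{m+1}{3}+1}$ beyond $q^{m-1}+q^{\frac{2m-1}{3}+1}$, which is exactly the tie-breaking analysis on the second and third blocks that you defer. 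Likewise, the exclusion of the perturbations $\zeta_1=-3$ and the completeness of the list between consecutive leaders are only promised, not proved. The paper handles the corresponding step by restricting to $\delta'\in[{\delta}'_{i+1}+1,{\delta_i}'-1]$, forcing the high digits to agree with those of ${\delta_i}'$, and counting the extra $q-2$ digits among the low positions; until your plan carries out the analogous count (including the five-dip boundary cases $m=8,11$), it is a proof outline rather than a proof.
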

		
		\begin{proof}
			When $q\geq 3$ and $q^m\geq 25$, ${\delta _1}'$, ${\delta _2}'$ and ${\delta _3}'$ were determined in \cite{RefJ27}.
			Now we first prove that ${\delta_i}'=q^m-q^{m-1}-q^{\frac{2m-1}{3}+1}-q^{\frac{m+1}{3}+i-6}-1$ is an odd coset leader for $4\leq i \leq 7$, i.e., $q^t{\delta_i}'\geq {\delta_i}'$ for any $0\leq t \leq m-1$.
			Let  $Q_j$ be $\underbrace{q-1,\ldots,q-1}_{j}$. Clearly,
			$${\delta_i}'=(q-2,Q_{\frac{m+1}{3}-3},q-2,Q_{\frac{m-2}{3}+6-i},q-2,Q_{\frac{m+1}{3}+i-6}).$$
			
			$(1)$ If $0 \leq t \leq \frac{m+1}{3}-3$, it's obvious that $q^t{\delta_i}'> {\delta_i}'$.
			
			$(2)$ If $t =\frac{m+1}{3}-2$, then $$q^t{\delta_i}'=(q-2,Q_{\frac{m-2}{3}+6-i},q-2,Q_{\frac{m+1}{3}+i-6},q-2,Q_{\frac{m+1}{3}-3}).$$
			Note that $\frac{m-2}{3}+6-i >\frac{m+1}{3}-3 $, thus $q^t{\delta_i}'>{\delta_i}'$.
			
			$(3)$ If $ \frac{m+1}{3}-1\leq t \leq \frac{2m-1}{3}+4-i$, we have
			$$[q^t{\delta_i}']_{2n}=q^m-q^{1+t-\frac{m+1}{3}}-q^{\frac{m+1}{3}+i-6+t}-q^{t-1}-1=[q^t{\delta_i}']_{n}>{\delta_i}',$$
			since $\frac{m+1}{3}+i-6+t<m-1$ and $t-1<\frac{2m-1}{3}+1$.
			
			$(4)$ If $  t =\frac{2m-1}{3}+5-i$, then
			$$q^t{\delta_i}'=(q-2,Q_{\frac{m+1}{3}+i-6},q-2,Q_{\frac{m+1}{3}-3},q-2,Q_{\frac{m-2}{3}+6-i}),$$
			since $\frac{m+1}{3}+i-6>\frac{m+1}{3}-3$, thus $q^t{\delta_i}'>{\delta_i}'$.
			
			$(5)$ If $\frac{2m-1}{3}+6-i\leq t \leq m-1$, then
			$$[q^t{\delta_i}']_{2n}=q^m-q^{1+t-\frac{m+1}{3}}-q^{i-6+t-\frac{2m-1}{3}}-q^{t-1}-1=[q^t{\delta_i}']_{n}>{\delta_i}',$$
			by $t<m$, $1+t-\frac{m+1}{3}<\frac{2m-1}{3}+1$ and $i-6+t-\frac{2m-1}{3}<\frac{m+1}{3}+i-6$.
			
			Hence, we have ${\delta_i}'\in \Gamma _{2n}$ and $2\nmid {\delta_i}'$ for $4\leq i \leq 7$. 
			Since $q^t{\delta_i}'>{\delta_i}'$ for any $0\leq t \leq m-1$,
			then $\left\lvert C_{{\delta_i}'}^{2n}\right\rvert =m$.
			Next we prove that no integer ${\delta}'\in[{\delta}' _{i+1}+1,{\delta_i}'-1]$ satisfies ${\delta}'\in \Gamma _{2n}$ and $2\nmid {\delta}'$.
			It is easy seen that 
			$${\delta}'_{i+1}=(q-2,Q_{\frac{m+1}{3}-3},q-2,Q_{\frac{m-2}{3}+5-i},q-2,Q_{\frac{m+1}{3}+i-5}).$$
				Let $\delta ^{'} \in [{\delta }'_{i+1}+1,{\delta_i}'-1]$.
				According to the Lemma \ref{lem1}. 
	      Then the form of $\delta$ is
			$$\delta^{'}=(q-2,Q_{\frac{m+1}{3}-3},q-2,Q_{\frac{m-2}{3}+6-i},\delta^{''}_{\frac{m+1}{3}+i-6},\delta^{''}_{\frac{m+1}{3}+i-7},\ldots,\delta^{''}_0).$$
			Supposing $\delta ^{'}$ is an odd coset leader, then $q-2 \leq\delta^{''}_t\leq q-1$ for $0\leq t\leq \frac{m+1}{3}+i-7$
			and $\delta^{''}_{\frac{m+1}{3}+i-6}=q-2$.
			Let $A=\{\delta^{''}_t:\delta^{''}_t=q-2, 0\leq t\leq \frac{m+1}{3}+i-7\}$.
			If $|A|=0$, then $\delta ^{'}={\delta_i}'$, this contradicts $\delta ^{'} \in [{\delta }'_{i+1}+1,{\delta_i}'-1]$. 
			If $|A|=1$, then $2\mid \delta ^{'}$, that is impossible.
			If $|A|=2$, then 
			$$(\delta^{''}_{\frac{m+1}{3}+i-7},\ldots,\delta^{''}_0)=(Q_{a_1},q-2,Q_{a_2},q-2,Q_{a_3}),$$
			where $\sum_{j= 1}^{3}a_j=\frac{m+1}{3}+i-8$ and $a_j\geq0$ for $4\leq i\leq 7$ and $1\leq j\leq3$.
			There must exist an $a_j$ that does not satisfy $a_j\geq \frac{m+1}{3}-3$ for $1\leq j\leq3$,
			i.e., $\delta ^{'}\notin \Gamma _{2n}$. If $|A|>2$, then $\delta ^{'}$ does not satisfy ${\delta}'\in \Gamma _{2n}$ and $2\nmid {\delta}'$.
		\end{proof}
	
		\begin{theorem}\label{th17}
			Let $m\equiv 2\ ({\rm mod}~3)$, $m\geq 8$. Then $\dim(\C_{(n,-1,\frac{{\delta_i}'+1}{2},0)})=im$, where ${\delta_i}'$ is given in Lemmas \ref{lem15} and \ref{lem16}. 
		\begin{proof}
		Using lemmas \ref{lem15} and \ref{lem16}, we can directly prove this conclusion.
		\end{proof}
		\end{theorem}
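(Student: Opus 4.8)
The plan is to compute the dimension directly as $n$ minus the size of the defining set, and then to recognize that difference as a sum of cyclotomic coset sizes which Lemmas \ref{lem15} and \ref{lem16} already supply. The one structural fact I would establish first is a parity observation: since $n=\frac{q^m-1}{2}$ forces $q^m$ to be odd and hence $q$ to be odd, multiplication by $q$ modulo $2n$ preserves parity, so every $q$-cyclotomic coset modulo $2n$ consists entirely of odd integers or entirely of even integers. Consequently the odd integers in $\Z_{2n}$ are partitioned by their odd coset leaders, and the ordering ${\delta_1}'>{\delta_2}'>\cdots$ of odd coset leaders used in Lemmas \ref{lem15} and \ref{lem16} is exactly the relevant bookkeeping device.

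Next I would unwind the designed distance. Writing $\delta=\left\lceil\frac{{\delta_i}'+1}{2}\right\rceil=\frac{{\delta_i}'+1}{2}$, which is an integer because ${\delta_i}'$ is odd, the defining set is $T=\bigcup_{j=0}^{\delta-2}C_{1+2j}^{2n}$, whose indices $1,3,\ldots,2\delta-3={\delta_i}'-2$ run through precisely the odd integers up to ${\delta_i}'-2$. Thus an odd residue $s$ lies outside $T$ exactly when its coset contains no odd integer $\le{\delta_i}'-2$, i.e. when $CL(s)>{\delta_i}'-2$, which (the leader being odd) means $CL(s)\geq{\delta_i}'$. Since the $n$ roots of $x^n+1$ correspond to the $n$ odd residues modulo $2n$, we have $\dim(\C_{(n,-1,\delta,0)})=n-|T|$ equal to the number of odd residues not in $T$, and therefore
$$\dim\bigl(\C_{(n,-1,\delta,0)}\bigr)=\bigl|\{s\in\Z_{2n}:2\nmid s,\ CL(s)\geq{\delta_i}'\}\bigr|=\sum_{k=1}^{i}\bigl|C_{{\delta_k}'}^{2n}\bigr|,$$
the last equality holding because the odd coset leaders that are at least ${\delta_i}'$ are, by definition of the ordering, exactly ${\delta_1}',{\delta_2}',\ldots,{\delta_i}'$.

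It then remains only to evaluate each coset size, and here the hypothesis $m\equiv 2\pmod 3$ is exactly what is needed: it guarantees $3\nmid m$, so Lemma \ref{lem15} gives $|C_{{\delta_2}'}^{2n}|=m$ (rather than the exceptional value $\frac{m}{3}$), together with $|C_{{\delta_1}'}^{2n}|=|C_{{\delta_3}'}^{2n}|=m$, while Lemma \ref{lem16} gives $|C_{{\delta_k}'}^{2n}|=m$ for $4\leq k\leq 7$. Summing $i$ equal terms yields $\dim=im$.

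I expect the substantive difficulty of this result to lie entirely in the preceding two lemmas rather than in Theorem \ref{th17} itself: Lemmas \ref{lem15} and \ref{lem16} do the genuine work of locating the top odd coset leaders and computing their sizes via the shift criterion of Lemma \ref{lem1}. Given those, the only care required in Theorem \ref{th17} is the translation of ``dimension'' into ``number of odd non-roots,'' the parity argument that renders even residues irrelevant, and the observation that the leaders $\geq{\delta_i}'$ are precisely the top $i$ in the ordering so that none is skipped — all routine once the partition-by-leaders viewpoint is in place. The single point to double-check is that the exceptional case of $|C_{{\delta_2}'}^{2n}|$ is genuinely excluded by $m\equiv 2\pmod 3$, which is why that congruence appears as a hypothesis.
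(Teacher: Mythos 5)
Your argument is correct and matches the paper's (essentially omitted) proof: the paper simply asserts that the result follows directly from Lemmas \ref{lem15} and \ref{lem16}, and your write-up supplies exactly the intended bookkeeping --- identifying $\dim(\C_{(n,-1,\frac{{\delta_i}'+1}{2},0)})$ with the number of odd residues whose coset leader is at least ${\delta_i}'$, hence with $\sum_{k=1}^{i}|C_{{\delta_k}'}^{2n}|=im$. Your observation that $m\equiv 2\pmod 3$ is precisely what rules out the exceptional size $\frac{m}{3}$ for $|C_{{\delta_2}'}^{2n}|$ is the right point to check and is handled correctly.
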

		
	Using the same approach as above, the following results can be established.
		
		\begin{lemma}\label{lem18}
			Let $m\equiv 0\ ({\rm mod}~3)$, $m\geq 9$ and $2 \leq i \leq 10$. Then 
			$${\delta_i}'=q^m-q^{m-1}-q^{\frac{2m}{3}+\varepsilon_2}-q^{\frac{m}{3}+\zeta_2}-1,$$  
			 where
			$$\varepsilon_2=\begin{cases}-1,&\text{if $i=2$};\\
				0,&\text{if $3\leq i \leq 5$} ;\\1,&\text{if $6\leq i \leq 10$},\\\end{cases} \quad \zeta_2=\begin{cases}-1,&\text{if $2\leq i \leq 3$};\\
				i-4, &\text{if $4 \leq i \leq 5$};\\i-8, &\text{if $6\leq i \leq 10$}.\\\end{cases}$$
			And $\left\lvert C_{{\delta_i}'}^{2n}\right\rvert =m$  for $4\leq i \leq 10$.
		\end{lemma}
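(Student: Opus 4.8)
The plan is to mirror the argument of Lemma~\ref{lem16}, now under $3\mid m$. The values $\delta_1',\delta_2',\delta_3'$ are supplied by Lemma~\ref{lem15}, and one checks that the formula of the present lemma reproduces the latter two: for $i=2$ it gives exponents $\frac{2m}{3}-1,\frac{m}{3}-1$, matching $\lfloor\frac{2m-1}{3}\rfloor,\lfloor\frac{m-1}{3}\rfloor$, and for $i=3$ it gives $\frac{2m}{3},\frac{m}{3}-1$, matching the branch $3\nmid(m+1)$ of Lemma~\ref{lem15} (valid since $3\mid m$ forces $m+1\equiv1\pmod 3$). Thus only $4\le i\le 10$ needs a new argument. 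Writing $Q_j$ for the run $\underbrace{q-1,\ldots,q-1}_{j}$ and starting from $q^m-q^{m-1}-1=(q-2,q-1,\ldots,q-1)$, subtracting $q^{\frac{2m}{3}+\varepsilon_2}$ and $q^{\frac{m}{3}+\zeta_2}$ simply lowers the two digits in those positions from $q-1$ to $q-2$, giving
$$\delta_i'=(q-2,\,Q_{\frac{m}{3}-2-\varepsilon_2},\,q-2,\,Q_{\frac{m}{3}+\varepsilon_2-\zeta_2-1},\,q-2,\,Q_{\frac{m}{3}+\zeta_2}).$$
The bound $m\ge 9$ makes the two positions distinct and interior and keeps every run length nonnegative across both subranges $\varepsilon_2\in\{0,1\}$.

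First I would establish $\delta_i'\in\Gamma_{2n}$ by checking, via Lemma~\ref{lem1}, that $[q^t\delta_i']_{2n}\ge\delta_i'$ for every $0\le t\le m-1$. Following steps $(1)$--$(5)$ of Lemma~\ref{lem16}, I split $t$ according to where the three $q-2$ digits land after the circular shift. When $t$ moves inside a run of $q-1$'s, the shifted word begins with more $q-1$'s than $\delta_i'$ and the comparison is immediate; when a $q-2$ is pushed into a high position I would write $[q^t\delta_i']_{2n}$ explicitly as $q^m$ minus three powers of $q$ and $1$, and verify the resulting exponent inequalities, all of which hold because $m\ge 9$. Since each nonzero shift is strict, the coset is nondegenerate and $|C_{\delta_i'}^{2n}|=m$ for $4\le i\le 10$.

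Finally I would show that $\delta_i'$ and $\delta_{i+1}'$ are consecutive among odd coset leaders, i.e.\ that no odd $\delta'\in[\delta_{i+1}'+1,\delta_i'-1]$ lies in $\Gamma_{2n}$. Lemma~\ref{lem1} forces such a $\delta'$ to share the top block of $\delta_i'$ and to satisfy $q-2\le\delta''_t\le q-1$ below the second $q-2$; counting by a set $A$ the extra positions that carry $q-2$, the cases $|A|=0$ and $|A|=1$ give $\delta_i'$ itself or an even number, while $|A|\ge 2$ chops the trailing $q-1$'s into blocks whose lengths cannot all reach the top-run threshold $\frac{m}{3}-2-\varepsilon_2$ required by Lemma~\ref{lem1}, a contradiction. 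I expect this counting to be the main obstacle, because the range $4\le i\le 10$ straddles two subranges ($\varepsilon_2=0$ for $i\le5$, $\varepsilon_2=1$ for $i\ge6$) with five distinct $\zeta_2$ values for $6\le i\le10$, so the run lengths and the threshold vary with $i$; in particular the step from $i=5$ to $i=6$, where $\varepsilon_2$ jumps, changes the block structure and must be treated separately, and verifying the threshold inequalities uniformly is where the bookkeeping is most delicate.
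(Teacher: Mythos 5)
Your proposal is correct and follows essentially the same route as the paper: the paper proves Lemma~\ref{lem18} only by the remark ``using the same approach as above,'' i.e.\ by transplanting the three-step argument of Lemma~\ref{lem16} (digit decomposition of ${\delta_i}'$, verification of the coset-leader condition via Lemma~\ref{lem1} over all cyclic shifts, and the counting argument on the set $A$ of extra $q-2$ digits to exclude intermediate odd coset leaders), which is exactly what you reconstruct, with the correct run lengths $\frac{m}{3}-2-\varepsilon_2$, $\frac{m}{3}+\varepsilon_2-\zeta_2-1$, $\frac{m}{3}+\zeta_2$ and the correct reduction of the cases $i=2,3$ to Lemma~\ref{lem15}.
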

		
		\begin{theorem}\label{th19}
			Let $m\equiv 0\ ({\rm mod}~3)$, $m\geq 9$. Then 
			$\dim(\C_{(n,-1,\frac{{\delta_i}'+1}{2},0)})=
			(i-\frac{2}{3})m $, where ${\delta_i}'$ is given in Lemma \ref{lem18} with $2\leq i \leq 10$.
		\end{theorem}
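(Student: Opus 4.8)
The plan is to reduce the dimension computation to a sum of the sizes of the cyclotomic cosets of the top odd coset leaders, exactly as in the proof of Theorem~\ref{th17}. First I would record a parity observation: since $q$ is odd and $2n=q^m-1$ is even, multiplication by $q$ followed by reduction modulo $2n$ preserves parity, so every $q$-cyclotomic coset $C_j^{2n}$ consists entirely of odd integers or entirely of even integers. Consequently the $n$ odd residues of $\Z_{2n}$ partition into the ``odd'' cosets, and $\dim(\C_{(n,-1,\delta,0)})=n-|T|$ counts exactly the odd residues not lying in $T$.

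Next I would identify $T$ precisely. With $\delta=\frac{{\delta_i}'+1}{2}$ (an integer since ${\delta_i}'$ is odd), the defining set is $T=\bigcup_{j=0}^{\delta-2}C_{1+2j}^{2n}$, whose seeds $1,3,\dots,2\delta-3={\delta_i}'-2$ are all odd numbers below ${\delta_i}'$. Hence an odd residue $s$ fails to lie in $T$ if and only if its coset leader satisfies $CL(s)\geq {\delta_i}'$. By the definition of ${\delta_j}'$ as the $j$-th largest odd coset leader in $\Gamma_{2n}$, the odd leaders that are $\geq {\delta_i}'$ are precisely ${\delta_1}',{\delta_2}',\dots,{\delta_i}'$, and the odd residues outside $T$ are exactly the elements of their cosets. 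This yields the key identity
$$\dim\!\left(\C_{(n,-1,\frac{{\delta_i}'+1}{2},0)}\right)=\sum_{j=1}^{i}\left\lvert C_{{\delta_j}'}^{2n}\right\rvert .$$

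Finally I would substitute the coset sizes. Since $3\mid m$, Lemma~\ref{lem15} gives $\lvert C_{{\delta_1}'}^{2n}\rvert=m$, $\lvert C_{{\delta_2}'}^{2n}\rvert=\frac{m}{3}$, $\lvert C_{{\delta_3}'}^{2n}\rvert=m$, and Lemma~\ref{lem18} gives $\lvert C_{{\delta_j}'}^{2n}\rvert=m$ for $4\leq j\leq 10$. Summing for $2\leq i\leq 10$ then gives $m+\frac{m}{3}+(i-2)m=(i-\frac{2}{3})m$, as claimed. The genuinely hard work is already completed in Lemmas~\ref{lem16} and~\ref{lem18}, where one must verify both that the listed ${\delta_i}'$ are consecutive odd coset leaders (no odd leader lies strictly between ${\delta_{i+1}}'$ and ${\delta_i}'$) and that their coset sizes equal $m$; granting these, Theorem~\ref{th19} is pure bookkeeping. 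The only point demanding care is the anomalous value $\lvert C_{{\delta_2}'}^{2n}\rvert=\frac{m}{3}$, which arises exactly because $3\mid m$ and is responsible for the $-\frac23$ shift that distinguishes this case from the $m\equiv 2\pmod 3$ case of Theorem~\ref{th17} (where all sizes are $m$ and the dimension is $im$); I would sanity-check the final formula against that companion theorem.
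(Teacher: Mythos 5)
Your proposal is correct and takes essentially the same route as the paper: the paper proves Theorem~\ref{th19} (like Theorem~\ref{th17}) by directly combining Lemmas~\ref{lem15} and~\ref{lem18}, i.e., the dimension is $\sum_{j=1}^{i}\lvert C_{{\delta_j}'}^{2n}\rvert$ with the anomalous size $\lvert C_{{\delta_2}'}^{2n}\rvert=\frac{m}{3}$ accounting for the $-\frac{2}{3}$ shift. Your write-up merely makes explicit the parity and coset-leader bookkeeping that the paper leaves implicit, and the arithmetic $m+\frac{m}{3}+(i-2)m=(i-\frac{2}{3})m$ checks out.
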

		
		\begin{lemma}\label{lem20}
			Let $m\equiv 1\ ({\rm mod}~3)$, $m\geq 7$ and $2 \leq i \leq 6$. Then 
			$${\delta_i}'=q^m-q^{m-1}-q^{\frac{2m+1}{3}+\varepsilon_3}-q^{\frac{m-1}{3}+\zeta_3}-1,$$ 
			where $$\varepsilon_3=\begin{cases}-1,&\text{if $i=2$};\\
				0,&\text{if $3\leq i \leq 6$} ,\\\end{cases} \quad \zeta_3=\begin{cases}0,&\text{if $i=2$};\\
				i-4 ,&\text{if $3 \leq i \leq 6$}.\\\end{cases}$$
			And $\left\lvert C_{{\delta_i}'}^{2n}\right\rvert =m$  for $4\leq i \leq 6$.
		\end{lemma}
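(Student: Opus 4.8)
The plan is to follow the proof of Lemma~\ref{lem16} essentially verbatim, only changing the residue class of $m$ from $2$ to $1$ modulo $3$. The two largest odd coset leaders ${\delta_1}'$ and ${\delta_2}'$ are supplied by Lemma~\ref{lem15} (for $i=2$ the displayed formula reproduces its value), so it suffices to establish the values for $3\le i\le 6$ and to show they are consecutive in the ordering of odd coset leaders. Writing $Q_j$ for the block $\underbrace{q-1,\dots,q-1}_{j}$ and putting $\zeta_3=i-4$, the $q$-adic expansion of the proposed leader is
$${\delta_i}'=(q-2,\,Q_{\frac{m-7}{3}},\,q-2,\,Q_{\frac{m-1}{3}-\zeta_3},\,q-2,\,Q_{\frac{m-1}{3}+\zeta_3}),$$
so its three ``dips'' (digits equal to $q-2$) lie at positions $m-1$, $\frac{2m+1}{3}$ and $\frac{m-1}{3}+\zeta_3$. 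I would first check that for $m\ge 7$ and $3\le i\le 6$ these three positions are distinct and that the three exponents remain in $[0,m-1]$. Since $q$ is odd and $q^m-1$ is even, subtracting an odd number (three) of odd powers shows ${\delta_i}'$ is odd.

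Next I would verify ${\delta_i}'\in\Gamma_{2n}$ through Lemma~\ref{lem1}, i.e. $[q^t{\delta_i}']_{2n}\ge {\delta_i}'$ for $0\le t\le m-1$. Because the leading digit of ${\delta_i}'$ is itself a dip, a shift is automatically strictly larger unless it rotates a dip into the leading position; this happens only for $t=\frac{m-4}{3}$ and $t=\frac{2m-2}{3}-\zeta_3$, which bring the middle and the low dip to the top. For these two critical shifts the comparison reduces to run lengths: the shifted word starts with a run of length $\frac{m-1}{3}-\zeta_3$ or $\frac{m-1}{3}+\zeta_3$ respectively, each of which is at least the top run $\frac{m-7}{3}$ (strictly so, except in the tied case $i=6$, which is settled by carrying the comparison one dip further). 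This gives ${\delta_i}'\in\Gamma_{2n}$, and the same comparison, now strict for every $1\le t\le m-1$, yields $|C_{{\delta_i}'}^{2n}|=m$ when $4\le i\le 6$.

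The decisive step is to show that no odd coset leader lies strictly between two consecutive proposed values, so that ${\delta_i}'$ really is the $i$-th largest. Fix $i$ with $2\le i\le 5$ and suppose $\delta'\in({\delta_{i+1}}',{\delta_i}')$ is an odd coset leader. Being squeezed between the two endpoints forces the top part of $\delta'$ to coincide with ${\delta_i}'$ down through its lowest dip, leaving a free tail; let $A$ collect the extra dips occurring in that tail. Counting dips, $\delta'$ has $3+|A|$ of them, so oddness forces $|A|$ to be even: the case $|A|=0$ returns $\delta'={\delta_i}'$, and $|A|=1$ makes $\delta'$ even. For $|A|\ge 2$ the tail creates additional short runs, and comparing the cyclic sequence of run lengths with the top run $\frac{m-7}{3}$ shows that the rotation beginning at one of the new dips is strictly smaller than $\delta'$, contradicting Lemma~\ref{lem1}; the single transitional gap $({\delta_3}',{\delta_2}')$, where the second-highest dip migrates from $\frac{2m-2}{3}$ to $\frac{2m+1}{3}$, is handled by the same squeezing argument with the roles of the relevant dips adjusted.

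I expect this last lexicographic comparison of run-length sequences to be the main obstacle. The top run is as short as $\frac{m-7}{3}$ (indeed empty when $m=7$), so a crude inequality such as ``some run is shorter than $\frac{m-7}{3}$'' is not by itself enough; the degenerate and tied cases must instead be resolved by pushing the comparison past equal runs, and this is precisely where the hypothesis $m\ge 7$ enters. Assembling the three steps then yields the stated values of ${\delta_i}'$ together with the coset sizes $|C_{{\delta_i}'}^{2n}|=m$ for $4\le i\le 6$.
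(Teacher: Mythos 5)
Your proposal is correct in outline and takes essentially the same approach the paper intends: the paper gives no explicit proof of Lemma~\ref{lem20} (it only states that the result ``can be established using the same approach'' as Lemma~\ref{lem16}), and your adaptation of that argument --- the $q$-adic expansion with dips at $m-1$, $\frac{2m+1}{3}$ and $\frac{m-1}{3}+\zeta_3$, the critical rotations, the tied case $i=6$, the empty top run at $m=7$, and the parity-plus-squeezing argument for the gaps --- is exactly that approach with the correct positions for $m\equiv 1\ ({\rm mod}~3)$. The residual work you flag (pushing the lexicographic comparison of run lengths past equal or empty runs) is genuine but is no more than what the paper itself leaves implicit in the proof of Lemma~\ref{lem16}.
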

		
		\begin{theorem}\label{th21}
			Let $m\equiv 1\ ({\rm mod}~3)$, $m\geq 7$. Then $\dim(\C_{(n,-1,\frac{{\delta_i}'+1}{2},0)}) =im$,
			 where ${\delta_i}'$ is given in Lemma \ref{lem20} with $2 \leq i \leq 6$.

		\end{theorem}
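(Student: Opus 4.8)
The plan is to reduce the dimension to a sum of cyclotomic coset sizes, exactly as the proofs of Theorems \ref{th17} and \ref{th19} indicate. Write $\delta=\frac{{\delta_i}'+1}{2}$, which is an integer since ${\delta_i}'$ is odd. By definition the defining set is $T=\bigcup_{j=0}^{\delta-2}C_{1+2j}^{2n}$, so the largest odd index appearing in it is $1+2(\delta-2)={\delta_i}'-2$; thus $T$ is the union of the $q$-cyclotomic cosets (mod $2n$) of all odd integers in $[1,{\delta_i}'-2]$. Since $n=\frac{q^m-1}{2}$ forces $q$ to be odd, multiplication by $q$ preserves parity modulo $2n$, so every coset $C_s^{2n}$ consists entirely of odd or entirely of even elements, and its leader $CL(s)$ has the same parity as $s$.

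First I would establish the key identity $\dim(\C_{(n,-1,\delta,0)})=\sum_{k=1}^{i}\lvert C_{{\delta_k}'}^{2n}\rvert$. Indeed, $\dim=n-\lvert T\rvert$ counts the odd $s\in\Z_{2n}$ not lying in $T$, and an odd $s$ lies in $T$ iff its (odd) leader satisfies $CL(s)\leq{\delta_i}'-2$, i.e. $CL(s)<{\delta_i}'$; hence $s\notin T$ iff $CL(s)\geq{\delta_i}'$. Because ${\delta_i}'$ is by definition the $i$-th largest odd coset leader, the odd leaders that are $\geq{\delta_i}'$ are precisely ${\delta_1}',\ldots,{\delta_i}'$, and Lemma \ref{lem20} (with the gap argument used in the proof of Lemma \ref{lem16}, showing no odd coset leader lies strictly between consecutive ${\delta_k}'$) guarantees there are no further odd leaders in this range. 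Summing the sizes of these $i$ cosets, each counted once, yields the formula.

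Then I would evaluate the sum from the coset sizes already recorded. By Lemma \ref{lem15}, $\lvert C_{{\delta_1}'}^{2n}\rvert=m$, $\lvert C_{{\delta_3}'}^{2n}\rvert=m$, and $\lvert C_{{\delta_2}'}^{2n}\rvert=m$ because $m\equiv 1\ ({\rm mod}~3)$ gives $3\nmid m$; by Lemma \ref{lem20}, $\lvert C_{{\delta_k}'}^{2n}\rvert=m$ for $4\leq k\leq 6$. Every term therefore equals $m$, so $\dim(\C_{(n,-1,\frac{{\delta_i}'+1}{2},0)})=im$ for $2\leq i\leq 6$.

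The arithmetic is routine once the reduction is in place, so the single point deserving care is the parity reduction itself: one must verify that $T$ contains an odd $s$ exactly when $CL(s)\leq{\delta_i}'-2$, which rests entirely on the fact that cosets modulo $2n$ never mix parities. As a sanity check, the same identity reproduces Theorem \ref{th19}: when $3\mid m$ Lemma \ref{lem15} gives $\lvert C_{{\delta_2}'}^{2n}\rvert=\frac{m}{3}$, so the sum becomes $m+\frac{m}{3}+(i-2)m=(i-\frac{2}{3})m$, which confirms that the method and the formula are correct.
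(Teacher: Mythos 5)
Your proposal is correct and follows the same route the paper intends: the paper's proof of Theorem \ref{th21} (like that of Theorem \ref{th17}) simply cites Lemmas \ref{lem15} and \ref{lem20}, relying on exactly the reduction you spell out, namely that the odd $s\notin T$ are those with $CL(s)\geq{\delta_i}'$, so the dimension is $\sum_{k=1}^{i}\lvert C_{{\delta_k}'}^{2n}\rvert=im$ since $3\nmid m$. Your version merely makes the parity argument and the coset-size bookkeeping explicit, which the paper leaves implicit.
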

			
		We discuss the parameters of negacyclic code ${\C}'$ with length $n=\frac{q^m-1}{2}$ and check polynomial $\lcm (M_1(x),M_{{\delta_1}'}(x))$. 
				\begin{theorem}\label{th22}
			Let $n=\frac{q^m-1}{2}$ and $m\geq3$. Then ${\C}'$ is an $[n,2m,d({\C}')\geq \frac{(q-2)q^{m-1}-1}{2}]$ code
	and ${\C}'^{\perp}$ is an $[n,n-2m,3\leq d({\C}'^{\perp})\leq 5]$ dual code.
\end{theorem}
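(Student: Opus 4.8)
The plan is to carry out everything through the set of nonzeros of ${\C}'$, namely the two $q$-cyclotomic cosets $C_1^{2n}$ and $C_{{\delta_1}'}^{2n}$ modulo $2n=q^m-1$, and to describe them explicitly. Since $q^m\equiv 1\pmod{2n}$ one has $C_1^{2n}=\{q^j\bmod 2n:0\le j\le m-1\}=\{1,q,\dots,q^{m-1}\}$, while ${\delta_1}'=q^m-q^{m-1}-1=2n-q^{m-1}\equiv -q^{m-1}$, so repeatedly multiplying by $q$ gives $C_{{\delta_1}'}^{2n}=\{2n-q^j:0\le j\le m-1\}$. By Lemmas \ref{lem5} and \ref{lem6} we have $|C_1^{2n}|=m$, and $|C_{{\delta_1}'}^{2n}|=m$ by Lemma \ref{lem15}; as $1$ and ${\delta_1}'$ are distinct coset leaders the two cosets are disjoint, so the check polynomial $\lcm(M_1(x),M_{{\delta_1}'}(x))$ has degree $2m$. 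Hence $\dim({\C}')=2m$ and $\dim({\C}'^{\perp})=n-2m$.

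For the distance of ${\C}'$ I would apply Lemma \ref{lem2}. All nonzeros lie in $\{1,\dots,q^{m-1}\}\cup\{2n-q^{m-1},\dots,2n-1\}$, so every odd integer strictly between $q^{m-1}$ and $2n-q^{m-1}$ belongs to the defining set $T$ of ${\C}'$. This is a run of $n-q^{m-1}-1$ consecutive odd residues, i.e.\ $\{\gamma^{1+2s}:q^{m-1}<1+2s<2n-q^{m-1}\}\subseteq T$ with $n-q^{m-1}-1$ values of $s$. Lemma \ref{lem2} then gives $d({\C}')\ge (n-q^{m-1}-1)+1=n-q^{m-1}=\frac{(q-2)q^{m-1}-1}{2}$, which is exactly the claimed bound.

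For the dual, the structural point is that the defining set of ${\C}'^{\perp}$ is the negation of the nonzeros of ${\C}'$, that is $-(C_1^{2n}\cup C_{{\delta_1}'}^{2n})$; but the explicit descriptions above show $-C_1^{2n}=C_{{\delta_1}'}^{2n}$ and $-C_{{\delta_1}'}^{2n}=C_1^{2n}$, so the defining set of ${\C}'^{\perp}$ is again $C_1^{2n}\cup C_{{\delta_1}'}^{2n}$. For the lower bound, $1\in C_1^{2n}$ and $2n-1\in C_{{\delta_1}'}^{2n}$ are two cyclically consecutive odd residues (as $2n-1=1+2(n-1)$ and $1\equiv 1+2n$), so Lemma \ref{lem2} yields $d({\C}'^{\perp})\ge 3$. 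For the upper bound I would feed the $[n,n-2m]$ parameters into Lemma \ref{lem4}: if $d({\C}'^{\perp})\ge 7$ then the Hamming sum already contains $\binom{n}{3}(q-1)^3>q^{2m}$, a contradiction, while if $d({\C}'^{\perp})=6$ the even-distance inequality fails since $\binom{n-1}{2}(q-1)^2>q^{2m-1}$ for every odd prime power $q\ge 3$ and $m\ge 3$. Hence $d({\C}'^{\perp})\le 5$.

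I expect the only delicate step to be the upper bound $d({\C}'^{\perp})\le 5$: one must split into $d=6$ and $d\ge 7$ and verify that the two inequalities of Lemma \ref{lem4} still fail at the tightest admissible parameters (for instance $q=3$, $m=3$, where $\binom{12}{2}(q-1)^2=264>243=q^{2m-1}$ only barely holds). The remaining steps---reading off $C_1^{2n}$ and $C_{{\delta_1}'}^{2n}$, the single-gap BCH estimate for ${\C}'$, and the self-paired structure of the dual defining set---are routine once the congruence ${\delta_1}'\equiv -q^{m-1}\pmod{2n}$ is observed.
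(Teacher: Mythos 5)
Your proposal is correct and follows essentially the same route as the paper: dimension from the two disjoint cosets of size $m$, the BCH bound (Lemma \ref{lem2}) applied to the long run of consecutive odd zeros between $q^{m-1}$ and $2n-q^{m-1}$, the observation that $1$ and $2n-1$ lie in the dual's defining set to get $d({\C}'^{\perp})\geq 3$, and the two sphere-packing inequalities of Lemma \ref{lem4} to rule out $d\geq 7$ and $d=6$. Your explicit descriptions $C_1^{2n}=\{q^j\}$ and $C_{{\delta_1}'}^{2n}=\{2n-q^j\}$ make the identification of $T^{\perp}$ and the count of the consecutive run cleaner than the paper's presentation, but the argument is the same.
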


\begin{proof}
	 Since $1\in \Gamma_{2n}$ and ${\delta_1}'\in \Gamma_{2n}$, then  ${\delta_1}'\notin C_1^{2n}\Rightarrow\lcm (M_1(x),M_{{\delta_1}'}(x))=M_1(x)M_{{\delta_1}'}(x)$.
	From Lemma \ref{lem15}, we have $|C_1^{2n}|=|C_{{\delta_1}'}^{2n}|=m$ $\Rightarrow $ $\dim({\C}')=2m$ and $\dim({\C}'^{\perp})=n-2m$.
	
	Let $$H=\{1+2i:\frac{q^{m-1}+1}{2}\leq i\leq \frac{q^m-q^{m-1}-4}{2}\}.$$
	It is easy to get $H\subset T=\{1+2s:s\in[0,n-1]\}\setminus  (C_1^{2n}\cup C_{{\delta_1}'}^{2n})$, and $\beta ^i$ is a root of ${\C}'$ for any 
	$i\in H$. Hence, $d({\C}')\geq \frac{(q-2)q^{m-1}-1}{2}$ by Lemma \ref{lem2}.

Clearly, $-1,1\in C_1^{2n}\cup T^{\perp}= C_{{\delta_1}'}^{2n}$, then $d({\C}'^{\perp})\geq 3$ by Lemma \ref{lem2}. 
If $d({\C}'^{\perp})= 7$, by Lemma \ref{lem4}, $\sum_{i=0}^{3}(q-1)^{i} \binom{n}{i}> q^{2m}$ ,
this is infeasible, i.e., $d({\C}'^{\perp})\leq 6$. If  $d({\C}'^{\perp})= 6$, 
we have $\sum_{i=0}^{2}(q-1)^{i}\binom{n-1}{i}> q^{2m-1}$, this is infeasible, i.e., $3\leq d({\C}'^{\perp})\leq 5$. Thus we complete the proof.
\end{proof}

	\subsection{The dimension of $\C_{(n,-1,\delta,b)}$ with $b\neq 0 $}\label{set3.3}
		Let $CL(a_1,a_2)=\{1+2i\in [a_1,a_2]:\text{$1+2i\in \Gamma _{2n}$}\}$, 
we will give some results about $\C_{(n,-1,\delta,b)}$ for $b\geq1$.
		
\begin{theorem}\label{th23}
	Let $m\geq 2$, $1\leq 1+2(b+\delta -2)\leq q^{\frac{m+1}{2}}-1$ for $2\nmid m$ and $1\leq 1+2(b+\delta -2)\leq 2q^{\frac{m}{2}}-1$ for $2\mid m$.
	Then $\C_{(n,-1,\delta,b)}=\C_{(n,-1,b+\delta,0)}$ for  $1+2b\leq \left\lfloor \frac{b+\delta -2}{q}\right\rfloor $ 
	and $$\dim(\C_{(n,-1,\delta,b)})=n-m\left\lceil \frac{(2\delta+2b-3)(q-1)}{2q}\right\rceil.$$
	\begin{proof}
		We have $T=\bigcup _{i=b}^{b+\delta-2}C_{1+2i}^{2n}$ for code $\C_{(n,-1,\delta,b)}$. 
		If $2\nmid m$, then $|C_{1+2i}^{2n}|=m$ for $1\leq 1+2i\leq q^{\frac{m+1}{2}}-1$ from Lemma \ref{lem5}.
		If $2\mid m$, since $q^{\frac{m}{2}}+1$ is even, then $|C_{1+2i}^{2n}|=m$ for $1\leq 1+2i\leq 2q^{\frac{m}{2}}-1$ from Lemma \ref{lem6}.
		Hence, for any $m$ we have $$\dim(\C_{(n,-1,\delta,b)})=n-|\bigcup _{i=b}^{b+\delta-2}C_{1+2i}^{2n}|=n-m|CL(1+2b,1+2(b+\delta -2))|.$$
		Next we prove that there exists $t\in Z$ such that $1+2b\leq (1+2a)q^t \leq 1+2(b+\delta -2)$ for any $1+2a$ with $1\leq 1+2a\leq 1+2b$.
		
		Let $t\in Z$ satisfy $(1+2a)q^{t-1} <1+2b\leq (1+2a)q^t$, since $1+2a \leq1+2b$.
		It follows that $$1+2b \leq (1+2a)q^t < (1+2b)q\leq \left\lfloor \frac{b+\delta -2}{q}\right\rfloor q\leq b+\delta -2.$$
		This means 
		$CL(1,1+2(b+\delta -2))=CL(1+2b,1+2(b+\delta -2))$, then $\C_{(n,-1,\delta,b)}=\C_{(n,-1,b+\delta,0)}$.
		
		According to Lemma \ref{lem7}, we can get $|CL(1,1+2(b+\delta -2))|=\left\lceil \frac{(2\delta+2b-3)(q-1)}{2q}\right\rceil$, where $1\leq1+2(b+\delta -2)\leq q^{\frac{m+1}{2}}-1$ for $2\nmid m$ and $1\leq1+2(b+\delta -2)\leq 2q^{\frac{m}{2}}-1$ for $2\mid m$.
		Thus we complete the proof.
	\end{proof}
\end{theorem}
		
		Next we give a theorem about the bound of $d(\C_{(n,-1,\delta,1)})$.
		\begin{theorem}\label{th24}
			Let $(n,q)=1$, $q-1\mid n$, $k \in[1,q-1]$, $\delta _a\in \Z^+$ and $\delta _a\mid \frac{n}{q-1}$.
			Then 
			
			$(1)$ If $2\nmid k$ and $\delta=\frac{k\delta_a-1}{2}$, 
	then $\frac{k\delta _a-1}{2} \leq  d(\C_{(n,-1,\delta,1)})\leq \frac{k+1}{2}\delta _a$.
	
	$(2)$ If $2\mid k$ and $\delta =\frac{k\delta_a}{2}$, 
	then $\frac{k\delta _a}{2} \leq d(\C_{(n,-1,\delta,1)}) \leq (\frac{k}{2}+1)\delta _a$.
			\begin{proof}
				Define $\beta $ to be the $2n$-th root of unity in $\F_{q^l}$ that is primitive, where $l=\ord_{2n}({q})$.
	Clearly, $T=\bigcup _{i=1}^{\delta -1}C_{1+2i}^{2n}$ for code $\C_{(n,-1,\delta,1)}$.
				
	$(1)$ If $2\nmid k$ and $2\delta +1=k\delta_a$, we denote 
	$$f(x)=\frac{x^n+1}{x^{\frac{n}{\delta_a}}+1}\times\prod _{t=1}^{\frac{k-1}{2}} (x^{\frac{n}{\delta _a(q-1)}}-\beta ^{\frac{n(2t-1)}{q-1}}).$$
	Since $g_{(n,-1,\delta,1)}(\beta ^{1+2i})=0$ for $i \in [1,\delta -1]$, then $g_{(n,-1,\delta,1)}(\beta ^{(2t-1)\delta _a})=0$ for $t \in [1,\frac{k-1}{2}]$.
	We can get $f(\beta ^{1+2i})=0$ for $i \in [1,\delta -1]$.  
	Then $f(x)\in \C_{(n,-1,\delta,1)}$.
				
	We know that polynomial $\frac{x^n+1}{x^{\frac{n}{\delta_a}}+1}$ has $\delta _a$ terms in
	expansion and polynomial $\prod _{t=1}^{\frac{k-1}{2}} (x^{\frac{n}{\delta _a(q-1)}}-\beta ^{\frac{n(2t-1)}{q-1}})$ has at most $\frac{k+1}{2}$ terms.
 Hence, $w_H(f)\leq \frac{k+1}{2}\delta _a$.
	We know that $d(\C_{(n,-1,\delta,1)})\geq \delta=\frac{k\delta_a-1}{2}$ from Lemma \ref{lem2}.
	Hence, $\frac{k\delta _a-1}{2} \leq d(\C_{(n,-1,\delta,1)}) \leq \frac{k+1}{2}\delta _a$.
				
$(2)$ If $2\mid k$ and $2\delta =k\delta_a$, denote 
$$f(x)=\frac{x^n+1}{x^{\frac{n}{\delta_a}}+1}\times\prod _{t=1}^{\frac{k}{2}} (x^{\frac{n}{\delta _a(q-1)}}-\beta ^{\frac{n(2t-1)}{q-1}}).$$
Similar to $(1)$, then $f(x)\in \C_{(n,-1,\delta,1)}$ and 
$w_H(f)\leq (\frac{k}{2}+1)\delta _a$.
We know that $d(\C_{(n,-1,\delta,1)})\geq \delta=\frac{k\delta_a}{2}$ from Lemma \ref{lem2}.
Hence, $\frac{k\delta _a}{2} \leq d(\C_{(n,-1,\delta,1)}) \leq (\frac{k}{2}+1)\delta _a$.
			\end{proof}
		\end{theorem}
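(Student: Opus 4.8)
This is a two-sided estimate, and I would separate it into the (easy) BCH lower bound and the (real) upper bound coming from an explicit sparse codeword.

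For the lower bound, the defining set of $\C_{(n,-1,\delta,1)}$ is $T=\bigcup_{i=1}^{\delta-1}C_{1+2i}^{2n}$, so with $\gamma$ a primitive $2n$-th root of unity the $\delta-1$ consecutive odd powers $\gamma^{3},\gamma^{5},\dots,\gamma^{2\delta-1}$ are roots of every codeword. Lemma~\ref{lem2} then gives $d(\C_{(n,-1,\delta,1)})\ge\delta$ at once, and substituting $\delta=\frac{k\delta_a-1}{2}$ in $(1)$ and $\delta=\frac{k\delta_a}{2}$ in $(2)$ produces the two stated lower bounds. This half is pure bookkeeping.

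For the upper bound I would build a nonzero codeword $f\in\C_{(n,-1,\delta,1)}$ of the advertised weight and use $d\le w_H(f)$ (Hamming weight). The natural candidate is a product $f=h\cdot p$ of two sparse factors. The first, $h(x)=\frac{x^{n}+1}{x^{n/\delta_a}+1}$, makes sense because $\delta_a\mid\frac{n}{q-1}$ forces $\delta_a\mid n$ and (for $\delta_a$ odd) $x^{n/\delta_a}+1\mid x^{n}+1$; its expansion $\sum_{j=0}^{\delta_a-1}(-1)^{j}x^{jn/\delta_a}$ is $\F_q$-rational with exactly $\delta_a$ terms. A one-line root check shows $h(\gamma^{e})=0$ for every odd $e$ with $\delta_a\nmid e$, so $h$ annihilates all of $T$ except the odd multiples of $\delta_a$ lying in $T$. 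Writing $\omega=\gamma^{n/(q-1)}$, those surviving exponents form exactly the full $q$-cyclotomic cosets of $(2t-1)\delta_a$, and under $x\mapsto x^{n/(\delta_a(q-1))}$ they are carried to the odd powers $\omega^{2t-1}$; so I would take $p$ to be precisely the product of the minimal polynomials killing these, and bound $w_H(f)\le\delta_a\cdot(\#\text{terms of }p)$.

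The step I expect to be the obstacle is forcing $p$ to be simultaneously $\F_q$-rational, vanishing on the \emph{whole} cosets (not merely at the representatives $\gamma^{(2t-1)\delta_a}$), and sparse. The structural fact that rescues all three is $\omega^{q-1}=\gamma^{n}=-1$, so $\omega$ has order $2(q-1)$ and $\omega^{q}=-\omega$; hence $\omega^{2t-1}$ is $q$-conjugate to $-\omega^{2t-1}$, the pair has the \emph{binomial} minimal polynomial $x^{2n/(\delta_a(q-1))}-\omega^{4t-2}$ with $\omega^{4t-2}=(\omega^{2})^{2t-1}\in\F_q$, and the full product $p=\prod_{t}\bigl(x^{2n/(\delta_a(q-1))}-\omega^{4t-2}\bigr)$ is then a genuine $\F_q$-polynomial in $x^{2n/(\delta_a(q-1))}$ of degree $\frac{k-1}{2}$ in case $(1)$ (resp.\ $\frac{k}{2}$ in case $(2)$), hence with at most $\frac{k+1}{2}$ (resp.\ $\frac{k}{2}+1$) terms. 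Multiplying by $h$ and checking $\deg f<n$ so that no reduction modulo $x^{n}+1$ occurs gives $w_H(f)\le\frac{k+1}{2}\delta_a$ (resp.\ $(\frac{k}{2}+1)\delta_a$), finishing the proof. I would stay alert to the small cases (notably $\delta_a=1$) and to the parity of $\delta_a$, since both the factorization of $h$ and the membership $(2t-1)\delta_a\in T$ quietly use $\delta_a$ odd.
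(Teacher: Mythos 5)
Your proposal follows the same two-step strategy as the paper: the BCH bound (Lemma~\ref{lem2}) for the lower bound, and for the upper bound an explicit sparse codeword $f=\frac{x^n+1}{x^{n/\delta_a}+1}\cdot p(x)$ with $p$ a product of about $k/2$ binomials, the weight being bounded by the product of the term counts of the two factors. The one place you diverge is the construction of $p$, and there your version is the more careful one. The paper takes $p=\prod_{t}\bigl(x^{n/(\delta_a(q-1))}-\beta^{n(2t-1)/(q-1)}\bigr)$; since $\omega=\beta^{n/(q-1)}$ has order $2(q-1)$, the constants $\omega^{2t-1}$ do not lie in $\F_q$, and the set $\{\omega^{2t-1}\}$ is not Frobenius-stable (raising to the $q$-th power sends $\omega^{2t-1}$ to $-\omega^{2t-1}$, which for $k\geq 3$ is not among the chosen representatives), so the paper's $f$ is not literally an $\F_q$-polynomial and the assertion $f\in\C_{(n,-1,\delta,1)}$ is not justified as written. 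Your replacement of each linear factor by the minimal polynomial $x^{2n/(\delta_a(q-1))}-\omega^{4t-2}$ of the conjugate pair $\{\pm\omega^{2t-1}\}$ repairs exactly this: $\omega^{4t-2}=(\omega^{2})^{2t-1}\in\F_q$ because $\omega^{2}$ has order $q-1$, the product is $\F_q$-rational, vanishing on the full cyclotomic cosets then comes for free, the degree of $f$ stays below $n$ because $k\leq q-1$, and the term count (hence the weight bound $\frac{k+1}{2}\delta_a$, resp. $(\frac{k}{2}+1)\delta_a$) is unchanged. The caveat you flag about the parity of $\delta_a$ is real and is shared with the paper: in case $(1)$ the relation $2\delta+1=k\delta_a$ with $k$ odd forces $\delta_a$ odd, but in case $(2)$ nothing prevents $\delta_a$ from being even, in which case $x^{n/\delta_a}+1$ does not divide $x^{n}+1$ and the construction breaks down in both versions; neither you nor the paper resolves that subcase.
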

		
		\begin{example} 
			Let $(q,m)=(3,4)$, $n=\frac{q^m-1}{2}$ , $k=1$ and $\delta_a=10$. Then $\delta=5$ and the code $\C_{(40,-1,5,1)}$
			has parameters $[40,28,6]$, this is a best known linear code for these parameters in the Database \cite{RefJ9}.
		\end{example}
		
		\section{The parameter of cyclic BCH code of length $n=\frac{q^m-1}{2}$}\label{set4}
		Let $n=\frac{q^m-1}{2}$ with $m\geq 2$. 
		We will study the extended code and dual code of cyclic code.

\subsection{Extended code $\overline{\C}$ and dual code $\C^{\perp}$}\label{4.1}
	
For $\delta_2\leq \delta \leq \delta_1$, we discuss the weight distribution of $\overline{\C}_{(n,1,\delta,1)}$ and the parameters 
of $\C^{\perp}_{(n,1,\delta,1)}$.
\begin{lemma} \label{lem26} \cite{RefJ35}
	Let $n=\frac{q^m-1}{2}$. Then 
	$$\delta_1=\frac{q^m-1-q^{m-1}-q^{\left\lfloor \frac{m-1}{2} \right\rfloor }}{2} \quad \text{and} \quad
	\delta_2=\frac{q^m-1-q^{m-1}-q^{\left\lfloor \frac{m+1}{2} \right\rfloor }}{2}.$$
	Moreover, $|C_{\delta_1}^n|=\begin{cases} m,&\text{if $2\nmid m;$}\\
	\frac{m}{2},&\text{if $2\mid m,$}\end{cases}$ and $|C_{\delta_2}^n|=m$.
		\end{lemma}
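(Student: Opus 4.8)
The plan is to transfer the whole question to coset leaders modulo $q^m-1$, where Lemma~\ref{lem1} is available. Since $q$ is odd and $2n=q^m-1$, for every $t\in\Z_n$ and every $j$ one has $[2tq^j]_{2n}=2[tq^j]_n$, so $C_{2t}^{2n}=2\,C_t^{n}$ consists only of even residues. Consequently $t\mapsto 2t$ is an order-preserving bijection from $\Gamma_n$ onto the set of even coset leaders modulo $q^m-1$, and $|C_t^{n}|=|C_{2t}^{2n}|$. Thus $\delta_i$ is the $i$-th largest coset leader modulo $n$ iff $2\delta_i$ is the $i$-th largest even coset leader modulo $q^m-1$, and the two cyclotomic cosets have equal size; both claims of the lemma then become statements in $\Gamma_{q^m-1}$.

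Next I would record the $q$-adic expansions
$$2\delta_1=q^m-1-q^{m-1}-q^{\lfloor\frac{m-1}{2}\rfloor},\qquad 2\delta_2=q^m-1-q^{m-1}-q^{\lfloor\frac{m+1}{2}\rfloor},$$
so that, for $m\ge 4$, each is the all-$(q-1)$ word with the digit at position $m-1$ and at one interior position lowered to $q-2$. Because every $q^i$ is odd, a word is even exactly when the number of lowered digits is even, so both $2\delta_1$ and $2\delta_2$ are even. That they lie in $\Gamma_{q^m-1}$ is a direct check of $[2\delta_i\,q^j]_{q^m-1}\ge 2\delta_i$ for all $j$ via Lemma~\ref{lem1}, carried out exactly as in the shift computations in the proof of Lemma~\ref{lem16}. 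The collisions at $m\in\{2,3\}$, where the two lowerings meet at the top and create a leading digit $q-3$, are verified by hand.

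The core of the argument is maximality. A \emph{proper} coset leader cannot have leading digit $q-1$ (that forces the excluded word $q^m-1$), and for a coset leader the leading digit equals the least digit; hence its top digit is $q-2$ and all its digits lie in $\{q-2,q-1\}$. Writing the value as $q^m-1$ minus the sum of $q^i$ over the lowered positions, I would first observe that any coset leader with at least two lowered digits has its second-highest lowered position $\ge\lfloor\frac{m-1}{2}\rfloor$, because the shift condition of Lemma~\ref{lem1} forbids the lowered digits from clustering near the bottom; together with the top digit this shows the only coset leader exceeding $2\delta_1$ is the single-lowering word $q^m-q^{m-1}-1$, which is odd. Hence $2\delta_1$ is the largest even coset leader. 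For second place I would rule out any even coset leader in $(2\delta_2,2\delta_1)$: a two-lowering leader in this range would need its interior lowered position strictly between $\lfloor\frac{m-1}{2}\rfloor$ and $\lfloor\frac{m+1}{2}\rfloor$, which is impossible since these floors are consecutive integers; and a leader with $2r\ge 4$ lowerings is excluded because Lemma~\ref{lem1} forces its first run of $(q-1)$'s to be the shortest, pushing the second-highest lowered position up to about $\frac{3m}{4}$ and hence its value below $2\delta_2$ (for $m\ge 6$; the cases $m\le 5$ are done directly). This exclusion of many-lowering words is the step I expect to be the main obstacle, since such words could a priori inflate the value through low-order lowerings, and the bound requires a case analysis on the number of lowerings and on the residue of $m$, in the spirit of the proof of Lemma~\ref{lem16}.

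Finally, the coset sizes come from the period of the associated word under cyclic shift, since $|C_{\delta_i}^{n}|=|C_{2\delta_i}^{2n}|$ equals the least $d$ for which shifting by $d$ fixes the word; with exactly two lowered digits this period is a proper divisor of $m$ only if the two positions are $\frac{m}{2}$ apart. For $2\delta_1$ the positions $m-1$ and $\lfloor\frac{m-1}{2}\rfloor$ differ by $\frac{m}{2}$ precisely when $m$ is even, giving period $\frac{m}{2}$, whereas for odd $m$ the period is $m$; this yields $|C_{\delta_1}^{n}|=\frac{m}{2}$ for even $m$ and $m$ for odd $m$. For $2\delta_2$ the positions $m-1$ and $\lfloor\frac{m+1}{2}\rfloor$ never differ by $\frac{m}{2}$, so the period is always $m$ and $|C_{\delta_2}^{n}|=m$.
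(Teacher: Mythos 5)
Your proposal is sound, but note that the paper offers no proof of this statement at all: Lemma~\ref{lem26} is imported verbatim from \cite{RefJ35} as a known result, so there is nothing internal to compare against. Judged on its own terms, your argument is correct in outline and all of its key inequalities check out. The opening reduction is the right move: since $2n=q^m-1$ and doubling commutes with multiplication by $q$, the even cosets modulo $q^m-1$ are exactly the sets $2C_t^{n}$, so $\delta_i$ is the $i$-th largest element of $\Gamma_n$ iff $2\delta_i$ is the $i$-th largest even element of $\Gamma_{q^m-1}$, with equal coset sizes; this puts you squarely in the setting of Lemma~\ref{lem1} and the digit-shift computations the paper itself uses in Lemma~\ref{lem16}. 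Your two structural facts are both valid: the first-gap-is-minimal consequence of Lemma~\ref{lem1} gives $k_1\geq\lfloor\frac{m-1}{2}\rfloor$ for the second-highest lowered position of any leader with digits in $\{q-2,q-1\}$ (which pins down $2\delta_1$ once the lone single-lowering leader $q^m-q^{m-1}-1$ is discarded as odd, consistent with ${\delta_1}'$ in Lemma~\ref{lem15}), and the same gap inequality with $s\geq4$ lowerings forces $k_1\geq m-1-\frac{m}{s}$, which does push such words below $2\delta_2$ for $m\geq6$; the consecutive-floors observation disposes of two-lowering competitors in $(2\delta_2,2\delta_1)$. The period argument for the coset sizes is also correct. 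One presentational caveat: your sentence ``hence its top digit is $q-2$ and all its digits lie in $\{q-2,q-1\}$'' is false for coset leaders in general and needs the explicit restriction to leaders exceeding $2\delta_2$ together with $m\geq4$ (any $v>2\delta_2$ with leading digit at most $q-3$ would satisfy $v\leq q^m-2q^{m-1}-1<2\delta_2$), after which it is fine; similarly, the digit collision at $m\in\{2,3\}$ affects only $2\delta_2$, not $2\delta_1$. These are fixable wording issues, not gaps.
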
	

\begin{theorem}\label{th27}
	Let $2\mid m$ and $\delta _2+1\leq \delta \leq \delta _1$. Then $\overline{\C}_{(n,1,\delta,1)}$ is an 
	$[\frac{q^m-1}{2}+1,\frac{m}{2}+1,\delta_1+1]$ extended code and $\C^{\perp}_{(n,1,\delta,1)}$ is an 
	$[\frac{q^{m}-1}{2},\frac{q^{m}-1}{2}-\frac{m}{2}-1,d]$ dual code, where $$d=\begin{cases}
		3,&\text{if $q=3$ and $m=2;$}\\2,&\text{others}.\\
	\end{cases}$$
	Moreover, Table \ref{tab:1} gives the weight distribution of $\overline{\C}_{(n,1,\delta,1)}$.	
\end{theorem}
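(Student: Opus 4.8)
The plan is to pin down the defining set of $\C_{(n,1,\delta,1)}$, read off its dimension and an explicit trace description, compute the weight distribution of the extension from scratch, and then dispatch the dual by elementary means.

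First I would fix the defining set. By Lemma \ref{lem26}, $\delta_1$ is the largest and $\delta_2$ the second largest coset leader, so for every $\delta$ with $\delta_2+1\le\delta\le\delta_1$ the set $T=\bigcup_{i=1}^{\delta-1}C_i^n$ consists of all cosets with leader at most $\delta-1$; since no coset leader lies strictly between $\delta_2$ and $\delta_1$, this set is independent of $\delta$ in the stated window and equals $\Z_n\setminus(C_0^n\cup C_{\delta_1}^n)$. The nonzeros of $\C_{(n,1,\delta,1)}$ are therefore $\beta^0$ together with the conjugacy class of $\beta^{\delta_1}$, whence $\dim\C_{(n,1,\delta,1)}=1+|C_{\delta_1}^n|=\frac m2+1$ by Lemma \ref{lem26}. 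The extension preserves dimension, so $\overline{\C}_{(n,1,\delta,1)}$ has length $\frac{q^m-1}{2}+1$ and dimension $\frac m2+1$, and $\C^{\perp}_{(n,1,\delta,1)}$ has dimension $\frac{q^m-1}{2}-\frac m2-1$.

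Next I would invoke the trace representation of Lemma \ref{lem3}. Put $\theta=\beta^{\delta_1}$. Using $\beta=\alpha^2$ and $2\delta_1=q^m-1-q^{m-1}-q^{m/2-1}$, one finds $\gcd(2\delta_1,q^m-1)=\gcd(q^m-1,q^{m/2-1}(q^{m/2}+1))=q^{m/2}+1$, so $\theta$ has order $q^{m/2}-1$ and hence generates $\F_{q^{m/2}}^{*}$. Lemma \ref{lem3} then writes every codeword as $c(a_1,a_2)=\bigl(a_1+\mathrm{Tr}_q^{q^{m/2}}(a_2\theta^{-l})\bigr)_{l=0}^{n-1}$ with $a_1\in\F_q$, $a_2\in\F_{q^{m/2}}$. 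The combinatorial key is that as $l$ ranges over $[0,n-1]$ the element $\theta^{-l}$ runs over $\F_{q^{m/2}}^{*}$, each value occurring exactly $n/(q^{m/2}-1)=\frac{q^{m/2}+1}{2}$ times. Thus the weight of $c(a_1,a_2)$ equals $\frac{q^{m/2}+1}{2}$ times the number of $x\in\F_{q^{m/2}}^{*}$ with $a_1+\mathrm{Tr}(a_2x)\ne0$, which I would evaluate by the standard trace-fibre count: for $a_2\ne0$ the substitution $y=a_2x$ permutes $\F_{q^{m/2}}^{*}$, and $|\{y\in\F_{q^{m/2}}^{*}:\mathrm{Tr}(y)=-a_1\}|$ equals $q^{m/2-1}$ if $a_1\ne0$ and $q^{m/2-1}-1$ if $a_1=0$. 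This partitions the nonzero words of $\C_{(n,1,\delta,1)}$ into four weight classes according to the vanishing of $a_1$ and of $a_2$. To pass to the extension I would compute the parity symbol $-\sum_l c_l$: since $\sum_{x\in\F_{q^{m/2}}^{*}}x=0$ the trace part sums to zero and the symbol equals $-na_1$, which is nonzero precisely when $a_1\ne0$ (here $n\equiv-2^{-1}\not\equiv0\pmod q$ is a unit). Adding $1$ or $0$ accordingly yields the three nonzero weights $\frac{q^m+1}{2}$ (multiplicity $q-1$), $\frac{q^m-q^{m-1}+q^{m/2}-q^{m/2-1}}{2}$ (multiplicity $q^{m/2}-1$), and $\delta_1+1=\frac{q^m-q^{m-1}-q^{m/2-1}+1}{2}$ (multiplicity $(q-1)(q^{m/2}-1)$); these three totals plus the zero word sum to $q^{m/2+1}$, and comparing them shows $\delta_1+1$ is smallest, giving both the table and $d(\overline{\C}_{(n,1,\delta,1)})=\delta_1+1$.

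Finally, the dual has defining set $C_0^n\cup C_{n-\delta_1}^n$. Since $\beta^0$ is a root of its generator, every codeword of $\C^{\perp}_{(n,1,\delta,1)}$ has zero coordinate-sum, so there is no weight-one word and $d\ge2$; writing a parity check whose nontrivial block is $1,\eta,\dots,\eta^{n-1}$ with $\eta=\beta^{-\delta_1}$ of order $q^{m/2}-1<n$, any two indices congruent modulo $q^{m/2}-1$ produce identical columns and hence a weight-two word, giving $d=2$ throughout the generic range. I expect the weight enumeration of the third paragraph to be the technical heart of the whole argument; for the dual, the only subtle point, and the main obstacle to a uniform statement, is the boundary case $q=3$, $m=2$, where the window collapses to the single value $\delta=2$ and the code is small, so I would settle it by computing $\C^{\perp}_{(n,1,\delta,1)}=\langle x^2-1\rangle$ explicitly and reading its distance off directly. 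If a cleaner separation of the distance in such borderline cases is wanted, the sphere-packing inequality of Lemma \ref{lem4} can be used to rule out larger values.
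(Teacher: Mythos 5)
Your argument is correct and, for the extended code, follows essentially the same route as the paper: both reduce to the trace representation $\bm c(a,b)=(Tr_q^{q^{m/2}}(a\theta^{-l})+b)_{l=0}^{n-1}$ with $\theta$ of order $q^{m/2}-1$, and both observe that the appended parity symbol is $-nb$ with $n$ a unit modulo $q$, so that exactly the words with $b\neq 0$ gain one unit of weight. You carry out the fibre count over $\F_{q^{m/2}}^{*}$ explicitly where the paper merely tabulates the result, but the content is the same. The genuine divergence is in the dual: the paper proves $d\geq 2$ from $0\in T^{\perp}$ and then excludes $d\geq 3$ via the sphere-packing inequality of Lemma \ref{lem4}, whereas you exhibit a weight-two codeword directly from the fact that $\eta=\beta^{-\delta_1}$ has order $q^{m/2}-1<n$, so two columns of the check matrix of $\C^{\perp}_{(n,1,\delta,1)}$ coincide. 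Your construction is preferable: it is constructive, uniform in $q$ and $m$, and does not degenerate at the boundary, while the paper's packing bound becomes an equality at $(q,m)=(3,2)$ and therefore proves nothing there.

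This brings out one point you should make explicit rather than defer. Your weight-two construction --- and your own closing computation $\C^{\perp}_{(4,1,2,1)}=\langle x^2-1\rangle$, whose generator already has Hamming weight $2$ --- gives $d=2$ also when $q=3$ and $m=2$: concretely $\C_{(4,1,2,1)}=\{(a,b,a,b):a,b\in\F_3\}$ and $(1,0,-1,0)$ lies in its dual. So your proof establishes $d=2$ in every case, which contradicts the value $d=3$ claimed in the statement for $(q,m)=(3,2)$. This is not a gap in your argument but an error in the exceptional case of the theorem; the paper's justification there amounts to noting that sphere-packing permits $d=3$, which does not show that $d=3$ is attained. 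State the conclusion $d=2$ uniformly instead of leaving the borderline case to be "read off directly".
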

\begin{proof}
	Since 1 and  $\beta ^{\delta_1}$ are zeroes of parity-check polynomial of $\C$ and are non-conjugate roots of its.
	From Lemmas \ref{lem3} and \ref{lem26}, we have 
	$$\begin{aligned}
		\C_{(n,1,\delta,1)}=\{(Tr_q^{q^{\frac{m}{2}}}(a\beta^{-\delta_{1}i})+b)_{i=0}^{n-1}: a\in \F_{q^{\frac{m}{2}}},b\in \F_{q}\}\\ \end{aligned},$$ 
		 its weight distribution is equal to
		\begin{equation}\label{eq1}
\{\bm c(a,b)=(Tr_q^{q^{\frac{m}{2}}}(a\alpha^{(q^{\frac{m}{2}}+1)i})+b)_{i=0}^{n-1}:a\in \F_{q^{\frac{m}{2}}},b\in \F_{q}\}.\end{equation}
Note that $\widetilde{\C}_{(n,1 ,\delta,1)}=\langle \widetilde{g}_{(n,1,\delta,1)}(x)\rangle$, where $ \widetilde{g}_{(n,1,\delta,1)}(x)=(x-1)g_{(n,1,\delta,1)}(x)$ and $\widetilde{g}_{(n,1,\delta,1)}(1)=0$, i.e., 
	$$\sum_{i=0}^{n-1}Tr_q^{q^{\frac{m}{2}}}(a\alpha^{(q^{\frac{m}{2}}+1)i})=0.$$
	 
	We have $$\sum_{i=0}^{n-1}(Tr_q^{q^{\frac{m}{2}}}(a\alpha^{(q^{\frac{m}{2}}+1)i})+b)
	=\sum_{i=0}^{n-1}Tr_q^{q^{\frac{m}{2}}}(a\alpha^{(q^{\frac{m}{2}}+1)i})+nb=nb$$ for any codeword $\bm c(a,b)$ of (\ref{eq1}).
	Since $b\in \F_q$, then $nb=(\frac{q^{m-1}-1}{2}q+\frac{q-1}{2})b=\frac{q-1}{2}b$. 
	By the definition of extended code, 
	we know $\overline{\bm c}(a,b)=(\bm c(a,b), -\frac{q-1}{2}b)$ is the codeword of $\overline{\C}_{(n,1,\delta,1)}$,
	where $a \in \F_{q^\frac{m}{2}}$ and $b \in \F_q$. Clearly, $wt(\overline{\bm c}(a,b))=wt(\bm c(a,b))+\varepsilon $
	where $$\varepsilon =\begin{cases}0,&\text{if $b=0;$}\\1,&\text{if $b\neq 0.$}\end{cases}$$
	Thus, we give the weight distribution in Table \ref{tab:1}.

\begin{table}
	\caption{}
	\label{tab:1}
	\begin{center}
		\begin{tabular}{cc}
			\hline
			Weight   & Frequency \\
			\hline
			0&1\\
			$\frac{q^m-q^{m-1}-q^{\frac{m}{2}-1}+1}{2}$ & $(q-1)(q^{\frac{m}{2}}-1)$\\
			$\frac{(q-1)(q^{m-1}+q^{\frac{m}{2}-1})}{2}$ & $q^{\frac{m}{2}}-1$   \\
			$\frac{q^m+1}{2}$ & $q-1$\\
			\hline
		\end{tabular}
	\end{center}
\end{table}	
	For code $\C^{\perp}_{(n,1,\delta,1)}$ with $\delta_2+1\leq \delta \leq \delta_1$, we have 
$T^{\perp}=(\Z_{n}\backslash T)^{-1}=C_{0}^n\cup C_{CL(n-\delta_1)}^n$. Then $d(\C^{\perp}_{(n,1,\delta,1)})\geq2$. 
			
Since $n-\delta_1=\frac{q^{m-1}+q^{\frac{m}{2}-1}}{2}$, then $CL(n-\delta_1)=\frac{q^{\frac{m}{2}}+1}{2}$ and 
$$C_{\frac{q^{\frac{m}{2}}+1}{2}}^n=\{\frac{(1+q^{\frac{m}{2}})q^i}{2}:0\leq i\leq \frac{m}{2}-1\}.$$
Thus $|C_{\frac{q^{\frac{m}{2}}+1}{2}}^n|=\frac{m}{2}$, and $\dim(\C^{\perp}_{(n,1,\delta,1)})=n-\frac{m}{2}-1$.
			
For $q>3$, if $d(\C^{\perp}_{(n,1,\delta,1)})=3$, we have $$\sum_{i=0}^{1}(q-1)^{i}\binom{n}{i}> q^{\frac{m}{2}+1}$$ by Lemma \ref{lem4}, this is infeasible. 
Hence, $ d({\C^{\perp}_{(n,1,\delta,1)}})=2$. 
For the case of $ q=3$, we can get the result as above.
\end{proof}
\begin{theorem}\label{th28}
	Let $2\nmid m$ and $\delta _2+1\leq \delta \leq \delta _1$. Then $\overline{\C}_{(n,1,\delta,1)}$ is an 
	$[\frac{q^m-1}{2}+1,m+1,\delta_1+1]$ extended code and $\C^{\perp}_{(n,1,\delta,1)}$ is an $[\frac{q^{m}-1}{2},\frac{q^{m}-1}{2}-m-1,d]$ dual code,
	where $$\begin{cases}3 \leq d\leq4,&\text{if $q=3$};\\
		d=3,&\text{if $q>3$} .\\\end{cases}$$
	Moreover, Table \ref{tab:2} gives the weight distribution of $\overline{\C}_{(n,1,\delta,1)}$.	
\end{theorem}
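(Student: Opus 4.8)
The plan is to mirror the proof of Theorem \ref{th27}, adjusting for the odd-$m$ cyclotomic coset sizes. First I would note that $1$ and $\beta^{\delta_1}$ are non-conjugate roots of the parity-check polynomial for every $\delta$ in the stated range, and that by Lemma \ref{lem26} one has $|C_0^n|=1$ and $|C_{\delta_1}^n|=m$ when $2\nmid m$. Hence $\dim(\C_{(n,1,\delta,1)})=m+1$, and Lemma \ref{lem3} gives the trace representation
$$\C_{(n,1,\delta,1)}=\{(Tr_q^{q^{m}}(a\beta^{-\delta_1 i})+b)_{i=0}^{n-1}:a\in\F_{q^m},\,b\in\F_q\}.$$
Writing $\beta=\alpha^2$ and inserting $\delta_1=\frac{q^m-1-q^{m-1}-q^{(m-1)/2}}{2}$ from Lemma \ref{lem26}, I would simplify $\beta^{-\delta_1}=\alpha^{q^{m-1}+q^{(m-1)/2}}$. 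The extension coordinate is then handled exactly as in Theorem \ref{th27}: since $\widetilde g_{(n,1,\delta,1)}(1)=0$ the trace part sums to zero, so $\sum_{i}(Tr_q^{q^m}(a\beta^{-\delta_1 i})+b)=nb=\frac{q-1}{2}b$, and $\overline{\bm c}(a,b)=(\bm c(a,b),-\frac{q-1}{2}b)$ with $wt(\overline{\bm c}(a,b))=wt(\bm c(a,b))+\varepsilon$, where $\varepsilon=1$ iff $b\neq 0$.

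The essential new ingredient, and the step I expect to be the main obstacle, is the weight distribution of the $\bm c(a,b)$, which does \emph{not} reduce to a subfield as in the even case. Setting $c=q^{m-1}+q^{(m-1)/2}$, a short gcd computation shows $\gcd(c,q^m-1)=\gcd(q^{(m-1)/2}+1,\,q-1)=2$, so $\alpha^{c}$ has order $n$ and $\{\alpha^{ci}:0\le i\le n-1\}$ is exactly the set $S$ of nonzero squares of $\F_{q^m}$. Thus the entries of $\bm c(a,b)$ are $Tr_q^{q^m}(a\eta)+b$ as $\eta$ runs over $S$, so $wt(\bm c(a,b))=n-N(a,b)$ with $N(a,b)=|\{\eta\in aS:Tr_q^{q^m}(\eta)=-b\}|$, and only the three regimes $a=0$, $a\in S$, $a\notin S$ occur. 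I would then compute the difference $M_0(t)-M_1(t)$ between the number of trace-$t$ elements among the squares and among the non-squares by expanding the indicator $\tfrac12(1+\chi(\eta))$ of $S$ into additive characters, where $\chi$ is the quadratic character of $\F_{q^m}$. For odd $m$ the restriction of $\chi$ to $\F_q^*$ coincides with the quadratic character of $\F_q$, and the Davenport--Hasse relation together with $g(\chi)^2=\chi(-1)q$ makes the $\sqrt q$ contributions cancel, yielding $M_0(t)-M_1(t)=\pm\,\chi(t)\,q^{(m-1)/2}$ for $t\neq 0$ and $M_0(0)=M_1(0)=\frac{q^{m-1}-1}{2}$.

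Combining these counts over the three regimes for $a$ and the two cases $b=0$, $b\neq0$, and adding $\varepsilon$ from the extension, produces the weights $0$, $\frac{(q-1)q^{m-1}}{2}$, $\frac{q^m+1}{2}$, and the two shifted values $\frac{(q-1)q^{m-1}-1\mp\chi(-b)q^{(m-1)/2}}{2}+1$. The remaining work is the bookkeeping of frequencies, which I would organize by splitting the $q-1$ choices of $b\neq0$ according to the sign of $\chi(-b)$ and by separating $a\in S$ from $a\notin S$; this fills in Table \ref{tab:2} and gives $\dim=m+1$ and minimum weight $\delta_1+1$ for $\overline{\C}_{(n,1,\delta,1)}$.

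For $\C^\perp_{(n,1,\delta,1)}$ I would first record $T^{\perp}=(\Z_n\setminus T)^{-1}=C_0^n\cup C_{CL(n-\delta_1)}^n$ and, computing $n-\delta_1$ and its coset leader, obtain $\dim(\C^\perp_{(n,1,\delta,1)})=n-m-1$. For the minimum distance, the lower bound $d\geq 3$ follows from the fact, established above, that $\beta^{-\delta_1}$ has order $n$: a weight-$2$ codeword would force $c_i+c_j=0$ through the all-ones row and hence $\beta^{-\delta_1 i}=\beta^{-\delta_1 j}$ with $0\le i<j\le n-1$, which is impossible, while weight $1$ is excluded by the same all-ones row. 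This is precisely where the odd-$m$ case departs from Theorem \ref{th27}, in which $\beta^{-\delta_1}$ has order $q^{m/2}-1<n$ and collisions do occur. The upper bounds come from Lemma \ref{lem4}: the first inequality with $d=5$ rules out $d\ge 5$ for all $q$, and the even-$d$ inequality with $d=4$, whose failure requires $\frac{q-1}{2}>1$, rules out $d=4$ exactly when $q>3$. Hence $d=3$ for $q>3$ and $3\leq d\leq 4$ for $q=3$, as claimed.
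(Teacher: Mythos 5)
Your proposal is correct and follows essentially the same route as the paper: the same trace representation (your $\alpha^{q^{m-1}+q^{(m-1)/2}}$ is the $q$-power conjugate of the paper's $\alpha^{q^{(m+1)/2}+1}$), the same handling of the extension coordinate via $nb=\frac{q-1}{2}b$, the same identification $CL(n-\delta_1)=\frac{q^{(m-1)/2}+1}{2}$ giving $\dim(\C^{\perp})=n-m-1$, the same $\gcd(q^{(m-1)/2}+1,q^m-1)=2$ argument for $d\geq 3$ (phrased via the order of $\beta^{-\delta_1}$ rather than via two codewords differing in two coordinates), and the same applications of Lemma \ref{lem4} separating $q=3$ from $q>3$. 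The only substantive difference is that you derive Table \ref{tab:2} from scratch with a quadratic Gauss-sum count of trace values on squares and non-squares, whereas the paper simply asserts the table "similar to Theorem \ref{th27}" (relying on the known weight distribution from \cite{RefJ35}); your version is more self-contained but reaches the identical distribution.
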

\begin{proof}
	If $2\nmid m$, using the same approach as Theorem \ref{th27} for code $\C_{(n,1,\delta,1)}$,
	its weight distribution is equal to
	\begin{equation}\label{eq2}
		\{\bm c(a,b)=(Tr_q^{q^{m}}(a\alpha^{(q^{\frac{m+1}{2}}+1)i})+b)_{i=0}^{n-1}:a\in \F_{q^{m}},b\in \F_{q}\}\end{equation}
	 and
$$\sum_{i=0}^{n-1}Tr_q^{q^{m}}(a\alpha^{(q^{\frac{m+1}{2}}+1)i})=0.$$
We have $$\sum_{i=0}^{n-1}(Tr_q^{q^{m}}(a\alpha^{(q^{\frac{m+1}{2}}+1)i})+b)=\sum_{i=0}^{n-1}Tr_q^{q^{m}}(a\alpha^{(q^{\frac{m+1}{2}}+1)i})+nb=\frac{q-1}{2}b$$
		for any codeword $\bm c(a,b)$ of $(2)$. Similar to Theorem \ref{th27}, we directly give the weight distribution in Table \ref{tab:2}.

	\begin{table}
		\caption{}
		\label{tab:2}
		\begin{center}
			\begin{tabular}{cc}
				\hline
				Weight  & Frequency \\
				\hline
				0&1\\
				$\frac{q^m-q^{m-1}-q^{\frac{m-1}{2}}+1}{2}$ & $\frac{(q-1)(q^m-1)}{2}$\\
				$\frac{q^m-q^{m-1}}{2}$ & $q^m-1$\\
				$\frac{q^m-q^{m-1}+q^{\frac{m-1}{2}}+1}{2}$ & $\frac{(q-1)(q^m-1)}{2}$\\
				$\frac{q^m+1}{2}$ & $q-1$\\
				\hline
			\end{tabular}
		\end{center}
	\end{table}

	For code $\C^{\perp}_{(n,1,\delta,1)}$ with $\delta_2+1\leq \delta \leq \delta_1$, we have $T^{\perp}=(\Z_{n}\backslash T)^{-1}=C_{0}^n\cup C_{CL(n-\delta_1)}^n$. 
	Thus $d(\C^{\perp}_{(n,1,\delta,1)})\geq2$. 
			
Since $n-\delta_1=\frac{q^{m-1}+q^{\frac{m-1}{2}}}{2}$,
then $CL(n-\delta_1)=\frac{q^{\frac{m-1}{2}}+1}{2}$ and 
$$C_{\frac{q^{\frac{m-1}{2}}+1}{2}}^n=\{\frac{q^{l}+q^{\frac{m-1}{2}+l}}{2}: 1\leq l \leq \frac{m-1}{2}\}\cup\{\frac{q^{l}+q^{l-\frac{m+1}{2}}}{2}: \frac{m+1}{2}\leq l \leq m\}.$$
Set $m_{1}=\frac{q^{\frac{m-1}{2}}+1}{2}$, thus $|C_{m_1}^n|=m$. Then $\dim(\C^{\perp}_{(n,1,\delta,1)})=n-m-1$.
			
Suppose $d(\C^{\perp}_{(n,1,\delta,1)})=2$, which means that there exist $a(x)=a_0+a_1x+a_2x^2+\dots+a_ix^i+\dots +a_jx^j+\dots+a_{n-1}x^{n-1}\in \C^{\perp}_{(n,1,\delta,1)}$ and 
$b(x)=b_0+b_1x+b_2x^2+\dots+b_ix^i+\dots +b_jx^j+\dots+b_{n-1}x^{n-1}\in \C^{\perp}_{(n,1,\delta,1)}$, where $a_t=b_t$ for any 
$t\in[0,n-1]\setminus \{i,j\}$ and $0\leq i<j\leq n-1$.
Since $a(1)=b(1)=0$ and $a(\beta^{m_1})=b(\beta^{m_1})=0$,
i.e., $$\begin{cases} (a_i-b_i)+(a_j-b_j)=0, \\ (a_i-b_i)\beta^{im_1}+(a_j-b_j)\beta^{jm_1}=0,\end{cases}$$
which equals to 
	\begin{equation}\label{eq3}
\beta^{(i-j)m_1}=1\Longleftrightarrow(i-j)(\frac{q^{\frac{m-1}{2}}+1}{2})\equiv0\pmod {\frac{q^m-1}{2}}.
			\end{equation}
Since $2\nmid m$ and $\gcd(\frac{m-1}{2},m)=1$, then $\gcd(q^{\frac{m-1}{2}}+1,q^{m}-1)=2$.
Thus Eq.\ref{eq3} holds $\Leftrightarrow$ $i=j$, a contradiction. 
Thus $ d({\C^{\perp}_{(n,1,\delta,1)}})\geq 3$. 
			
By Lemma \ref{lem4}, if $d(\C^{\perp}_{(n,1,\delta,1)})= 5$, then $\sum_{i=0}^{2}(q-1)^{i} \binom{n}{i}> q^{m+1}$ ,
this is infeasible, i.e., $d({\C^{\perp}_{(n,1,\delta,1)}})\leq 4$.
For $q=3$, if $d(\C^{\perp}_{(n,1,\delta,1)})= 4$, we have $\sum_{i=0}^{1}(3-1)^{i} \binom{n-1}{i}\leq 3^{m}$ satisfying Lemma \ref{lem4}.
 Thus $3\leq d(\C^{\perp}_{(n,1,\delta,1)})\leq 4$.
For $q>3$, if $d(\C^{\perp}_{(n,1,\delta,1)})= 4$, then $\sum_{i=0}^{1}(q-1)^{i} \binom{n-1}{i}> q^{m}$, this is infeasible,
i.e., $d(\C^{\perp}_{(n,1,\delta,1)})=3$. 			
\end{proof}

\begin{example}
	Let $(q,m)=(3,3)$. Then $\overline{\C}_{(n,1,\delta_1,1)}$ is an $[14,4,8]$ code and weight enumerator is
	$1+26z^8+26z^9+26z^{11}+2z^{14}$, which has the best known parameters for linear code in Database \cite{RefJ9}. 
	Let $(q,m)=(5,3)$. Then $\overline{\C}_{(n,1,\delta_1,1)}$ is an $[63,4,48]$ code and weight enumerator is
	$1+248z^{48}+124z^{50}+248z^{53}+4z^{63}$, in Database \cite{RefJ9}, the best known linear code over $\F_5$, 
	having length $63$ and dimension $4$, holds a minimum distance of $49$.
	\end{example}

\begin{theorem}\label{th30}
	Let $2\nmid m$ and $\delta =\delta _2$. Then $\overline{\C}_{(n,1,\delta_2,1)}$ is an 
$[\frac{q^m-1}{2}+1,2m+1,\delta_2+1]$ extended code and $\C^{\perp}_{(n,1,\delta_2,1)}$ is an
$[\frac{q^{m}-1}{2},\frac{q^{m}-1}{2}-2m-1,d]$ dual code,
where 
$$\begin{cases}3 \leq d\leq6,&\text{if $q=3;$}\\
	3 \leq d\leq5,&\text{if $5\leq q\leq 9$};\\
	3 \leq d\leq4 ,&\text{if $q\geq 11$} .\\\end{cases}$$
 Moreover, Table \ref{tab:3} gives the weight distribution of $\overline{\C}_{(n,1,\delta_2,1)}$.
		\end{theorem}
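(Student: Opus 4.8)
The plan is to mirror Theorems \ref{th27} and \ref{th28}, the one new feature being that for $\delta=\delta_2$ the defining set $T=\bigcup_{i=1}^{\delta_2-1}C_i^n$ \emph{excludes} the coset leader $\delta_2$, so $\C_{(n,1,\delta_2,1)}$ has one more nonzero than in Theorem \ref{th28}. First I would identify the nonzeros: $\Z_n\setminus T$ consists of $0$ together with every $j$ whose coset leader is at least $\delta_2$, and since $\delta_1,\delta_2$ are the two largest coset leaders this set is exactly $C_0^n\cup C_{\delta_1}^n\cup C_{\delta_2}^n$. By Lemma \ref{lem26} these cosets have sizes $1,m,m$ for odd $m$, so $\dim\C_{(n,1,\delta_2,1)}=2m+1$, and since the extension map is injective $\dim\overline{\C}_{(n,1,\delta_2,1)}=2m+1$ as well.

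Next I would invoke Lemma \ref{lem3} to write each codeword as $\bm c(a,c,b)=(Tr_q^{q^m}(a\beta^{-\delta_1 i}+c\beta^{-\delta_2 i})+b)_{i=0}^{n-1}$ with $a,c\in\F_{q^m}$ and $b\in\F_q$. Reducing the exponents by the Frobenius manipulation used in Theorem \ref{th28} turns $\beta^{-\delta_1 i}$ into $\alpha^{(q^{(m+1)/2}+1)i}$ and $\beta^{-\delta_2 i}$ into $\alpha^{(q^{s}+1)i}$ for a second Gold exponent $q^s+1$; both exponents $e_1,e_2$ are \emph{even} because $q$ is odd. The weight of $\bm c(a,c,b)$ is $n$ minus $N(a,c,b)=\#\{i:Tr_q^{q^m}(a\alpha^{e_1 i}+c\alpha^{e_2 i})+b=0\}$, and expanding $N$ with the canonical additive character $\chi$ of $\F_q$ reduces everything to $S(A,C)=\sum_{i=0}^{n-1}\chi(Tr_q^{q^m}(A\alpha^{e_1 i}+C\alpha^{e_2 i}))$. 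Here the evenness of $e_1,e_2$ is decisive: $\alpha^{e_1 n}=\alpha^{e_2 n}=1$, so the summand is $n$-periodic in $i$ and $2S(A,C)=\sum_{y\in\F_{q^m}^*}\chi(Q_{A,C}(y))$, where $Q_{A,C}(y)=Tr_q^{q^m}(Ay^{e_1}+Cy^{e_2})$. Thus the awkward incomplete sum over the squares collapses to a \emph{complete} Weil sum over the full field.

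Because each $y\mapsto Tr_q^{q^m}(Ay^{q^{s}+1})$ is a quadratic form over $\F_q$, $Q_{A,C}$ is a quadratic form in $m$ variables, and $\sum_{y}\chi(Q_{A,C}(y))$ is governed by the rank and type of $Q_{A,C}$ through the standard evaluation of quadratic-form character sums. The heart of the argument — and the step I expect to be the main obstacle — is therefore to determine the rank distribution of the two-term Gold form $Q_{A,C}$ as $(A,C)$ ranges over $\F_{q^m}^2$, and to track the discriminant so as to control the sign and Gauss-sum factors that appear when the outer $b$-summation twists $Q_{A,C}$ by a scalar $y'\in\F_q^*$ (for even rank the scaling is inert, while odd rank brings in the quadratic Gauss sum responsible for the $q^{(m-1)/2}$-scale terms). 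Feeding the resulting list of Weil-sum values and their multiplicities back through $N(a,c,b)$ yields the weights and frequencies of Table \ref{tab:3}; the extension then adds $1$ to the weight of every word with $b\neq0$ (the $\varepsilon$-shift of Theorems \ref{th27} and \ref{th28}), and the minimum weight $\delta_2+1$ of $\overline{\C}_{(n,1,\delta_2,1)}$ is read off directly.

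For the dual I would argue exactly as in Theorem \ref{th28}. Its defining set is $T^\perp=(\Z_n\setminus T)^{-1}=C_0^n\cup C_{n-\delta_1}^n\cup C_{n-\delta_2}^n$; computing the coset leaders of $n-\delta_1=\frac{q^{m-1}+q^{(m-1)/2}}{2}$ and $n-\delta_2=\frac{q^{m-1}+q^{(m+1)/2}}{2}$, each of which generates a coset of size $m$, gives $\dim\C^\perp_{(n,1,\delta_2,1)}=n-2m-1$. Since $0\in T^\perp$ we have $d\geq2$, and a putative weight-$2$ codeword would force $\beta^{(i-j)m_1}=1$ with $m_1=CL(n-\delta_1)$, which the gcd computation of Theorem \ref{th28} rules out, so $d\geq3$. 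The stated upper bounds then come from substituting the redundancy $2m+1$ into the sphere-packing inequalities of Lemma \ref{lem4}: one checks that $d=7$ is infeasible for every $q$, that $d=6$ becomes infeasible once $q\geq5$, and that $d=5$ becomes infeasible once $q\geq11$, yielding $3\leq d\leq6$, $3\leq d\leq5$ and $3\leq d\leq4$ in the three ranges respectively.
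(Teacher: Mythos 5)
Your handling of the dual code and of the extension step coincides with the paper's: the same defining set $T^{\perp}=C_{0}^n\cup C_{CL(n-\delta_1)}^n\cup C_{CL(n-\delta_2)}^n$, the same coset-leader computation giving $\dim(\C^{\perp}_{(n,1,\delta_2,1)})=n-2m-1$, the same elimination of weight-two words via the roots $1$ and $\beta^{m_1}$ together with $\gcd(q^{\frac{m-1}{2}}+1,q^m-1)=2$, and the same sphere-packing case analysis (your thresholds --- $d=7$ always infeasible, $d=6$ infeasible once $q\geq5$, $d=5$ infeasible once $q\geq11$ --- are exactly what the three stated ranges require). Where you genuinely diverge is Table \ref{tab:3}: the paper does not compute the weight distribution at all, but quotes Theorem 8 of \cite{RefJ35}, which already gives the weight distribution of the three-nonzero code $\C_{(n,1,\delta_2,1)}$, and then only does the $\varepsilon$-shift bookkeeping for the appended parity coordinate (identical to your last step). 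You instead propose to rederive the distribution from scratch: additive characters, the evenness of the Gold exponents $q^s+1$ to convert the incomplete sum over the index-two subgroup into a complete Weil sum, and then the rank and type distribution of the quadratic form $Tr_q^{q^m}(Ay^{e_1}+Cy^{e_2})$. That outline is sound and is essentially how the cited result is proved, but the step you yourself flag as the main obstacle --- determining the rank distribution of $Q_{A,C}$ over all $(A,C)\in\F_{q^m}^2$ and tracking the discriminant under the scalar twist coming from the $b$-summation --- is the entire substance of the weight-distribution claim and is left unexecuted; as written, your argument for Table \ref{tab:3} is a program rather than a proof. Either carry out that rank computation or, more economically, cite \cite{RefJ35} as the paper does; everything else in your proposal is correct and complete.
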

\begin{proof}

	According to the Theorem 8 in \cite{RefJ35}, we have 
$$\begin{aligned}
		\C_{(n,1,\delta_2,1)}=\{(Tr_{q^{m}/q}(a\beta^{-\delta_{1}i}+b\beta^{-\delta_{2}i})+c)_{i=0}^{n-1}: a,b\in \F_{q^{m}},c\in \F_{q}\}\\
		 \end{aligned},$$ 
	its weight distribution is equal to
	\begin{equation}
	\left\{\bm v_4(a,b,c)=Tr_q^{q^m}(a\alpha ^{(q^{\frac{m-1}{2}}+1)i}+b\alpha ^{(q^{\frac{m-3}{2}}+1)i}+c)_{i=0}^{n-1}:a,b\in \F_{q^{m}},c\in \F_{q}\right\}.
	\end{equation}	
	Similar to Theorem \ref{th27}, 
	 we have $$\sum_{i=0}^{n-1}(Tr_q^{q^m}(a\alpha ^{(q^{\frac{m-1}{2}}+1)i}+b\alpha ^{(q^{\frac{m-3}{2}}+1)i})+c)=\frac{q-1}{2}c$$	
	for any codeword $\bm v_4(a,b,c)$ of $(4)$.
	And $\overline{\bm v_4}(a,b,c)=(\bm v_4(a,b,c), c)$ is the codeword of $\overline{\C}_{(n,1,\delta_2,1)}$,
 where $a,b \in \F_{q^m}$, $c \in \F_q$.
	Then $wt(\overline{\bm v_4}(a,b,c))=wt(\bm v_4(a,b,c))+\varepsilon$ where
	$$\varepsilon =\begin{cases}0,&\text{if $c=0;$}\\1,&\text{if $c\neq 0.$}\end{cases}$$ 
	Thus, we give the weight distribution in Table \ref{tab:3}.	
	\begin{table} 
		\caption{}
		\label{tab:3}
		\begin{center}
			\begin{tabular}{cc}
				\hline
				Weight & Frequency\\
				\hline
				0&1\\
				$\frac{q^m-q^{m-1}-q^{\frac{m+1}{2}}+1}{2}$ & $\frac{(q^m-1)(q^{m-1}-1)}{2(q+1)}$\\
				$\frac{(q-1)(q^{m-1}-q^{\frac{m-1}{2}})}{2}$ & $\frac{(q^m-1)(q^{m-1}+q^{\frac{m-1}{2}})}{2}$\\
				$\frac{q^m-q^{m-1}-q^{\frac{m-1}{2}}+1}{2}$ & $\frac{(q^m-1)(q^{m+2}-q^m-q^{m-1}-q^{\frac{m+3}{2}}+q^{\frac{m-1}{2}}+q^2)}{2(q+1)}$\\
				$\frac{q^m-q^{m-1}}{2}$ & $(q^m-1)(q^m-q^{m-1}+1)$\\
				$\frac{q^m-q^{m-1}+q^{\frac{m-1}{2}}+1}{2}$ &$\frac{(q^m-1)(q^{m+2}-q^m-q^{m-1}+q^{\frac{m+3}{2}}-q^{\frac{m-1}{2}}+q^2)}{2(q+1)}$\\
				$\frac{(q-1)(q^{m-1}+q^{\frac{m-1}{2}})}{2}$ & $\frac{(q^m-1)(q^{m-1}-q^{\frac{m-1}{2}})}{2}$\\
				$\frac{q^m-q^{m-1}+q^{\frac{m+1}{2}}+1}{2}$ & $\frac{(q^m-1)(q^{m-1}-1)}{2(q+1)}$\\
				$\frac{q^m+1}{2}$ & $q-1$\\
				\hline
			\end{tabular}
		\end{center}
	\end{table}
	
	For code $\C^{\perp}_{(n,1,\delta_2,1)}$, we have $T^{\perp}=(\Z_{n}\backslash T)^{-1}=C_{0}^n\cup C_{CL(n-\delta_1)}^n\cup C_{CL(n-\delta_2)}^n$. 
			Then $d(\C^{\perp}_{(n,1,\delta_2,1)})\geq2$. 
			
Note that $n-\delta_2=\frac{q^{m-1}+q^{\frac{m+1}{2}}}{2}$,
 then $CL(n-\delta_2)=\frac{q^{\frac{m-3}{2}}+1}{2}$ and 
$$C_{\frac{q^{\frac{m-3}{2}}+1}{2}}^n=\{\frac{q^{l}+q^{\frac{m-3}{2}+l}}{2}: 1\leq l \leq \frac{m+3}{2}\}\cup\{\frac{q^{l}+q^{l-\frac{m+3}{2}}}{2}: \frac{m+5}{2}\leq l \leq m\}.$$
Set $m_{2}=\frac{q^{\frac{m-3}{2}}+1}{2}$, thus $|C_{m_2}^n|=m$, i.e., $\dim(\C^{\perp}_{(n,1,\delta_2,1)})=n-2m-1$.

Suppose $d(\C^{\perp}_{(n,1,\delta_2,1)})=2$, then we have $a(x)=a_0+a_1x+a_2x^2+\dots+a_ix^i+\dots +a_jx^j+\dots+a_{n-1}x^{n-1}\in \C^{\perp}_{(n,1,\delta_2,1)}$ and 
$b(x)=b_0+b_1x+b_2x^2+\dots+b_ix^i+\dots +b_jx^j+\dots+b_{n-1}x^{n-1}\in \C^{\perp}_{(n,1,\delta_2,1)}$, where $a_t=b_t$ for any 
$t\in[0,n-1]\setminus \{i,j\}$ and $0\leq i<j\leq n-1$.
Since $a(1)=b(1)=0$, $a(\beta^{m_1})=b(\beta^{m_1})=0$ and $a(\beta^{m_2})=b(\beta^{m_2})=0$,
i.e.   $$\begin{cases} (a_i-b_i)+(a_j-b_j)=0, \\ (a_i-b_i)\beta^{im_1}+(a_j-b_j)\beta^{jm_1}=0,
	\\ (a_i-b_i)\beta^{im_2}+(a_j-b_j)\beta^{jm_2}=0,\end{cases}$$
	which equals to 
$$\beta^{(i-j)m_l}=1 \Leftrightarrow (i-j)m_l\equiv0\pmod {\frac{q^m-1}{2}}$$ for $l=1,2$.
			Obviously, we can get $ d({\C^{\perp}_{(n,1,\delta_2,1)}})\geq 3$.
		
For $3\leq q \leq 9$, if $d(\C^{\perp}_{(n,1,\delta_2,1)})=7$, then $\sum_{i=0}^{3}(q-1)^{i} \binom{n}{i}> q^{2m+1}$ by Lemma \ref{lem4},
this is infeasible, i.e., $3\leq d({\C^{\perp}_{(n,1,\delta_2,1)}})\leq 6$. For $q\geq 11$, if $d(\C^{\perp}_{(n,1,\delta_2,1)})=5$, we have 
$\sum_{i=0}^{2}(q-1)^{i} \binom{n}{i}> q^{2m+1}$ by Lemma \ref{lem4},
this is infeasible, i.e., $3\leq d({\C^{\perp}_{(n,1,\delta_2,1)}})\leq 4$. The cases of $5\leq q \leq 9$ and $q=3$ are similar and we omit it here.
\end{proof}

\begin{theorem}\label{th31}
	Let $2\mid m$ and $\delta =\delta _2$. Then $\overline{\C}_{(n,1,\delta_2,1)}$ is an  
    $[\frac{q^m-1}{2}+1,\frac{3m}{2}+1,\delta_2+1]$ extended code and $\C^{\perp}_{(n,1,\delta,1)}$ is an
    $[\frac{q^{m}-1}{2},\frac{q^{m}-1}{2}-\frac{3m}{2}-1,2\leq d\leq 4]$ dual code
     except for $(q,m)=(3,2)$. Moreover, Table \ref{tab:4} gives the weight distribution of $\overline{\C}_{(n,1,\delta_2,1)}$.
\end{theorem}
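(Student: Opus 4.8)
The plan is to mirror the arguments of Theorems~\ref{th27} and~\ref{th30}, now in the even-$m$, three-nonzero case. First I would identify the nonzeros of $\C_{(n,1,\delta_2,1)}$: since $T=\bigcup_{i=1}^{\delta_2-1}C_i^n$ and $\delta_1,\delta_2$ are the two largest coset leaders, the only coset leaders outside $T$ are $0$, $\delta_1$ and $\delta_2$, so by Lemma~\ref{lem26} the parity-check polynomial has roots from $C_0$, $C_{\delta_1}$, $C_{\delta_2}$ of sizes $1$, $\frac m2$, $m$. Lemma~\ref{lem3} then gives
\[
\C_{(n,1,\delta_2,1)}=\Big\{\big(Tr_q^{q^{\frac m2}}(a\beta^{-\delta_1 i})+Tr_q^{q^m}(b\beta^{-\delta_2 i})+c\big)_{i=0}^{n-1}:a\in\F_{q^{\frac m2}},\,b\in\F_{q^m},\,c\in\F_q\Big\}.
\]
Writing $\beta=\alpha^2$ and using $-2\delta_1\equiv q^{\frac m2-1}(q^{\frac m2}+1)$ and $-2\delta_2\equiv q^{\frac m2}(q^{\frac m2-1}+1)\pmod{q^m-1}$, the Frobenius factors $q^{\frac m2-1}$ and $q^{\frac m2}$ can be absorbed into $a$ and $b$ (bijections of $\F_{q^{\frac m2}}$ and $\F_{q^m}$), so the weight distribution equals that of the family $\bm v(a,b,c)=\big(Tr_q^{q^{\frac m2}}(a\alpha^{(q^{\frac m2}+1)i})+Tr_q^{q^m}(b\alpha^{(q^{\frac m2-1}+1)i})+c\big)_{i=0}^{n-1}$, exactly as in Theorems~\ref{th27} and~\ref{th30}.

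Next comes the extension. As in Theorem~\ref{th27}, $\widetilde g_{(n,1,\delta_2,1)}(1)=0$ forces both trace components to sum to $0$ over $i\in[0,n-1]$, so the coordinate sum of $\bm v(a,b,c)$ is $nc=\frac{q-1}{2}c$; appending $-\frac{q-1}{2}c$ yields a codeword $\overline{\bm v}(a,b,c)$ of $\overline\C_{(n,1,\delta_2,1)}$ with $wt(\overline{\bm v})=wt(\bm v)+\varepsilon$, where $\varepsilon=0$ if $c=0$ and $\varepsilon=1$ otherwise. The dimension is then $1+\frac m2+m=\frac{3m}{2}+1$, and the minimum distance $\delta_2+1$ is read off as the smallest nonzero weight in the resulting table.

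The crux, and the step I expect to be the main obstacle, is the weight distribution of $\bm v(a,b,c)$ in Table~\ref{tab:4}. In contrast to Theorem~\ref{th30}, where both nontrivial components are full traces $Tr_q^{q^m}$ and the distribution could be imported from \cite{RefJ35}, here the $\delta_1$-component is the subfield trace $Tr_q^{q^{\frac m2}}$, so the joint distribution of the two traces must be analysed directly. I would write $wt(\bm v)=n-Z$ with $Z=\frac1q\sum_{y\in\F_q}\sum_{i=0}^{n-1}\psi\big(y\,v_i(a,b,c)\big)$ for a nontrivial additive character $\psi$ of $\F_q$, reduce each inner sum to an exponential sum $\sum_x\psi(Tr_q^{q^m}(\cdot))$, and evaluate it via Gauss sums together with $\gcd(q^{\frac m2}+1,q^{\frac m2-1}+1)=2$; counting the pairs $(a,b)$ realising each value and applying the $\varepsilon$-shift would produce the frequencies in Table~\ref{tab:4}.

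Finally, for the dual I would compute $T^\perp=(\Z_n\setminus T)^{-1}=C_0\cup C_{CL(n-\delta_1)}\cup C_{CL(n-\delta_2)}$, with $CL(n-\delta_1)=\frac{q^{\frac m2}+1}{2}$ of size $\frac m2$ (as in Theorem~\ref{th27}) and $CL(n-\delta_2)=\frac{q^{\frac m2-1}+1}{2}$ of size $m$, giving $\dim(\C^\perp_{(n,1,\delta_2,1)})=n-\frac{3m}{2}-1$. Since $0\in T^\perp$ we have $d\geq2$, while Lemma~\ref{lem4} shows $d\geq5$ would force $\sum_{i=0}^{2}(q-1)^i\binom{n}{i}>q^{\frac{3m}{2}+1}$, a contradiction; hence $2\leq d\leq4$. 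The pair $(q,m)=(3,2)$ is excluded because there $\delta_2=1$ and the code degenerates. A weight-two analysis in the style of Theorems~\ref{th28} and~\ref{th30} could potentially sharpen the lower bound, but the stated range suffices.
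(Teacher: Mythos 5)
Your proposal matches the paper's proof in almost every step: the identification of the three nonzeros $C_0^n$, $C_{\delta_1}^n$, $C_{\delta_2}^n$ of sizes $1$, $\frac m2$, $m$ via Lemma~\ref{lem26}, the trace representation and absorption of Frobenius powers, the coordinate sum $nc=\frac{q-1}{2}c$ and the $\varepsilon$-shift giving the extended code's parameters, the computation $CL(n-\delta_2)=\frac{q^{\frac m2-1}+1}{2}$ with $|C_{CL(n-\delta_2)}^n|=m$ and hence $\dim(\C^{\perp}_{(n,1,\delta_2,1)})=n-\frac{3m}{2}-1$, the exclusion of $(q,m)=(3,2)$ because the dual dimension collapses to $0$, and the use of Lemma~\ref{lem4} to rule out $d\geq 5$. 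The one place you diverge is the step you flag as the crux: you assume the weight distribution of $\C_{(n,1,\delta_2,1)}$ for even $m$ cannot be imported from \cite{RefJ35} because one component is the subfield trace $Tr_q^{q^{m/2}}$, and you therefore sketch a direct evaluation via character sums and Gauss sums. In fact the paper cites Theorem~8 of \cite{RefJ35} for exactly this even-$m$ case (the introduction records that \cite{RefJ35} determined the weight distributions of $\C_{(n,1,\delta_1,1)}$ and $\C_{(n,1,\delta_2,1)}$ in full), so the paper's proof of Table~\ref{tab:4} reduces to that citation plus the $\varepsilon$-shift. Your character-sum plan is a legitimate alternative route and would in principle reprove the result of \cite{RefJ35}, but as written it is only a sketch: the nine weights and their frequencies in Table~\ref{tab:4} are never actually derived, and the combinatorics of ``counting the pairs $(a,b)$ realising each value'' is precisely where all the work lies. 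So the proposal is structurally sound, but the component you identify as the main obstacle is the one component you leave unexecuted --- and it happens to be the one the paper disposes of by citation.
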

\begin{proof}
	If $2\mid m$, according to the Theorem 8 in \cite{RefJ35} for $\C_{(n,1,\delta_2,1)}$,
	its weight distribution is equal to\begin{equation}
	\left\{\bm v_5(a,b,c)=Tr_q^{q^{\frac{m}{2}}}(a\alpha ^{(q^{\frac{m}{2}}+1)i}+b\alpha ^{(q^{\frac{m-2}{2}}+1)i}+c)_{i=0}^{n-1}:a\in \F_{q^{\frac{m}{2}}},b\in \F_{q^{m}},c\in \F_{q}\right\}.
	\end{equation}	
	With the same way as Theorem \ref{th30},
	we directly give the weight distribution in Table \ref{tab:4}.

Note that $n-\delta_2=\frac{q^{m-1}+q^{\frac{m}{2}}}{2}$,
 then $CL(n-\delta_2)=\frac{q^{\frac{m}{2}-1}+1}{2}$ and 
$$C_{\frac{q^{\frac{m}{2}-1}+1}{2}}^n=\{\frac{q^{l}+q^{\frac{m}{2}+l-1}}{2}: 1\leq l \leq \frac{m}{2}+1\}\cup\{\frac{q^{l}+q^{l-\frac{m}{2}-1}}{2}: \frac{m}{2}+2\leq l \leq m\}.$$
Thus $|C_{\frac{q^{\frac{m}{2}-1}+1}{2}}^n|=m$. Then $\dim(\C^{\perp}_{(n,1,\delta_2,1)})=n-|T^{\perp}|=n-\frac{3m}{2}-1$.
			
For $ q=3$ and $m=2$, we can get  $\dim(\C^{\perp}_{(n,1,\delta_2,1)})=0$.
Otherwise, if $d(\C^{\perp}_{(n,1,\delta_2,1)})=5$, then
$\sum_{i=0}^{2}(q-1)^{i}\binom{n}{i}>q^{\frac{3m}{2}+1}$ by Lemma \ref{lem4}, this is impossible, 
i.e., $2\leq d({\C^{\perp}_{(n,1,\delta_2,1)}})\leq 4$. 
\end{proof}
\begin{table} 
	\caption{}
	\label{tab:4}
	\begin{center}
		\begin{tabular}{cc}
			\hline
			Weight&Frequency \\
			\hline
			0&1\\
			$\frac{q^m-q^{m-1}-q^{\frac{m}{2}}+1}{2}$ & $\frac{(q^m-1)(q^{\frac{m+2}{2}}+q^{\frac{m-2}{2}}-2)}{2(q+1)}$\\
			$\frac{(q-1)(q^{m-1}-q^{\frac{m}{2}-1})}{2}$ & $\frac{(q^m-1)(q^{\frac{m+2}{2}}+q)}{2(q+1)}$\\
			$\frac{q^m-q^{m-1}-q^{\frac{m-2}{2}}+1}{2}$ & $\frac{(q^{\frac{m}{2}}-1)(q^{m+1}-2q^m+q)}{2}$\\
			$\frac{q^m-q^{m-1}}{2}$ & $(q^m-1)q^{\frac{m-2}{2}}$\\
			$\frac{q^m-q^{m-1}+q^{\frac{m-2}{2}}+1}{2}$ &$\frac{(q^m-1)(q^{\frac{m+2}{2}}+q)(q-1)}{2(q+1)}$\\
			$\frac{(q-1)(q^{m-1}+q^{\frac{m}{2}-1})}{2}$ & $\frac{(q^{\frac{m+2}{2}}-q)(q^m-2q^{m-1}+1)}{2(q-1)}$\\
			$\frac{q^m-q^{m-1}+q^{\frac{m}{2}}+1}{2}$ & $\frac{(q^m-1)(q^{\frac{m}{2}}-q^{\frac{m-2}{2}})}{2}$\\
			$\frac{(q-1)(q^{m-1}+q^{\frac{m}{2}})}{2}$ & $\frac{(q^m-1)(q^{\frac{m-2}{2}}-1)}{(q^2-1)}$\\
			$\frac{q^m+1}{2}$ & $q-1$\\
			\hline
		\end{tabular}
	\end{center}
\end{table}
	
\begin{example}
   Let $q=3$, $m=4$. Then $\overline{\C}_{(n,1,\delta_2,1)}$ is an $[41,7,23]$ linear code and weight enumerator is
$1+280z^{23}+300z^{24}+336z^{26}+240z^{27}+600z^{29}+168z^{30}+240z^{32}+20z^{36}+2z^{40}$, which has the best known parameters for linear code in Database \cite{RefJ9}. 
Let $q=3$, $m=3$. Then $\overline{\C}_{(n,1,\delta_2,1)}$ is an $[14,7,5]$ code and weight enumerator is
$1+26z^{5}+156z^{6}+624z^{8}+494z^{9}+780z^{11}+78z^{12}+26z^{13}+2z^{14}$, in Database \cite{RefJ9},
the best known linear code over $\F_3$, having length $14$ and dimension $7$, holds a minimum distance of $6$.
\end{example}
	
	\subsection{Dually-BCH code}	
	If $\C$ and $\C^{\perp}$ are BCH codes with respect to $\beta $, 
	then $\C$ is a dually-BCH code \cite{RefJ8}.	
	For $b=2$, we present the necessary and sufficient condition for $\C_{(n,1,\delta,b)}$ to be recognized as a dually-BCH code.
	\begin{theorem}\label{th33}
		Let $q\geq 3$ and $m\geq3$ be odd. Then $\C_{(n,1,\delta,2)}$ is a dually-BCH code
		iff $\delta _1-1 <\delta \leq n-1$.
	\end{theorem}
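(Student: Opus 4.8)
The plan is to reduce the dually-BCH property to a statement about consecutive cyclotomic cosets. Since $\C_{(n,1,\delta,2)}$ is a BCH code by construction, it is dually-BCH precisely when $\C^{\perp}_{(n,1,\delta,2)}$ is also a BCH code, i.e.\ when its defining set $T^{\perp}=(\Z_{n}\setminus T)^{-1}$ is the union of the cosets of a block of consecutive integers. Because $i\mapsto -i$ carries consecutive integers to consecutive integers and $C_{-i}^{n}=-C_{i}^{n}$, this happens iff the set of nonzeros $N:=\Z_{n}\setminus T$ is itself a union of consecutive cosets. As $T=\bigcup_{i=2}^{\delta}C_{i}^{n}$, the whole question becomes: for which $\delta$ is $N$ an ``interval of cosets''?

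For the \emph{if} direction I would invoke that $\delta_{1}$ is the largest coset leader (Lemma \ref{lem26} together with the definition of the $\delta_{i}$). When $\delta\ge\delta_{1}$, every nonzero coset leader $\ell$ satisfies $1\le\ell\le\delta_{1}\le\delta$: for $\ell\ge 2$ the coset $C_{\ell}^{n}$ lies in $T$ since $\ell\in[2,\delta]$, while $C_{1}^{n}=C_{q}^{n}\subseteq T$ because $2\le q\le\delta_{1}\le\delta$. Hence $T=\Z_{n}\setminus\{0\}$, so $N=C_{0}^{n}=\{0\}$ and $T^{\perp}=C_{0}^{n}=\bigcup_{i=0}^{0}C_{i}^{n}$. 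Thus $\C^{\perp}_{(n,1,\delta,2)}=\C_{(n,1,2,0)}$ is a BCH code, and $\C_{(n,1,\delta,2)}$ is dually-BCH for every $\delta_{1}\le\delta\le n-1$.

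For the \emph{only if} direction, suppose $2\le\delta\le\delta_{1}-1$ and assume for contradiction that $N=\bigcup_{j\in I}C_{j}^{n}$ for some block $I$ of consecutive integers. Since $0\in N$ and $C_{0}^{n}=\{0\}$ we must have $0\in I$; since $2,3,\dots,\delta\in T$ (hence $\notin N$) the block cannot cross $2$ on the positive side, so $I\cap\Z_{>0}\subseteq\{1\}$ and $I$ has to extend downward through $n-1$, say $I=\{a,a+1,\dots,n-1,0\}$ (possibly with $1$ adjoined), with $[a,n-1]\subseteq N$ and therefore $a>\max T$. On the other hand $\delta_{1}>\delta$ is a coset leader, so $C_{\delta_{1}}^{n}\subseteq N$ and every conjugate of $\delta_{1}$ exceeds $\delta$; to cover $C_{\delta_{1}}^{n}$ the block $I$ must contain one such conjugate, which then lies in $[a,n-1]$. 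This would force a conjugate of $\delta_{1}$ strictly larger than $\max T$, and the contradiction is obtained by ruling that out.

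The main obstacle is exactly this final comparison in the converse: one must show that the largest conjugate of $\delta_{1}$ does not exceed $\max T=\max\bigcup_{i=2}^{\delta}C_{i}^{n}$, equivalently that the stretch from the top element $n-1$ down to any conjugate of $\delta_{1}$ unavoidably contains an element of $T$. I expect to carry this out using the explicit $q$-adic shape of $\delta_{1}$ from Lemma \ref{lem26} (and, where needed, the neighbouring coset leaders $\delta_{2},\delta_{3}$), tracking how the circular $q$-shifts $[\delta_{1}q^{t}]_{n}$ interleave with the conjugates of the small integers $2,\dots,\delta$ near the top of $[0,n-1]$. The tightest case is $\delta=\delta_{1}-1$, which will demand the most careful estimate, and the delicate interleaving near $n-1$ is where any edge-case behaviour would surface.
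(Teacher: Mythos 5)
Your reduction of the dually-BCH property to ``$N=\Z_n\setminus T$ is a union of cosets of a block of consecutive integers'' is sound, and your \emph{if} direction is correct and actually more self-contained than the paper's: for $\delta\geq\delta_1$ every nonzero coset leader lies in $[1,\delta]$ and $C_1^n=C_q^n\subseteq T$, so $T=\Z_n\setminus\{0\}$, $T^{\perp}=C_0^n$, and the dual is the BCH code $\C_{(n,1,2,0)}$. (The paper instead invokes Theorem 30 of \cite{RefJ29} for the whole range $q\leq\delta\leq n-1$ and argues directly only for $2\leq\delta\leq q-1$.)

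The \emph{only if} direction, however, has a genuine gap: the inequality you plan to establish --- that no conjugate of $\delta_1$ exceeds $\max T$ --- is false. Since $n-\delta_1=\frac{q^{m-1}+q^{(m-1)/2}}{2}$ has coset leader $\frac{q^{(m-1)/2}+1}{2}$, the largest element of $C_{\delta_1}^n$ equals $n-\frac{q^{(m-1)/2}+1}{2}$, i.e.\ it sits just below $n$, while for small $\delta$ the set $T$ stays far below $n$ (for $\delta=2$ and $q\geq5$ one has $\max T=\max C_2^n=2q^{m-1}$). Concretely, take $(q,m)=(5,3)$, $n=62$, $\delta=2$: then $T=C_2^{62}=\{2,10,50\}$, $\delta_1=47$, $C_{47}^{62}=\{47,49,59\}$, and $59>50=\max T$. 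Worse, in this example the admissible arc $I\subseteq\{51,\dots,61,0,1\}$ \emph{does} contain a conjugate of $\delta_1$ (namely $59$), so covering $C_{\delta_1}^n$ is not the obstruction at all; the coset that cannot be covered is the mirror coset $-C_{\delta_1}^n=C_{CL(n-\delta_1)}^n$ (here $C_3^{62}=\{3,13,15\}$), whose \emph{largest} conjugate is only $n-\delta_1$. Reaching it forces $a\leq n-\delta_1$, which in turn puts elements of $T$ (e.g.\ $2q^{m-1}\in C_2^n$) inside $[a,n-1]$ --- a contradiction. Equivalently, and this is how the paper argues for $2\leq\delta\leq q-1$: $C_{\delta_1}^n\subseteq T^{\perp}$ and $\min C_{\delta_1}^n=\delta_1$, so a consecutive defining set for the dual containing $0$ and a conjugate of $\delta_1$ must extend out to $\delta_1$ and hence equal $\bigcup_{i=0}^{\delta_1}C_i^n=\Z_n$, which is impossible since $T\neq\emptyset$. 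Your final comparison must be replaced by an argument of this kind; as proposed, the route to the contradiction cannot be completed.
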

	
	\begin{proof}
		For $\C_{(n,1,\delta,2)}$ with $\delta \in[2,n-1]$, we have $T=C_2^n\cup C_3^n\cup \dots \cup C_\delta^n$ . 
		Since $0 \notin T$, then $C_0^n \subset T^{\perp}$.
		Then $\C_{(n,1,\delta,2)}$ is a dually-BCH code $\Leftrightarrow $ there exists $t\geq 1$ such that 
		$T^{\perp}=C_0^n\cup C_1^n\cup \dots \cup C_{t-1}^n$.
		
		For $q \leq \delta \leq n-1$, we have $C_1^n=C_q^n$, then $T=C_1^n\cup C_2^n\cup \dots \cup C_\delta^n$. 
		According to Theorem 30 of \cite{RefJ29},
		we deduce that $\C_{(n,1,\delta,2)}$ will be a dually-BCH code iff $\delta_1-1< \delta \leq n-1$.
		
		For $2 \leq \delta \leq q-1$, we know that $|C_i^n|=m$ for any $1\leq i \leq q-1$ and $q\nmid i$ by Lemma \ref{lem5}. 
		Since $C_1^n\notin T$ and $C_2^n\in T$, then $\dim(\C_{(n,1,\delta,2)})\in [m,n-m]$. 
		Note that $\frac{q^{\frac{m-1}{2}}+1}{2}\in \Gamma _n$. We consider two cases.
		
		(1) If $m\geq 5$, we have $\frac{q^{\frac{m-1}{2}}+1}{2}>q-1>\delta $,
	then $\frac{q^{\frac{m-1}{2}}+1}{2}\notin T$.
		Since $$CL(n-\frac{q^{\frac{m-1}{2}}+1}{2})=\delta _1,$$ then $C_{\delta _1}^n\subset T^{\perp}$.
		If $T^{\perp}=C_0^n\cup C_1^n\cup \dots \cup C_{\delta _1}^n$, a contradiction.
		
		(2) If $m=3$, we discuss this in two parts. 
		
		For $q>3$. If $2 \leq \delta \leq \frac{q-1}{2}$, 
		then $\frac{q^{\frac{m-1}{2}}+1}{2}>\frac{q-1}{2}>\delta$, i.e., $\frac{q^{\frac{m-1}{2}}+1}{2}\notin T$.
		By the same way as (1), $\C^{\perp}_{(n,1,\delta,2)}$ is not a BCH code. 
		If $\frac{q+1}{2} \leq \delta \leq q-1$, then $\frac{q+1}{2}\in T$. Note that $C_1^n\notin T$ 
		and $T^{\perp}={(\Z_n \setminus T)}^{-1}$, then $CL(n-1)=CL(\frac{q^3-3}{2})=\frac{q^3-2q^2-1}{2}=\delta_2\in T^{\perp}$. 
		Thus, $\C_{(n,1,\delta,2)}$ is a dually-BCH code $\Rightarrow $ $C_0^n\cup C_1^n\cup \dots \cup C_{\delta _2}^n\in T^{\perp}$, 
		i.e., $\dim(\C^{\perp}_{(n,1,\delta,2)})\leq m$. Since $q>3$, and $2$, $\frac{q+1}{2}\in T$ are coset leader by Lemma \ref{lem5}, 
		then $\dim(\C_{(n,1,\delta,2)})\leq n-2m$, a contradiction.

		For $q=3$, then $\delta=2$. We can easily conclude that $\C_{(13,1,2,2)}$ does not satisfy the conditions 
		for being a dually-BCH code.
		\end{proof}
	
	\begin{theorem}\label{th34}
		Let $q\geq 3$ and $m\geq2$ be even. Then $\C_{(n,1,\delta,2)}$ is a dually-BCH code
		iff $\delta _1-1 < \delta \leq n-1$.
	\end{theorem}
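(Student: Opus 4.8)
The plan is to imitate the proof of Theorem~\ref{th33}, splitting on the two ranges $q\le\delta\le n-1$ and $2\le\delta\le q-1$, but feeding in the even-$m$ coset-leader data of Lemmas~\ref{lem6} and \ref{lem26} and of the proofs of Theorems~\ref{th27} and \ref{th31} in place of the odd-$m$ data of Lemma~\ref{lem5}. Exactly as in Theorem~\ref{th33}, for $\C_{(n,1,\delta,2)}$ with $\delta\in[2,n-1]$ one has $T=\bigcup_{i=2}^{\delta}C_i^n$, and $0\notin T$ forces $C_0^n\subseteq T^{\perp}$; consequently $\C_{(n,1,\delta,2)}$ is dually-BCH if and only if there is some $t\ge1$ with $T^{\perp}=C_0^n\cup C_1^n\cup\cdots\cup C_{t-1}^n$.

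For $q\le\delta\le n-1$ the reasoning is verbatim that of Theorem~\ref{th33}: since $C_1^n=C_q^n\subseteq T$ we get $T=\bigcup_{i=1}^{\delta}C_i^n$, and Theorem~30 of \cite{RefJ29} yields that $\C_{(n,1,\delta,2)}$ is dually-BCH if and only if $\delta_1-1<\delta\le n-1$. Here $q$ is necessarily odd, and for even $m$ one checks $\delta_1\ge q$ unless $(q,m)=(3,2)$, so the range $2\le\delta\le q-1$ lies below $\delta_1-1$ in all remaining cases.

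The substance is the range $2\le\delta\le q-1$, in which I would show the code fails to be dually-BCH. Suppose first $m\ge4$ and set $\ell=\frac{q^{\frac{m}{2}}+1}{2}$. Then $\ell\ge\frac{q^2+1}{2}>q-1\ge\delta$, and $\ell\in\Gamma_n$ because $CL(n-\delta_1)=\frac{q^{\frac{m}{2}}+1}{2}$ (proof of Theorem~\ref{th27}). As the leader of $C_\ell^n$ exceeds $\delta$, we have $\ell\in\Z_n\setminus T$, hence $n-\ell\in T^{\perp}$; negating the identity $C_{n-\delta_1}^n=C_\ell^n$ gives $C_{\delta_1}^n=C_{n-\ell}^n\subseteq T^{\perp}$. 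Since $\delta_1$ is the largest element of $\Gamma_n$, any narrow-sense form $T^{\perp}=\bigcup_{i=0}^{t-1}C_i^n$ containing $C_{\delta_1}^n$ needs $t-1\ge\delta_1$ and so $T^{\perp}=\Z_n$, which is impossible because $C_{n-2}^n\subseteq -T$ is disjoint from $T^{\perp}$. For $m=2$ the same argument runs with $\ell$ replaced by the leader $1$: for $\delta\le q-1$ one has $C_1^n\cap T=\emptyset$, so $1\in\Z_n\setminus T$ and $n-1\in T^{\perp}$, while $CL(n-\delta_2)=\frac{q^{\frac{m}{2}-1}+1}{2}=1$ (proof of Theorem~\ref{th31}) gives $CL(n-1)=\delta_2$ and hence $C_{\delta_2}^n\subseteq T^{\perp}$. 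As $\delta_2$ is the second largest leader and $|C_{\delta_1}^n|=\frac{m}{2}=1$ by Lemma~\ref{lem26}, a narrow-sense form would force $T^{\perp}\supseteq\bigcup_{i=0}^{\delta_2}C_i^n=\Z_n\setminus C_{\delta_1}^n$, so that $\dim\C^{\perp}_{(n,1,\delta,2)}=|T|=n-|T^{\perp}|\le1$; but $C_2^n\subseteq T$ with $|C_2^n|=m=2$ for $q\ge5$, a contradiction. The only leftover pair is $(q,m)=(3,2)$, which admits solely $\delta=2$: this satisfies $\delta_1-1=1<\delta$ and is checked directly to be dually-BCH, since there $T^{\perp}=C_0^n\cup C_1^n$.

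I expect the main obstacle to sit entirely in the range $2\le\delta\le q-1$: importing the coset-leader identities $CL(n-\delta_1)=\frac{q^{\frac{m}{2}}+1}{2}$ and, for $m=2$, $CL(n-1)=\delta_2$; verifying that every excluded coset $C_{n-i}^n$ with $2\le i\le\delta$ has leader at most $\delta_1$, so that it really creates a gap in any narrow-sense $T^{\perp}$; and correctly isolating the borderline pair $(q,m)=(3,2)$, where the interval $[2,q-1]$ collapses into the dually-BCH regime rather than outside it.
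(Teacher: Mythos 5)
Your proof is correct and follows essentially the same route as the paper: the same reduction of ``dually-BCH'' to the existence of a narrow-sense form $T^{\perp}=C_0^n\cup\cdots\cup C_{t-1}^n$, the same appeal to Theorem 30 of \cite{RefJ29} on the range $q\leq\delta\leq n-1$, and the same obstruction $C_{\delta_1}^n\subseteq T^{\perp}$ (via $CL(n-\frac{q^{m/2}+1}{2})=\delta_1$) on the range $2\leq\delta\leq q-1$ when $m\geq4$. Two points of divergence are worth recording. First, for $m=2$ you run a single argument over all of $2\leq\delta\leq q-1$, using $C_1^n\cap T=\emptyset$ to get $C_{\delta_2}^n\subseteq T^{\perp}$ and then the count $|T|=n-|T^{\perp}|\leq|C_{\delta_1}^n|=1<2=|C_2^n|$; the paper instead splits this range at $\frac{q+1}{2}$ and argues the two pieces differently. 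Your unified version is valid and cleaner. Second, and more substantively, your treatment of $(q,m)=(3,2)$ disagrees with the paper's proof but agrees with the theorem being proved: the paper asserts that $\C_{(4,1,2,2)}$ is ``obviously'' not dually-BCH, yet there $\delta_1=2$, so $\delta=2$ satisfies $\delta_1-1<\delta\leq n-1$ and the theorem statement requires it to be dually-BCH. A direct computation confirms your claim: $T=C_2^4=\{2\}$, $T^{\perp}=\{0,1,3\}=C_0^4\cup C_1^4$, which is the defining set of the BCH code $\C_{(4,1,3,0)}$, so $\C_{(4,1,2,2)}$ is indeed dually-BCH. In this borderline case your proposal actually repairs an inconsistency between the paper's proof and its own statement.
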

	\begin{proof}
		For $\C_{(n,1,\delta,2)}$ with $\delta \in[2,n-1]$, we have $T=C_2^n\cup C_3^n\cup \dots \cup C_\delta^n$ . 
		Since $0 \notin T$, then $C_0^n \subset T^{\perp}$.
		Then $\C_{(n,1,\delta,2)}$ is a dually-BCH code $\Leftrightarrow $ there exists $t\geq 1$ such that 
		$T^{\perp}=C_0^n\cup C_1^n\cup \dots \cup C_{t-1}^n$.
		
		For $q \leq \delta \leq n-1$, we have $C_1^n=C_q^n$, then $T=C_1^n\cup C_2^n\cup \dots \cup C_\delta^n$. 
		According to Theorem 30 of \cite{RefJ29},
		we deduce that $\C_{(n,1,\delta,2)}$ is a dually-BCH code iff $\delta _1-1 < \delta \leq n-1$.
		
		For $2 \leq \delta \leq q-1$, we know that $|C_i^n|=m$ for any $1\leq i \leq q-1$ and $q\nmid i$ by Lemma \ref{lem6}.
		Since $C_1^n\notin T$ and $C_2^n\in T$, then $\dim(\C_{(n,1,\delta,2)})\in [m,n-m]$. Note that $\frac{q^{\frac{m}{2}}+1}{2}\in \Gamma _n$.
		We consider two cases:
		
		(1) If $m\geq 4$, we have $\frac{q^{\frac{m}{2}}+1}{2}>q-1>\delta $,
		then $\frac{q^{\frac{m}{2}}+1}{2}\notin T$. Since
		$$CL(n-\frac{q^{\frac{m}{2}}+1}{2})=\delta _1,$$ then $C_{\delta _1}^n\subset T^{\perp}$.
		If $T^{\perp}=C_0^n\cup C_1^n\cup \dots \cup C_{\delta _1}^n$, a contradiction.

		(2) 
		If $m=2$, we discuss this in two parts. 

		For $q>3$. If $2 \leq \delta \leq \frac{q-1}{2}$, 
		then $\frac{q^{\frac{m}{2}}+1}{2}>\frac{q-1}{2}>\delta$, i.e., $\frac{q^{\frac{m}{2}}+1}{2}\notin T$.
		By the same way as (1), $\C^{\perp}_{(n,1,\delta,2)}$ is not a BCH code. 
		If $\frac{q+1}{2} \leq \delta \leq q-1$, then $\frac{q+1}{2}\in T$. Note that $C_1^n\notin T$ and $T^{\perp}={(\Z_n \setminus T)}^{-1}$, 
		then $CL(n-1)=CL(\frac{q^2-3}{2})=\frac{q^2-2q-1}{2}=\delta_2\in T^{\perp}$. 
		Thus, $\C_{(n,1,\delta,2)}$ is a dually-BCH code $\Rightarrow $ $C_0^n\cup C_1^n\cup \dots \cup C_{\delta _2}^n\in T^{\perp}$, 
		i.e., $\dim(\C^{\perp}_{(n,1,\delta,2)})\leq m$. Since $q>3$, 
		and $2$, $\frac{q+1}{2}\in T$ are coset leader by Lemma \ref{lem6}, then $\dim(\C_{(n,1,\delta,2)})\leq n-2m$,
		a contradiction.

		For $q=3$, then $\delta=2$. Thus we get a code $\C_{(4,1,2,2)}$, which is not a dually-BCH code obviously. 	
	\end{proof}

\section{Conclusions}\label{set} The principal contributions outlined in this paper are:
		
\begin{itemize}
\item For the negacyclic BCH code of length $n=\frac{q^{m}-1}{2}$, we get  
$\dim(\C_{(n,-1,\left\lceil \frac{\delta+1}{2}\right\rceil,0)})$ for $\delta =a\frac{q^m-1}{q-1}$, $aq^{m-1}-1$($1\leq a <\frac{q-1}{2}$) (see Lemmas \ref{lem8}, \ref{lem10})
and $\delta=a\frac{q^m-1}{q-1}+b\frac{q^m-1}{q^2-1}$, $aq^{m-1}+(a+b)q^{m-2}-1$ 
$(2\mid m,1\leq a+b \leq q-1$,$\left\lceil \frac{q-a-2}{2}\right\rceil\geq 1)$ (see Lemmas \ref{lem11}, \ref{lem13}). 
And we get $\dim(\C_{(n,-1,\delta,0)})$ with few nonzeros (see Theorems \ref{th17}, \ref{th19}, \ref{th21}) 
and  $\dim(\C_{(n,-1,\delta,b)})$ with $b\neq 0$ are settled (see Theorem \ref{th23}).
\item For the cyclic BCH code of length $n=\frac{q^{m}-1}{2}$, we give the weight distribution of 
extended code $\overline{\C}_{(n,1,\delta,1)}$ and the parameters of dual code $\C^{\perp}_{(n,1,\delta,1)}$ 
for $\delta_2\leq \delta \leq \delta_1$ (see Theorems \ref{th27}, \ref{th28}, \ref{th30}, \ref{th31}).
Moreover, we present the necessary and sufficient condition for $\C_{(n,1,\delta,2)}$ is a dually-BCH code
(see Theorems \ref{th33}, \ref{th34}).
 \end{itemize}	
\section*{Declaration of competing interest}
The authors declare that they have no known competing financial interests or personal relationships that 
could have appeared to influence the work reported in this paper.
		
\section*{Data availability}
Data will be made available on request.
\section*{Acknowledgements}
 This research was supported by the National Natural Science Foundation of China (No.U21A20428 and 12171134).

		
	\end{document}